\pdfoutput=1
\documentclass[
  superscriptaddress,
  amsmath,amssymb,
  aps,
  prx,
  reprint,
  floatfix,
  raggedbottom,
]{revtex4-2}

\usepackage[utf8]{inputenc}
\usepackage[T1]{fontenc}
\usepackage{graphicx}
\usepackage{bm}
\usepackage{mathtools}
\usepackage{amsthm}
\usepackage{needspace}
\usepackage[colorlinks,citecolor=blue,urlcolor=blue,linkcolor=blue]{hyperref}

\hypersetup{
  pdftitle={Oscillator–Qubit Primitives for a Molecular Quantum Dynamics Simulator},
  pdfauthor={Jungsoo Hong; Joonsuk Huh},
  pdfsubject={Coherent synthesis of nonlinear multimode phase gates on oscillator-qubit processors},
  pdfkeywords={molecular quantum dynamics, non-Born-Oppenheimer dynamics, oscillator-qubit processors, quantum signal processing, nonlinear phase compilation}
}

\theoremstyle{plain}
\newtheorem{theorem}{Theorem}

\newtheorem{lemma}[theorem]{Lemma}

\theoremstyle{definition}

\newcommand{\ii}{\mathrm{i}}
\newcommand{\ee}{\mathrm{e}}
\newcommand{\bbR}{\mathbb{R}}
\newcommand{\bQ}{\hat{\bm Q}}
\newcommand{\CDx}[1]{\operatorname{CD}_{x}\!\left(#1\right)}
\newcommand{\Rz}{R_{\rm z}}
\newcommand{\Rx}{R_{\rm x}}
\newcommand{\Nel}{N_{\mathrm{el}}}
\newcommand{\calO}{\mathcal{O}}

\newcommand{\norm}[1]{\left\lVert #1\right\rVert}
\newcommand{\abs}[1]{\left\lvert #1\right\rvert}
\newcommand{\ket}[1]{\left|#1\right\rangle}
\newcommand{\bra}[1]{\left\langle#1\right|}
\newcommand{\bmu}{\bm{\mu}}
\newcommand{\PiN}{\Pi_{\bm n^{\max}}}
\newcommand{\eps}{\varepsilon}

\begin{document}

\title{Oscillator--Qubit Primitives for a Molecular Quantum Dynamics Simulator}

\author{Jungsoo Hong}
\affiliation{SKKU Advanced Institute of Nano Technology (SAINT), Sungkyunkwan University, Suwon 16419, Republic of Korea}

\author{Joonsuk Huh}
\email{joonsukhuh@yonsei.ac.kr}
\affiliation{Department of Chemistry, Yonsei University, Seoul 03722, Republic of Korea}
\affiliation{Department of Quantum Information, Yonsei University, Incheon 21983, Republic of Korea}
\affiliation{Department of Computational Science and Engineering, Yonsei University, Seoul 03722, Republic of Korea}

\date{September 30, 2026}

\begin{abstract}
Molecular quantum dynamics simulations that treat both electrons and nuclei quantum mechanically are crucial for predicting chemical reactions. With classical computation, a full wave-function representation of both types of particles requires resources that grow exponentially with their number. Oscillator--qubit processors, including trapped-ion and circuit QED devices, represent electronic states with qubits and nuclear vibrations with oscillators. Their native operations produce linear vibronic interactions, from which programmable anharmonic interactions can be built. Product-formula approaches build the corresponding phase gates by repeating small noncommuting phase steps, and quantum signal processing (QSP) approaches approximate each gate directly with a Fourier series. Simulations apply many such gates in sequence, so accumulated cost and error limit how long the dynamics can be followed. We propose near-optimal quantum primitives for molecular quantum dynamics simulation, covering the representations of both types of particles and their interactions. We synthesize nonlinear and multimode bosonic phase gates using generalized Jacobi--Anger (GJA) expansions within QSP, with a cost that grows near-optimally in both interaction strength and precision. Our interaction model and gate synthesis each have their own accuracy setting. Chosen together, these settings make our multimode benchmark circuits about one-third shorter than those of the previous direct QSP approach. We then simulate energy exchange between the stretching and bending vibrations of a simplified carbon dioxide model. With circuit-QED noise, a short simulation narrowly misses our accuracy criteria at qubit and oscillator lifetimes near $2$~ms and $5$~ms. These lifetimes mark a concrete checkpoint on the way to simulating anharmonic multimode molecular quantum dynamics on near-term oscillator--qubit processors.
\end{abstract}

\keywords{molecular quantum dynamics, non-Born--Oppenheimer dynamics, oscillator--qubit processors, quantum signal processing, nonlinear phase compilation}

\maketitle

\section{Introduction}
\label{sec:intro}

\begin{figure*}[!t]
\centering
\includegraphics[width=0.98\textwidth]{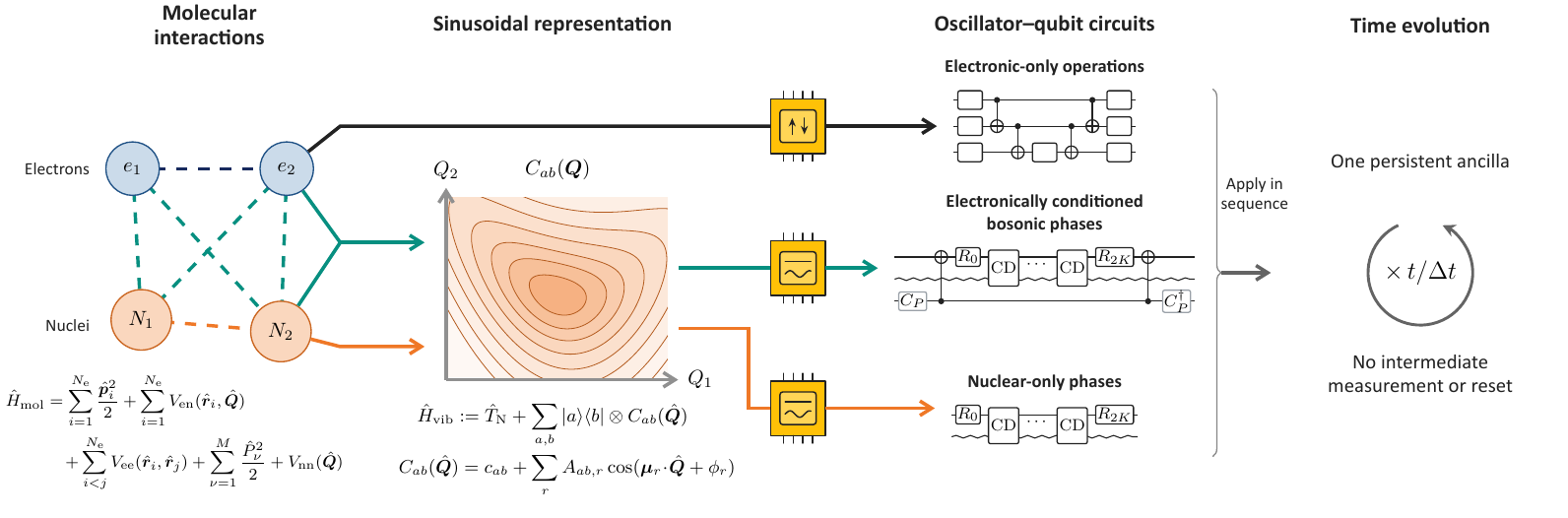}
\caption{From molecular interactions to oscillator--qubit circuits. The interactions among electrons and nuclei (left) are expressed in an electronic basis, where each matrix element $C_{ab}(\bQ)$ is a sum of sinusoidal functions of the nuclear coordinates (center). The contour plot sketches one such element, and the colored arrows indicate the origin of each term only schematically, because the electronic basis mixes the microscopic interactions. Each term is then compiled into an oscillator--qubit circuit (right). Terms without coordinate dependence become electronic-only operations, and coordinate-dependent terms become bosonic phase gates that are either conditioned on the electronic state or act on the nuclei alone. Terms that meet the grouping conditions of Sec.~\ref{sec:common-architecture} share one circuit. Within a time step the circuits are applied in sequence, and the step is repeated $t/\Delta t$ times with one ancilla that is kept throughout without intermediate measurement or reset.}
\label{fig:schematic}
\end{figure*}

When nuclear motion brings electronic states close in energy, as in light-induced reactions, transitions between these states can redirect the reaction pathway~\cite{Yarkony1996CI,CrespoOteroBarbatti2018NAMQC}. These dynamics couple electronic populations to nuclear wave packets, whose motion involves energy redistribution among vibrational modes as well as quantum effects such as tunneling and interference~\cite{HaMacDonell2025}. A faithful description must then keep the nuclear wave function together with the electronic states.

Classical trajectory methods avoid the full nuclear wave function by replacing nuclear motion with classical trajectories and evolving electronic amplitudes along them~\cite{Tully1990SurfaceHopping,LiTullySchlegelFrisch2005Ehrenfest}. Semiclassical methods recover part of the nuclear quantum behavior through interference between trajectories~\cite{Miller2001SCIVR,BenNun2000AIMS}, but these approximations can still miss tunneling or interference that controls an observable. The multiconfiguration time-dependent Hartree (MCTDH) method and its multilayer extension (ML-MCTDH) offer another route by propagating a compressed wave function~\cite{BeckMeyer2000,WangThoss2003MLMCTDH,Manthe2008MLMCTDH}. This compression makes larger quantum calculations accessible, although strong correlations between modes can limit its effectiveness and make propagation expensive.

Oscillator--qubit (OQ) processors encode nuclear motion in bosonic modes and electronic states in qubits~\cite{Feynman1982,Lloyd1996,LiuHybridOQ2026,HaMacDonell2025}, with realizations in trapped ions~\cite{McGarry2026AnharmonicDynamics,Chalermpusitarak2025TGIFS,Rainaldi2026TrigBenchmark} and circuit QED devices~\cite{Diringer2024CNOD,Eickbusch2022}. These processors hold the molecular wave function without explicitly storing its exponentially many amplitudes. The remaining task is to evolve this state under the molecular Hamiltonian with the operations that the processor provides. The native operations implement harmonic vibrations and electron--nuclear couplings that are linear in the nuclear coordinates~\cite{LiuHybridOQ2026}. Anharmonic couplings, which drive vibrational energy redistribution and reaction dynamics, are programmed through nonlinear phase gates built from native conditional displacements and ancilla-qubit rotations (Fig.~\ref{fig:schematic}).

On an OQ processor, an anharmonic interaction that depends on the nuclear coordinates is implemented as a bosonic phase gate whose phase is a nonlinear function of these coordinates. Many theoretical and experimental studies of such gates have appeared recently. Theory has proposed several approaches: numerically optimized Fourier synthesis~\cite{ParkFilip2024}, trigonometric-gate-implemented Fourier synthesis (TGIFS)~\cite{Chalermpusitarak2025TGIFS,Rainaldi2026TrigSineGordon}, and quantum signal processing (QSP) on OQ processors~\cite{LiuMixedADQSP2024,HongOQGQSP2026}. Experiments have begun to realize such gates through bosonic QSP and related Fourier synthesis in trapped ions and circuit QED~\cite{Lee2026TrappedIonCubic,Fontaine2026Programming}. On photonic processors, QSP has been used to synthesize nonlinear functions~\cite{Karooby2026PhotonicQSP}, and a two-mode linear-optical architecture has been proposed for arbitrary finite Fourier series~\cite{Mauser2026OpticalFourier}.

Two of the proposed approaches have complementary strengths. TGIFS represents the interaction by sinusoidal functions of the nuclear coordinates, a form matched to the native conditional displacements~\cite{Chalermpusitarak2025TGIFS,McGarry2026AnharmonicDynamics,Rainaldi2026TrigBenchmark}. It builds each phase gate as a product formula that repeats small phase steps, so stronger phases and tighter tolerances require longer circuits. Oscillator--qubit generalized QSP (OQ--GQSP) instead synthesizes the whole phase in one circuit~\cite{HongOQGQSP2026}. It approximates the unitary phase response directly, with a single approximation for the interaction and the gate.

Our compiler combines the strengths of both: it takes the sinusoidal model of the potential from TGIFS and replaces the repeated steps with one QSP circuit for the whole phase. The generalized Jacobi--Anger (GJA) expansion supplies the response coefficients of this circuit explicitly. We call this route Hamiltonian-first, because the potential is approximated before the gate is synthesized. Since an error in the sinusoidal model changes only the phase of the gate, it can be bounded separately from the ancilla leakage introduced during synthesis. The model and its error bound are obtained once and serve every evolution time, whereas the circuit is synthesized again for each requested time and accuracy (Sec.~\ref{sec:hamiltonian-first}). Appendix~\ref{app:working-space} gives the domain and representation conditions.

When a potential couples several modes, its sinusoidal model can contain products of sinusoidal functions of different coordinates. Product-to-sum identities rewrite each product as a sum of terms, each a single sinusoidal function of one linear combination of the coordinates. This rewriting keeps the coupling between the modes. Terms can be merged into a single phase gate when they share the same combination of coordinates and the same electronic Pauli operator. Their frequencies must also be integer multiples of one base frequency. Our compiler, GJA--OQ--QSP, builds one QSP circuit for each such group of terms and applies the circuits one after another with one ancilla, without intermediate measurement or reset. This sequence can be reused at every step of an outer evolution algorithm (Secs.~\ref{sec:controlled-lift} and \ref{sec:composition}).

The construction also builds on the single-tone Jacobi--Anger (JA) expansion used in QSP-based Hamiltonian simulation~\cite{GilyenSuLowWiebe2019,MotlaghWiebe2024} and on the Fourier coefficient bound of OQ--GQSP~\cite{HongOQGQSP2026}. It adds four elements. First, every synthesized gate is completed to a unitary operation on the ancilla and the oscillators together, so successive gates share one ancilla without postselection. Earlier oscillator--qubit circuits postselect the ancilla, either once after the sequence~\cite{Chalermpusitarak2025TGIFS,McGarry2026AnharmonicDynamics,Rainaldi2026TrigBenchmark} or after each gate~\cite{HongOQGQSP2026}. Second, all terms of a group are compiled into one circuit, whose length is governed by the largest gradient of their combined phase. Third, we prove that the summed displacement magnitude of such a circuit is at least the largest momentum impulse that the target gate imparts. The grouped circuit attains this bound to leading order. A related derivative bound limits the Rabi time of polynomial phase gates on low-energy inputs~\cite{Yang2026RabiPolyPhase}, whereas our statement concerns sinusoidal potentials with the error bounded at every coordinate value. Fourth, the completion leaves one phase free, and its alternation between successive time steps cancels part of the coherent error that passes between gates through the shared ancilla. For a single sinusoidal term, the displacement count is optimal up to a constant factor in phase strength and precision, for exact displacements in the circuit models of Sec.~\ref{sec:complexity-results}.

The numerical tests compare conditional-displacement (CD) costs for potentials with one and with several sinusoidal terms under common error criteria. To test the shared ancilla over many gates, we follow energy exchange between the stretching and bending vibrations of a simplified carbon dioxide model (Sec.~\ref{sec:benchmarks}). This model involves nuclear motion only, and we compare the compiled evolution with a reference. We then follow a shorter window under pulse-level noise.

Section~\ref{sec:problem-setting} connects the molecular representation and native displacements through the phase-gate displacement series. Section~\ref{sec:common-architecture} develops serial GJA compilation, QSP completion, and coherent electronic control. Section~\ref{sec:complexity-results} develops the query and action guarantees, compares potential and gate approximation, and examines coherent error allocation. Section~\ref{sec:benchmarks} presents circuit costs and coherent and noisy dynamics, followed by the discussion in Sec.~\ref{sec:conclusion}. The proofs are collected in Appendices~\ref{smsec:ja-details} and \ref{app:proofs}, with circuit conventions in Appendix~\ref{smsec:circuit-details} and numerical validation in Appendix~\ref{app:numerical-validation}.

\section{Molecular representations and native signals}
\label{sec:problem-setting}

\subsection{Hamiltonian representation and gate-synthesis targets}
\label{sec:hm-target}

The molecular representation determines the coordinate-dependent gates required by the outer time-evolution method. Classical preprocessing first chooses the nuclear coordinates and the electronic basis. These choices define the coordinate-dependent vibronic model in scaled coordinates with $\hbar=1$:
\begin{equation}
\begin{aligned}
\hat H_{\rm mol}
&=\sum_{i=1}^{N_{\rm e}}\frac{\hat{\bm p}_i^2}{2}
 +\sum_{i=1}^{N_{\rm e}}V_{\rm en}(\hat{\bm r}_i,\bQ)\\
&\quad{}+\sum_{i<j}^{N_{\rm e}}V_{\rm ee}(\hat{\bm r}_i,\hat{\bm r}_j)
 +\sum_{\nu=1}^{M}\frac{\hat P_\nu^2}{2}
 +V_{\rm nn}(\bQ)\\
&\longmapsto\hat H_{\rm vib}:=\hat T_{\rm N}
 +\sum_{a,b}\ket a\bra b\otimes C_{ab}(\bQ).
\end{aligned}
\label{eq:molecular-vibronic}
\end{equation}
Here the potentials $V_{\rm en}$, $V_{\rm ee}$, and $V_{\rm nn}$ denote the electron--nuclear, electron--electron, and nuclear--nuclear interactions. The sums run over $N_{\rm e}$ electrons with positions $\hat{\bm r}_i$ and momenta $\hat{\bm p}_i$. The vector $\bQ=(\hat Q_1,\ldots,\hat Q_M)$ contains the dimensionless nuclear coordinates $\hat Q_j=(\hat a_j+\hat a_j^\dagger)/\sqrt2$, where $\hat a_j$ is the annihilation operator of mode $j$. The nuclear kinetic energy is $\hat T_{\rm N}=\sum_\nu\hat P_\nu^2/2$. In a finite diabatic or quasi-diabatic electronic basis, $C_{aa}$ are potentials and $C_{ab}$ with $a\ne b$ are electronic-state couplings~\cite{MacDonell2021AnalogDynamics}.

Within the retained vibronic model, we parameterize the matrix elements on the declared nuclear-coordinate domain as
\begin{equation}
C_{ab}(\bQ)=c_{ab}+\sum_r A_{ab,r}\cos(\bmu_r\cdot\bQ+\phi_r).
\label{eq:vibronic-sinusoidal}
\end{equation}
The matrices $c=(c_{ab})$ and $A_r=(A_{ab,r})$ are Hermitian, which ensures $C_{ba}=C_{ab}^{*}$ even for complex couplings. The frequencies $\bmu_r$ and phases $\phi_r$ are real, repeated frequencies may carry different phases, and $\phi_r=-\pi/2$ gives a sine term. The matrix coefficients can be fitted directly to potential and coupling data in a consistent electronic basis, without an intermediate polynomial model. The finite frequency set and the fitting error belong to the supplied representation.

Qubit encoding and Pauli expansion then give the retained compiler input
\begin{equation}
\hat H_{\rm in}:=\hat H_{\rm G}
 +\sum_{\ell=1}^{L}P_\ell\otimes f_\ell(\bQ).
\label{eq:cmqb-general}
\end{equation}
The real coefficients $f_\ell$ inherit the same cosine representation. The term $\hat H_{\rm G}$ collects qubit-only and Gaussian contributions, including conditional-Gaussian terms supported by the gate model. The remaining terms are of two kinds. Terms with $P_\ell=I$ act identically on all electronic states. Terms with $P_\ell\ne I$ describe state-dependent potentials and transitions. The latter include $X$- and $Y$-type components. The identity component can include projected electronic energies as well as the bare nuclear--nuclear interaction. The microscopic sectors in Fig.~\ref{fig:schematic} therefore do not map one-to-one onto these Pauli sectors. Appendix~\ref{app:working-space} gives alternative representations and accounts for their discrepancy from a reference molecular model.

For each term, the outer time-evolution method supplies a time weight $\tau$.  The gate-synthesis task is

\begin{equation}
U_{\ell,f}(\tau)=\exp[-\ii\tau P_\ell\otimes f_\ell(\hat{\bm Q})].
\label{eq:Pauli-controlled-phase}
\end{equation}

For a diagonal $P_\ell$, this gate conditions the bosonic phase on the electronic state. A nondiagonal string can be rotated to a diagonal Pauli string and rotated back afterward. The compiler implements each such coordinate-phase call. The outer method determines how calls with noncommuting generators are combined.

For a single sinusoidal component, the target gate is
\begin{equation}
U_P(\Lambda;\bmu,\phi)
=\exp[-\ii\Lambda P\otimes\cos(\bmu\cdot\hat{\bm Q}+\phi)].
\label{eq:pauli-ridge-target}
\end{equation}
Here $\bmu$ selects a linear combination of the nuclear coordinates, and $\phi$ is an offset. This dependence on one collective coordinate defines a sinusoidal ridge. The electronic Pauli string $P$ specifies the electronic conditioning. The choice $P=I$ gives nuclear-only evolution. The phase amplitude $\Lambda=A\Delta t$ combines the coupling energy $A$ and time step $\Delta t$.

Products of sinusoidal functions also admit a ridge representation. For example,
\begin{equation}
\begin{aligned}
\cos\hat Q_1\,\cos\hat Q_2
&=\tfrac12\cos(\hat Q_1+\hat Q_2)\\
&\quad+\tfrac12\cos(\hat Q_1-\hat Q_2).
\end{aligned}
\label{eq:product-to-ridge-example}
\end{equation}
The two ridges depend on the collective coordinates $\hat Q_1+\hat Q_2$ and $\hat Q_1-\hat Q_2$, and each acts jointly on both modes. Their generators commute when they carry the same Pauli string $P$, including $P=I$, because the position operators commute. Their ideal phase gates can therefore be applied one after another without splitting error. Such identities apply to any finite product of sines and cosines of linear coordinate combinations, but the number of ridge terms can grow with the number of factors.

The elementary compiler input is thus the ridge vector $\bmu\in\bbR^M$, offset $\phi\in\bbR$, dimensionless phase angle $\Lambda\in\bbR$, and Hermitian Pauli string $P$ defining $U_P(\Lambda;\bmu,\phi)$ in Eq.~\eqref{eq:pauli-ridge-target}.  

Throughout, $\norm{\cdot}$ denotes the operator norm.  Our synthesis guarantee is a global bound in this norm under exact signal access. Exact signal access means that the signal operation defined below is implemented without Fock truncation or pulse error.  A physical implementation additionally declares modewise Fock cutoffs and accounts separately for representation error, signal error, and leakage.  Section~\ref{sec:resources} specifies these working-space conditions and their error accounting.

\paragraph{Scope.}
The arrow in Eq.~\eqref{eq:molecular-vibronic} includes modeling approximations and is not an identity on the original electronic Hilbert space. Momentum-dependent derivative couplings require additional signal operations and a separate analysis. Any discrepancy between the retained potential and the original molecular potential belongs to the representation error (Appendix~\ref{app:working-space}).

\subsection{Native signal: conditional momentum kicks}
\label{sec:native-signal}

The ridge phase is built from conditional displacements. A conditional displacement moves the oscillator in opposite directions for the two states of an auxiliary qubit. We call this qubit the synthesis ancilla $a$. Rotations of this ancilla control interference between the displaced paths. The native displacement primitive is

\begin{equation}
\CDx{\bmu/2}
=\exp\!\left[\frac{\ii}{2}
 (\bmu\cdot\hat{\bm Q})\sigma_x^{(a)}\right].
\label{eq:tgifs-elementary-qsp}
\end{equation}

We call this operation a \emph{balanced query}. It applies $\exp[\pm\ii\bmu\cdot\hat{\bm Q}/2]$ to the $\sigma_x^{(a)}$ eigenstates $\ket{\pm}_a$. These eigenstates receive opposite momentum kicks, with relative signal $\exp[\ii\bmu\cdot\hat{\bm Q}]$. Processing rotations act on the ancilla between the queries. This scalar sequence acquires its electronic-state dependence through the boundary Pauli controls of Sec.~\ref{sec:controlled-lift}. The primitive is realized approximately by a spin-dependent force~\cite{McGarry2026AnharmonicDynamics,Eickbusch2022}, and Sec.~\ref{sec:resources} converts its kick action into a fixed-drive CD clock with separate rotation and electronic-control costs.

\subsection{Nonlinear phases as displacement series}
\label{sec:displacement-bridge}

The native CD is linear in the coordinate, but interference between displaced paths can produce a nonlinear phase. To make this connection explicit, define the multimode displacement
\begin{equation}
\mathcal D(\bm\alpha)
=\prod_{j=1}^{M}\exp(\alpha_j\hat a_j^\dagger-\alpha_j^*\hat a_j).
\label{eq:displacement-definition}
\end{equation}
Thus $\exp(\ii n\bm\nu\cdot\bQ)=\mathcal D(\ii n\bm\nu/\sqrt2)$ is a momentum displacement. Consider one real trigonometric group $v_g(\theta)$ with integer harmonics of the collective coordinate $\theta=\bm\nu_g\cdot\bQ$. Its nonlinear phase admits the exact series
\begin{equation}
\ee^{-\ii t v_g(\bm\nu_g\cdot\bQ)}
=\sum_{n\in\mathbb Z}C_{g,n}(t)\,
\mathcal D\!\left(\frac{\ii n\bm\nu_g}{\sqrt2}\right).
\label{eq:nonlinear-displacement-series}
\end{equation}
The coefficients $C_{g,n}(t)$ are the Fourier coefficients of the exponentiated group. The finite trigonometric profile ensures absolute uniform convergence, so the identity also holds in operator norm under exact signal access. Different collective directions or incommensurate frequencies need separate groups for the one-signal construction.

Equation~\eqref{eq:nonlinear-displacement-series} expresses the target as a coherent weighted sum of displacements. It does not prescribe applying those terms sequentially. A balanced CD supplies half a displacement order on each ancilla branch, and intervening rotations determine how the paths interfere. Section~\ref{sec:common-architecture} constructs the coefficients by GJA expansion and turns a finite approximation to this series into a unitary native circuit.

\section{Serial GJA compilation with native QSP}
\label{sec:common-architecture}

We compile the molecular phase calls by grouping compatible harmonics, constructing each group's displacement response, and recovering its CD--rotation circuit. For one electronic Pauli string, write the supplied coordinate function as $f_\ell(\bQ)=c_\ell+\sum_gv_{\ell g}(\bm\nu_{\ell g}\cdot\bQ)$, where each $v_{\ell g}$ has integer harmonics of one base signal. Since the coordinates commute,
\begin{equation}
\begin{split}
\ee^{-\ii\tau P_\ell f_\ell(\bQ)}
={}&\ee^{-\ii\tau c_\ell P_\ell}
\prod_g\ee^{-\ii\tau P_\ell v_{\ell g}(\bm\nu_{\ell g}\cdot\bQ)}.
\end{split}
\label{eq:serial-gja-groups}
\end{equation}
The constant term is an electronic rotation. Each remaining factor is synthesized as one GJA word, and these words are applied in sequence. This \emph{serial GJA} construction introduces no splitting error between compatible coordinate groups, but each finite circuit still contributes its own synthesis error. Noncommuting Pauli sectors and kinetic evolution remain part of the outer algorithm.

The classical compiler evaluates the GJA coefficients, truncates and completes the response, and factors the completed matrix into native gates. The device executes the recovered sequence with electronic controls at its boundaries. Appendix~\ref{app:angle-recipe} gives angle recovery and reconstruction checks.

\subsection{GJA coefficients for compatible harmonic groups}
\label{sec:ja-backend}
\label{sec:generalized-ja}

The first step expresses a nonlinear phase using powers of the available signal. For one ridge, set $\hat\theta=\bmu\cdot\hat{\bm Q}+\phi$ and define

\begin{equation}
W_{\bmu,\phi}
=\exp[\ii(\bmu\cdot\hat{\bm Q}+\phi)].
\label{eq:signal}
\end{equation}

The Jacobi--Anger identities express a nonlinear sinusoidal phase as a series of momentum displacements $W_{\bmu,\phi}^n$, each with a known scalar phase:

\begin{align}
\ee^{-\ii\Lambda\cos\theta}
&=\sum_{n=-\infty}^{\infty}(-\ii)^nJ_n(\Lambda)\ee^{\ii n\theta},\nonumber\\
\ee^{-\ii\Lambda\sin\theta}
&=\sum_{n=-\infty}^{\infty}(-1)^nJ_n(\Lambda)\ee^{\ii n\theta},
\label{eq:ja}
\end{align}

Here $J_n$ is the Bessel function of the first kind~\cite{DLMF,WatsonBessel}. Substitution of the self-adjoint coordinate $\hat\theta$ gives the displacement series for a single-tone gate, the JA special case of the grouped construction below. The position quadratures strongly commute, so the uniformly convergent scalar series transfers to the operator norm.

With $z=\ee^{\ii\theta}$, truncation at degree $K$ gives a Laurent polynomial, a finite sum of positive and negative powers of $z$:

\begin{equation}
S^{c}_{K,\Lambda}(z):=\sum_{n=-K}^{K}(-\ii)^nJ_n(\Lambda)z^n.
\label{eq:ja-raw-cosine-response}
\end{equation}

The displacement powers are

\begin{equation}
W_{\bmu,\phi}^{n}
=\ee^{\ii n\phi}\prod_{j=1}^{M}\ee^{\ii n\mu_j\hat Q_j}.
\label{eq:ridge-factorization}
\end{equation}

Thus the degree remains that of one scalar series. The vector $\bmu$ specifies the physical displacements.

For a compatible harmonic group, the same JA identities supply the combined coefficients before truncation and completion, allowing all harmonics to share one circuit.

For example, consider $h(\theta)=a\sin\theta+b\sin2\theta$. The Fourier coefficients of its phase response $\ee^{-\ii h(\theta)}$ are $C_n=\sum_{\ell\in\mathbb Z}J_{n-2\ell}(-a)J_\ell(-b)$. Both harmonics use integer powers of the same signal $\ee^{\ii\theta}$. Their combined response can therefore be truncated and completed as one word. This common signal is what permits grouping.

To distinguish this case from arbitrary frequencies, consider the general scalar profile
\begin{equation}
g(x)=\sum_{r=1}^{R}a_r\sin(\omega_r x+\phi_r),
\qquad \omega_r\in\bbR,
\label{eq:multitone-ridge-profile}
\end{equation}
Here $a_r$ is a phase amplitude, $\omega_r$ is a spatial frequency, and $\phi_r$ is an offset. All are real and the list is finite. The corresponding oscillator phase is $f(\bQ)=g(\hat x)$, with $\hat x=\bm\xi\cdot\bQ$. A tone denotes one spatial frequency in $x$. It is not a separate oscillator mode. Equation~\eqref{eq:multitone-ridge-profile} describes the general expansion. The one-signal grouped circuit below requires the additional condition $\omega_r=m_r\beta$ with integer harmonic orders $m_r$. Thus the ratios $\omega_r/\beta$ must be integers. The spatial frequencies themselves need not be integers.

The classical starting point for its synthesis is the generalized Jacobi--Anger expansion~\cite{DattoliEtAl1992GeneralizedBessel,DattoliTorreLorenzutta1998,KuklinskiHague2021}.  The ordinary identity for each tone gives absolutely convergent series whose product is
\begin{equation}
\ee^{-\ii g(x)}
=\sum_{\bm k\in\mathbb Z^R}
\left[\prod_{r=1}^{R}J_{k_r}(-a_r)\ee^{\ii k_r\phi_r}\right]
\ee^{\ii(\sum_r k_r\omega_r)x}.
\label{eq:generalized-ja-multitone}
\end{equation}
This identity is uniformly convergent on the real line for finite $R$.  It requires no integer-frequency assumption.  It also includes cosine terms through a $\pi/2$ offset shift.

When the frequencies share a common base, $\omega_r=m_r\beta$, the profile becomes multiharmonic in $\theta=\beta x$.  Constant terms can be implemented separately. Frequency signs can be absorbed into the amplitudes and offsets. We therefore take $\beta>0$ and $m_r\in\mathbb N$.  Define $h(\theta)=\sum_ra_r\sin(m_r\theta+\phi_r)$.  The coefficients of equal powers in Eq.~\eqref{eq:generalized-ja-multitone} combine to give
\begin{align}
G(\ee^{\ii\theta})&:=\ee^{-\ii h(\theta)}
=\sum_{n\in\mathbb Z}C_n\ee^{\ii n\theta},
\label{eq:generalized-ja-response}\\
C_n&=\sum_{\substack{\bm k\in\mathbb Z^R\\\sum_rm_rk_r=n}}
\prod_{r=1}^{R}J_{k_r}(-a_r)\ee^{\ii k_r\phi_r}.
\label{eq:generalized-ja-coefficients}
\end{align}
These generalized Bessel coefficients have one Fourier index. Let $\bm\nu=\beta\bm\xi$ be the base displacement vector. The coefficient $C_n$ weights the momentum order $n\bm\nu$ in the exponential response, whereas the amplitude $a_r$ specifies the coupling itself.

The oscillator--qubit construction follows by substituting the collective coordinate into this scalar response.  Use the native displacement signal
\begin{equation}
W_{\bm\nu}=\exp(\ii\bm\nu\cdot\bQ),\qquad
\exp[-\ii g(\hat x)]=G(W_{\bm\nu}).
\label{eq:generalized-ja-operator}
\end{equation}
Joint functional calculus transfers the uniform approximation to operator norm.  The coefficients supply one response for the entire commensurate group, rather than a separate circuit for each harmonic.

For incommensurate tones, Eq.~\eqref{eq:generalized-ja-multitone} remains valid. However, the corresponding exponentials cannot in general be written as integer powers of one periodic signal.  One-coordinate dependence alone therefore does not give the one-signal factorization above.  A finite-window reapproximation is a separate option.

For a finite response we retain $|n|\le K$ and bound the omitted absolute tail by $\tau_K\ge\sum_{|n|>K}|C_n|$. Section~\ref{sec:guarantees-costs} relates this tail to precision and cost.

\subsection{QSP implementation and unitary completion}
\label{sec:qsp-preliminaries}
\label{sec:response-to-circuit}

\paragraph{QSP as an interference sequence.}
QSP alternates a signal-dependent operation with chosen qubit rotations. Here the signal is the balanced conditional displacement, including a coordinate offset:
\begin{equation}
\begin{split}
U_{\rm circ}&=R_N S_{\phi}R_{N-1}\cdots S_{\phi}R_0,\\
S_{\phi}&=\CDx{\bmu/2}R_x(-\phi).
\end{split}
\label{eq:qsp-execution-template}
\end{equation}
In this ordered gate sequence (a \emph{word}), the rightmost operation acts first. The fixed rotation $R_x(-\phi)$ supplies the coordinate offset and is called an offset carrier. Each $R_j$ rotates the ancilla. Each $S_\phi$ supplies the phases $\pm(\bmu\cdot\hat{\bm Q}+\phi)/2$ in its $X$ basis. The target response determines the rotations and the query count. Fixed input and output basis changes are included in $R_0$ and $R_N$.

For a scalar value $\theta$ of $\bmu\cdot\hat{\bm Q}+\phi$, write $z=\ee^{\ii\theta}$. With $N=2K$ balanced queries, the matrix entries contain integer powers between $z^{-K}$ and $z^K$. The Laurent degree is therefore at most $K$. Their coefficients describe the resulting response, not the rotation angles themselves.

\paragraph{Why unitary completion is needed.}
An approximation to the desired response specifies only one part of the circuit. For an input ancilla in $\ket0$, let $F(z)$ be the amplitude returning to $\ket0$ (the selected branch), and $H(z)$ the amplitude reaching $\ket1$. Then unitarity requires $|F(z)|^2+|H(z)|^2=1$ on the entire signal circle. A truncated Fourier response can violate $|F|\le1$. Such a response must be repaired before completion. The completion step constructs a compatible complementary polynomial. Factorization of the completed matrix gives the qubit rotations. The construction below enforces contractivity before recovering the circuit.

To reuse the ancilla coherently, we approximate a full gate. In our construction, the target phases on the two ancilla states have opposite signs. The paired target is
\begin{equation}
\widehat V_a:=
\begin{pmatrix}V&0\\0&V^\dagger\end{pmatrix}_a.
\label{eq:paired-completion}
\end{equation}
Here $V$ is the desired oscillator unitary. The states $\ket0$ and $\ket1$ apply $V$ and $V^\dagger$, respectively. We call $\|U-\widehat V_a\|$ the \emph{paired-unitary error}. A comparison of only $\bra0U\ket0$ with $V$ can hide the complementary amplitude. Our construction controls the paired operation without measurement or reset.

A contractive response $F$ and a complementary Laurent polynomial $H$ give a matrix that is unitary at every signal value, or paraunitary. Substitution of the oscillator signal gives
\begin{equation}
U_{\rm circ}=
\begin{pmatrix}
F(W)&-H(W)^{\dagger}\\
H(W)&F(W)^{\dagger}
\end{pmatrix},
\label{eq:ja-operator-block-form}
\end{equation}

For an ancilla initialized in $\ket0$, this matrix acts as
\begin{equation}
\begin{split}
U_{\rm circ}(\ket0\otimes\ket\psi)
={}&\ket0\otimes F(W)\ket\psi\\
&+\ket1\otimes H(W)\ket\psi.
\end{split}
\label{eq:qsp-state-action}
\end{equation}
The desired output has $F\approx V$ and $H\approx0$. The paired-unitary error bound controls both ancilla return and the action on the other input branch.

For a grouped response $G$, contractivity can be enforced by rescaling. For a certified bound $\tau_K\ge\sum_{|n|>K}|C_n|$, the truncation
\begin{equation}
F_K(z)=\frac{\sum_{|n|\le K}C_nz^n}{1+\tau_K}
\label{eq:generalized-ja-contractive}
\end{equation}
obeys $|F_K|\le1$ and $|F_K-G|\le2\tau_K$ on the whole signal circle.  One paraunitary completion and inverse factorization then give an OQ--GQSP word with $2K$ balanced queries~\cite{Haah2019QSP,MotlaghWiebe2024,HongOQGQSP2026}.  The coefficients can be evaluated by Bessel sums, FFTs, or recurrence methods as described in Appendix~\ref{app:angle-recipe}.

For the single-tone JA response, a repair step adjusts the truncated coefficients. In addition to contractivity, the repair enforces the endpoint conditions for a phase-only core when the target is a pure sine or cosine. In such a core, all variable rotations are about $Z$. The maximum harmonic does not increase (Appendix~\ref{smsec:ja-details}).

Phase recovery factors the completed matrix into $2K$ balanced queries and intervening rotations [Fig.~\ref{fig:tgifs-ja-nlft-dictionary}(a)]. It uses inverse nonlinear Fourier transform (NLFT) synthesis under the GQSP--NLFT correspondence~\cite{LaneveNLFT2025}. The balanced-query exponents $\pm1/2$ generate the powers $-K,\ldots,K$. Responses other than pure sines and cosines may require arbitrary $\mathrm{SU}(2)$ rotations. Appendix~\ref{app:angle-recipe} describes the reference recovery. It also describes the alternative route through the inverse nonlinear fast Fourier transform (INLFFT) with its completion and stability conditions~\cite{NiInverseNLFFT2025}.

Completion also connects the selected-response error to the paired-unitary error.  Consider a unitary target $V$ and a completed word of Eq.~\eqref{eq:ja-operator-block-form}. Suppose that all blocks are functions of the same unitary signal $W$. Then a selected-response error $\eta$ gives a paired-unitary error of at most $\sqrt{2\eta}$ [Lemma~\ref{cor:ja-full-joint-completion}].  The choice $\eta\le\eps^2/2$ therefore gives a paired-unitary error of at most $\eps$.

At fixed amplitudes and harmonic indices, contractive rescaling and completion give a paired-unitary error of at most $2\sqrt{\tau_K}$. The coefficient tail still decays faster than any geometric rate (Sec.~\ref{sec:guarantees-costs}), and the square root changes only the rate constant.

\paragraph{Completion freedom.}
The same selected response $F$ can admit different complementary polynomials $H$. Choices of reciprocal roots in spectral completion preserve $|H|^2=1-|F|^2$ and hence the ideal ancilla-return probability of a single group, but change the rotations and intermediate circuit dynamics. Section~\ref{sec:composition} uses this freedom to reduce coherent error between serially composed words.

\subsection{Pauli-conditioned multimode lift}
\label{sec:controlled-lift}

We now add the electronic Pauli string to the completed scalar circuit.  The two ancilla branches already apply opposite phases.  An ancilla flip when $P$ has eigenvalue $-1$ exchanges those branches and supplies the required sign.  Electronic control can therefore act only at the boundaries. The scalar synthesis is unchanged.

\label{lem:boundary-pauli-lift}
To write this boundary construction explicitly, let $g(\bQ)$ be self-adjoint and let $P=P^\dagger$, $P^2=I$ act on a separate electronic register. Suppose the scalar circuit $U_g$ approximates
\begin{equation}
D_g=\exp[-\ii\sigma_z^{(a)}g(\bQ)]
=\operatorname{diag}(\ee^{-\ii g(\bQ)},\ee^{\ii g(\bQ)})_a.
\label{eq:boundary-scalar-target}
\end{equation}
With identities on unused registers implicit, define
\begin{equation}
\Pi_\pm=\frac{I\pm P}{2},\qquad
B_P=I_a\otimes\Pi_++\sigma_x^{(a)}\otimes\Pi_-.
\label{eq:boundary-pauli-involution}
\end{equation}
Then $B_P=B_P^\dagger=B_P^{-1}$ and
\begin{align}
B_PD_gB_P&=\exp[-\ii\sigma_z^{(a)}P g(\bQ)],\nonumber\\
\norm{B_PU_gB_P-B_PD_gB_P}&=\norm{U_g-D_g}.
\label{eq:boundary-pauli-error}
\end{align}
The equality concerns the complete unitary on arbitrary joint inputs, not only a selected or postselected block.

For a nonidentity Pauli string choose an electronic Clifford $C_P$ that maps $P$ onto a pivot qubit $p$, with $C_PPC_P^\dagger=\sigma_z^{(p)}$.  Then $B_P=C_P^\dagger\operatorname{CNOT}_{p\to a}C_P$, and the lifted word is
\begin{equation}
C_P^\dagger\operatorname{CNOT}_{p\to a}\,U_g\,
\operatorname{CNOT}_{p\to a}C_P.
\label{eq:boundary-pauli-circuit}
\end{equation}
In circuit time order, $C_P$ acts first and $C_P^\dagger$ last.  The intermediate Clifford factors cancel because $U_g$ acts trivially on the electronic register.  The lift uses two logical pivot--ancilla CNOTs and one Clifford compute/uncompute pair. This overhead is independent of the scalar query count.  $P=I$ needs no boundary gates.  With the ancilla initially in $\ket0$, the ideal endpoint implements $\exp[-\ii P g(\bQ)]$.

Appendix~\ref{app:boundary-controls} gives the circuit diagrams, proof, and boundary sharing across consecutive words. The same boundary construction applies to the completed grouped word with $g(\bQ)=h(\bm\nu\cdot\bQ)$.  This gives $\exp[-\ii\sigma_z^{(a)}P h(\bm\nu\cdot\bQ)]$ with the same paired-unitary error and the same $2K$ balanced queries, for mixtures of even and odd harmonics alike.  The group must carry one common $P$.  Appendix~\ref{app:joint-signal-access} contrasts this lift with the direct joint-signal substitution used in the formal single-tone bounds. Neither route merges incommensurate frequencies or distinct Pauli sectors into one scalar signal.

\subsection{Coherent composition of compiled words}
\label{sec:composition}

The complete paired-unitary guarantee allows the compiled phase gates to share one synthesis ancilla without measurement or reset after each call. Their coherent product forms the phase-gate sequence within an outer evolution step. For noncommuting targets, composition keeps the order $\widehat V_B\cdots\widehat V_1$, and its lower block need not equal $(V_B\cdots V_1)^\dagger$. Section~\ref{sec:resources} allocates the errors and costs.

Equation~\eqref{eq:serial-gja-groups} supplies the compatible groups for this serial composition. Each completed word carries its own degree, error allowance, and displacement cost, and the paired-unitary errors add by unitary telescoping. This remains a sequence of one-variable syntheses, and a global multivariable response alone does not supply its completion or native factorization (Appendix~\ref{app:multidirection-response}).

Completion can also change the coherent error of a serial product at fixed selected responses and degrees. For two completed words of Eq.~\eqref{eq:ja-operator-block-form},
\begin{equation}
\bra0 U_2U_1\ket0=F_2F_1-H_2^\dagger H_1.
\label{eq:serial-completion-interference}
\end{equation}
The second term describes amplitude that leaves the selected ancilla branch and returns in the next group. Its interference depends on the complementary phases, even when the ideal coordinate phases commute. Each group's paired-unitary error and $2K_g$ query count remain unchanged, as does their telescoping bound. A simple rule uses this freedom without a search. For a word applied in successive outer steps, conjugation of every second application by $\sigma_z^{(a)}$ replaces $H$ by $-H$. At each signal value, the amplitude that leaks and returns over the two applications then changes from $2|H||\operatorname{Re}F|$ to $2|H||\operatorname{Im}F|$. The alternation therefore suppresses the leading leakage when the phase per step is small, and the conjugation is absorbed into existing rotations without adding CD pulses. Intervening groups and free evolution make this cancellation approximate. Section~\ref{sec:short-window-dynamics} shows that it admits a lower degree allocation, an indirect route to lower cost rather than a reduction in queries at fixed degree.

For one step of the outer evolution, the coordinate-dependent propagator has the form
\begin{equation}
\exp\!\left[-\ii\Delta t
\sum_{\ell=1}^{L}P_\ell\otimes f_\ell(\hat{\bm Q})\right].
\label{eq:coordinate-step}
\end{equation}
Mutually commuting Pauli strings may be diagonalized together and implemented as bosonic phases conditioned on the corresponding electronic sectors.  Appendix~\ref{app:working-space} separates these phase calls into Gaussian, ridge, and polynomial pieces [Eq.~\eqref{eq:block-phase-model}].

Product formulas use these gates directly. Interaction-picture Dyson and qubitization methods additionally need controlled Hamiltonian-access or block-encoding oracles and their costs~\cite{MaxwellTrotterError2026,CasaresSPRINT2026,LowWiebe2018IP,KieferovaScherer2019,Eklund2026EndToEnd}. A list of phase calls alone does not supply those oracles.

\section{Compilation guarantees and comparison with existing methods}
\label{sec:complexity-results}

The construction separates two decisions: how to represent the physical potential and how to synthesize its phase gate. We first establish precision scaling relative to repeated weak-cell synthesis, then distinguish potential and gate accuracy and examine their allocation across coherent sequences.

\subsection{Precision scaling and resource guarantees}
\label{sec:guarantees-costs}
\label{sec:trig-fourier-scope}

Trigonometric-gate-implemented Fourier synthesis (TGIFS) is a product formula that generates a sinusoidal phase by repeating weak cells built from conditional displacements and small ancilla rotations~\cite{McGarry2026AnharmonicDynamics,Chalermpusitarak2025TGIFS,Rainaldi2026TrigBenchmark}. For a zero-offset cosine, let $C=\CDx{\bmu/2}$ and $A_\pm=C^{\pm1}\sigma_z^{(a)}P C^{\mp1}$. Then
\begin{equation}
\begin{split}
A_++A_-&=2\sigma_z^{(a)}P\cos(\bmu\cdot\bQ),\\
\mathcal G_P^{\rm basic}(\vartheta)
&=\ee^{\ii\vartheta A_+}\ee^{\ii\vartheta A_-},\\
\widehat U_{a,P}(\Lambda)
&\approx\left[\mathcal G_P^{\rm basic}\!\left(-\frac{\Lambda}{2r}\right)\right]^r.
\end{split}
\label{eq:tgifs-main-comparison}
\end{equation}
Here $\widehat U_{a,P}=\operatorname{diag}(U_P,U_P^\dagger)_a$ is the paired target for $\phi=0$. The generators generally do not commute, so the cell approximates a small phase and repetition controls the accumulated error. The first-order construction has the sufficient paired-unitary bound $\Lambda^2/(4r)$ [Eq.~\eqref{eq:small-angle-lt-error}]. We also test a symmetric second-order $A$--$B$--$A$ extension. Its exact cell and query cost are given in Appendix~\ref{app:tgifs-second-order}, and the first-order construction and its bound in Appendix~\ref{smsec:tgifs-details}.

JA instead constructs and completes the finite-phase response before recovering the native rotations. This avoids the precision overhead of the tested fixed-order weak-cell repetitions within one phase gate. Figure~\ref{fig:tgifs-ja-nlft-dictionary} compares the circuit structures and convergence. Product-formula assembly of distinct noncommuting Hamiltonian terms is a separate outer approximation.

For single-tone JA, the finite construction uses the same signal principle as the synthesis of $\exp[\ii t\cos H]$ and $\exp[\ii t\sin H]$ from $U=\exp(\ii H)$ in Ref.~\cite{MotlaghWiebe2024}. Here the signal is $W_{\bmu,\phi}$. The Jacobi--Anger expansion supplies its response coefficients.  Reference~\cite{Rainaldi2026TrigBenchmark} implements the cosine gate by a product-formula circuit with postselection and uses the expansion to analyze the resulting states.  Our completed word keeps the ancilla coherent. The boundary construction in Sec.~\ref{sec:controlled-lift} extends this coherent operation to electronic Pauli conditioning.

We measure circuit cost by the CD time $T_{\rm CD}$ in units of one base balanced query, $\tau_{\rm base}$, so that $\CDx{s\bmu/2}$ costs $|s|\tau_{\rm base}$ (Sec.~\ref{sec:resources}). After boundary cancellations, the first-order and symmetric second-order TGIFS words cost $4r\tau_{\rm base}$ and $(4r+2)\tau_{\rm base}$.

\begin{figure*}[!t]
\centering
\includegraphics[width=0.80\textwidth]{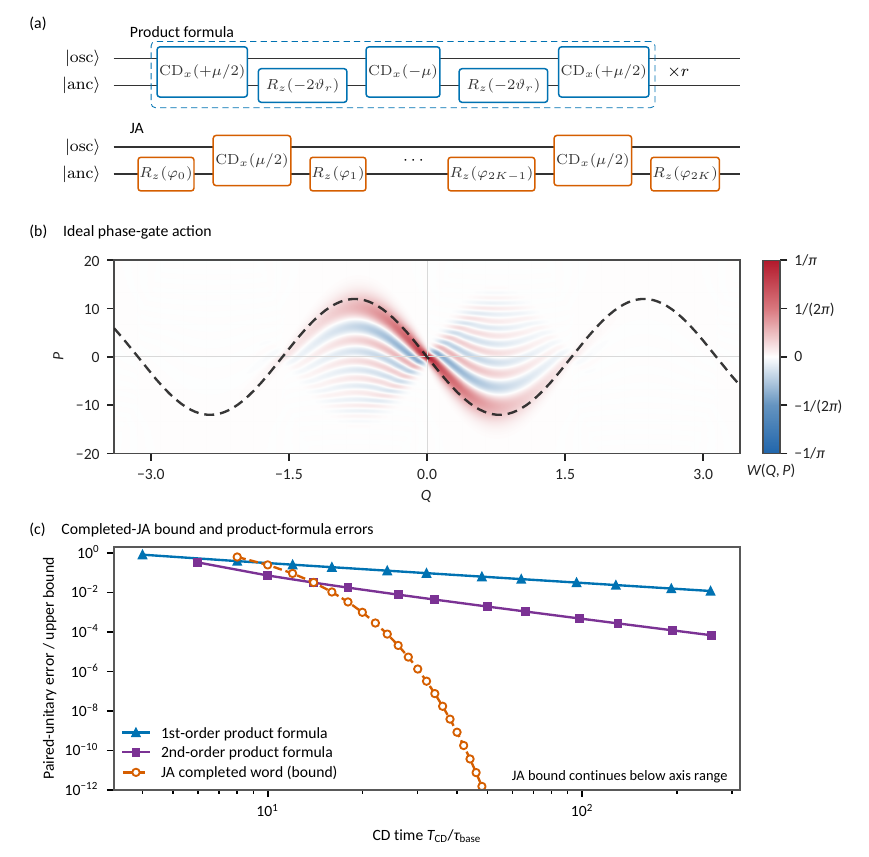}
\caption{Single-tone circuit, ideal-state illustration, and error comparison. Panels (b) and (c) use separate parameter choices. (a) A product formula (TGIFS~\cite{McGarry2026AnharmonicDynamics,Chalermpusitarak2025TGIFS,Rainaldi2026TrigBenchmark}) repeats a weak cell within one phase gate. JA uses $2K$ balanced queries and compiled $Z$ rotations for the paired ancilla target $\exp[-\ii\Lambda\sigma_z^{(a)}\cos(\mu Q)]$, with electronic $P=I$. (b) Ideal vacuum-output Wigner function for the opposite ancilla branch, $\exp[+\ii\Lambda\cos(\mu Q)]$, at $(\Lambda,\mu)=(6,2)$. The vertical axis $P$ is the momentum conjugate to $Q$. It is not the electronic Pauli string. The dashed line is $P=-\Lambda\mu\sin(\mu Q)$. (c) Sampled paired-unitary errors at $\Lambda=2$ for first-order (Lie--Trotter) and second-order (Strang) product formulas. The JA curve is a theoretical upper bound for the completed word. It is not a recovered-circuit error. Appendix~\ref{app:fig2-bound} gives its evaluation. Recovered circuits are compared in Fig.~\ref{fig:compiler-benchmarks}.}
\label{fig:tgifs-ja-nlft-dictionary}
\end{figure*}

\paragraph{GJA tail bounds and resource guarantees.}

The convergence is controlled by the factorial decay of the Bessel coefficients. For real $\Lambda$, $|J_n(\Lambda)|\le(|\Lambda|/2)^n/n!$ for $n\ge0$, so the uniform truncation error satisfies
\[
\sup_\theta\bigl|\ee^{-\ii\Lambda\cos\theta}-S^c_{K,\Lambda}(\ee^{\ii\theta})\bigr|
\le 2\sum_{n>K}\frac{(|\Lambda|/2)^n}{n!}.
\]
The same bound holds for the sine response. At fixed nonzero amplitude, it decays superexponentially as $\exp[-K\log K+\calO(K)]$, the same decay used in trigonometric generalized QSP (GQSP) synthesis~\cite{MotlaghWiebe2024}. Larger amplitudes require more displacement orders before this regime is reached (Appendix~\ref{smsec:ja-details}).

Figure~\ref{fig:tgifs-ja-nlft-dictionary}(c) shows a completed-JA error bound from a uniform truncation-tail majorant. Sections~\ref{sec:duration-repricing} and \ref{sec:grouped-benchmarks} compare recovered circuits. Single tones use direct phase-only fitting. Grouped responses use coefficient construction, spectral completion, and matrix factorization. Both routes validate the complete native circuit against its target (Appendix~\ref{app:numerical-validation}).

For $L_{\rm h}=\max_r m_r$ and $S=\sum_r m_r|a_r|>0$, recall the omitted absolute tail $\tau_K=\sum_{|n|>K}|C_n|$ of Sec.~\ref{sec:generalized-ja}. The generalized Bessel coefficients themselves decay superexponentially with Fourier order at fixed amplitudes and harmonic indices. The contour bound in Appendix~\ref{smsec:generalized-ja} gives
\[
|C_n|\le
\left(\frac{\mathrm e S}{2|n|}\right)^{|n|/L_{\rm h}},
\qquad |n|>S/2.
\]
The logarithm of this upper bound is $-(|n|/L_{\rm h})\log|n|+\calO(|n|)$. The bound therefore eventually decreases faster than any fixed geometric rate. The sum of the omitted coefficients gives the corresponding uniform tail bound
\[
\sum_{|n|>K}|C_n|
\le \exp\!\left[-\frac{K}{L_{\rm h}}\log K+\calO(K)\right]
\qquad(K\to\infty).
\]
Thus the coefficient decay also controls the uniform truncation error. A larger $L_{\rm h}$ weakens this sufficient rate. Equation~\eqref{eq:generalized-ja-superexponential-tail} gives an explicit finite-degree majorant.

A grouped word also keeps the cancellation between harmonics that an absolute-amplitude estimate discards. For the combined profile $h$, define
\begin{equation}
B_h(s)=\sup_{\theta\in\bbR}|\operatorname{Im}h(\theta+\ii s)|,
\qquad s>0.
\label{eq:generalized-ja-phase-growth}
\end{equation}
A shift of the coefficient contour gives the sharper bound
\begin{equation}
\tau_K\le\frac{2\exp[B_h(s)-(K+1)s]}{1-\ee^{-s}}.
\label{eq:generalized-ja-phase-aware-tail}
\end{equation}
Because $B_h(s)\le\sum_r|a_r|\sinh(m_rs)$, this bound is no larger than the absolute-amplitude contour bound at the same $s$. It retains the offsets and interference among harmonics before taking a supremum. At small $s$, $B_h(s)=s\|h'\|_\infty+\calO_h(s^3)$. Cancellation in the combined phase gradient can reduce the sufficient cutoff and hence the CD action. Appendix~\ref{app:action-optimality} gives the proof and a two-harmonic example with an explicit finite-accuracy cutoff comparison. The reduction depends on the supplied profile, and a grouped word does not guarantee a strict saving for every target.

For $S>0$ and paired-unitary error tolerance $0<\eps<1/2$, a sufficient degree is
\begin{equation}
\begin{gathered}
K=\calO\!\left(S+
\frac{L_{\rm h}T_{\rm h}}{\log(\mathrm e+L_{\rm h}T_{\rm h}/S)}\right),\\
T_{\rm h}=\log(L_{\rm h}/\eps).
\end{gathered}
\label{eq:generalized-ja-degree}
\end{equation}
Appendix~\ref{smsec:generalized-ja} derives the tail bound and paired-unitary guarantee. The theorems below state the query and displacement-action bounds in their respective circuit models.

The grouped construction truncates and completes the combined response once. Serial synthesis shares the tolerance among separately completed words. Under the fixed-drive CD clock, a serial harmonic-$m_r$ word costs $2|m_r|K_r\tau_{\rm CD}(\bm\nu/2)$, whereas the grouped word costs $2K\tau_{\rm CD}(\bm\nu/2)$. Cancellation within the combined coefficients can lower its bandwidth. The actual comparisons use recovered words.

A generic completion uses $\mathrm{SU}(2)$ rotations. Phase-only alternatives can matter for the physical cost. Separate single-tone compilation provides one phase-only option. Appendix~\ref{app:processor-conventions} gives the symmetry restriction and the separate pulse costs.

Theorem~\ref{thm:jaoqgqsp-trig} gives coherent constructions and a matching single-tone query lower bound. Theorem~\ref{prop:action-optimality} treats fixed-profile query and action scaling. Both assume exact signals and uniform approximation in their stated circuit models, and finite-cutoff and implementation errors are accounted for separately. For the lower bounds, a \emph{uniform signal processor} alternates controlled scalar signal calls, or their inverses, with signal-independent controls on finite ancillary registers, and a power $W^m$ counts as $|m|$ calls.

\begin{theorem}[Coherent sinusoidal compilation]
\label{thm:jaoqgqsp-trig}\label{prop:ja-single-ridge-lower}
Let the position quadratures strongly commute, let $P$ be a Hermitian electronic Pauli string, and let $0<\eps<1/2$.

(i) \emph{Grouped upper bound.} Let $h(\theta)=\sum_ra_r\sin(m_r\theta+\phi_r)$ with $m_r\in\mathbb N$, $L_{\rm h}=\max_rm_r$, and $S=\sum_rm_r|a_r|>0$. One word with $2K$ balanced queries, $K$ as in Eq.~\eqref{eq:generalized-ja-degree}, and boundary Pauli control approximates $\exp[-\ii\sigma_z^{(a)}P\,h(\bm\nu\cdot\bQ)]$ within $\eps$ in operator norm.

(ii) \emph{Single-tone upper bound.} For $U_P(\Lambda;\bmu,\phi)$ with $\Lambda\ne0$, a word with a phase-only core and fixed boundary basis changes approximates $\operatorname{diag}(U_P,U_P^\dagger)_a$ within $\eps$ in operator norm, using
\begin{equation}
N_W=2K=\calO\!\left(|\Lambda|+
\frac{\log(1/\eps)}{\log(\mathrm e+\log(1/\eps)/|\Lambda|)}\right)
\label{eq:jaoqgqsp-query}
\end{equation}
balanced queries.

(iii) \emph{Matching lower bound.} A uniform signal processor whose selected response approximates $e^{-\ii\Lambda\cos\theta}$ within $\eps$ for all real $\theta$ uses $Q$ calls with
\begin{equation}
Q+1=\Omega\!\left(|\Lambda|+
\frac{\log(1/\eps)}{\log(\mathrm e+\log(1/\eps)/|\Lambda|)}\right).
\label{eq:ja-single-ridge-lower}
\end{equation}
For $P\ne I$ and $\bmu\ne\bm0$, the same bound holds for joint-signal processors whose other gates act only on signal-processing ancillas.
\end{theorem}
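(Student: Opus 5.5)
For part (i), the plan is to assemble the pieces already in place in Sec.~\ref{sec:common-architecture}. First I would truncate the generalized Jacobi--Anger response \eqref{eq:generalized-ja-response} at degree $K$. Next, I would bound the omitted absolute tail $\tau_K$ using the contour estimate $|C_n|\le(\ee S/2|n|)^{|n|/L_{\rm h}}$ for $|n|>S/2$. Summing this estimate over $|n|>K$ gives a tail of the form $\exp[-(K/L_{\rm h})\log(K/S)+\calO(K/L_{\rm h})]$. Requiring $2\sqrt{\tau_K}\le\eps$, i.e.\ $\tau_K\le\eps^2/4$, then means solving $(K/L_{\rm h})\log(2K/(\ee S))\gtrsim 2\log(L_{\rm h}/\eps)$. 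This produces the degree \eqref{eq:generalized-ja-degree}: the first term $S$ covers the regime where the bound has not yet started to decay, and the Lambert-type inversion $x\log x\ge y\Rightarrow x\approx y/\log y$ supplies the second term. The factor $L_{\rm h}$ inside the logarithm absorbs the geometric prefactor in the tail sum.

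With $K$ fixed, the rescaled truncation \eqref{eq:generalized-ja-contractive} satisfies $|F_K|\le1$ and $|F_K-G|\le2\tau_K$. Paraunitary completion followed by inverse NLFT factorization gives $2K$ balanced queries. Lemma~\ref{cor:ja-full-joint-completion} converts the selected-response error into paired-unitary error at most $\sqrt{2\cdot2\tau_K}$. The operator-norm statement follows because strong commutation gives a joint spectral measure for $W_{\bm\nu}$, so scalar sup-norm bounds transfer to operator norm. Finally, the boundary construction \eqref{eq:boundary-pauli-error} inserts $P$ without changing the error.

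For part (ii), I would specialize (i) to the case $R=1$, $m=1$, $S=|\Lambda|$, and use the sharper Bessel bound $|J_n(\Lambda)|\le(|\Lambda|/2)^n/n!$. The remaining issue is the phase-only core. Here I would invoke the repair step of Appendix~\ref{smsec:ja-details}, which enforces the parity and endpoint conditions for a pure cosine while keeping the maximal harmonic unchanged. The step to check is that this repair costs only a constant factor in the error. I would verify this by bounding the repaired coefficients' deviation by a multiple of the tail.

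For part (iii), I expect the lower bound to be the main obstacle. The plan is a polynomial-degree argument. Any uniform signal processor with $Q$ calls has a selected response that is a Laurent polynomial in $z=\ee^{\ii\theta}$ of degree at most $Q$. Its uniform distance to $\ee^{-\ii\Lambda\cos\theta}$ is therefore at least the $L^2$ distance, which by Parseval is at least $|J_{Q+1}(\Lambda)|$. I would handle the two regimes separately.
\begin{itemize}
\item \emph{Small $\eps$ relative to $\Lambda$.} The lower asymptotic $|J_n(\Lambda)|\ge c(|\Lambda|/2)^n/n!$, valid for $n\gtrsim\Lambda^2$, forces $(Q+1)\log((Q+1)/|\Lambda|)\gtrsim\log(1/\eps)$, which inverts to the logarithmic term.
\item \emph{Order $|\Lambda|$ calls.} Since $\eps<1/2$, the approximant must carry a nonnegligible fraction of the $\ell^2$ mass, and that mass sits on the indices $|n|\lesssim|\Lambda|$. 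This regime needs a uniform Bessel lower bound such as $\sum_{|n|>c|\Lambda|}J_n^2\ge$ const for $c<1$. I would either cite this bound or obtain it from the Debye/transition asymptotics.
\end{itemize}

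For the joint-signal extension with $P\ne I$, I would restrict to an electronic eigenstate with $P=+1$ and use $\bmu\ne\bm0$ to make $\theta$ sweep all of $\bbR$. This reduces the problem to a scalar processor, with the conditional-displacement powers counted as calls.
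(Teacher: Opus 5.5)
\textbf{Parts (i) and (ii).} These follow the paper's route. You use the superexponential corollary of the contour tail instead of optimizing $\rho$ directly, which changes only constants. The rest is the same: contractive rescaling, completion, factorization, Lemma~\ref{cor:ja-full-joint-completion}, and the boundary lift. In (ii), the paper's repair rotates $A_1$ against the complementary component $B_1$. Because both $B_1$ and $B_1(0)$ are $\calO(\sqrt\delta)$, the added selected error is $\calO(\delta)$, which is the kind of bound you propose.

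\textbf{The gap in part (iii), small-$\eps$ regime.} Your first bullet rests on $|J_n(\Lambda)|\ge c(|\Lambda|/2)^n/n!$, which you only have for $n\gtrsim\Lambda^2$.
\begin{itemize}
\item The degree you must force, $b=\log(1/\eps)/\log(\ee+\log(1/\eps)/|\Lambda|)$, can lie anywhere between $|\Lambda|$ and $\Lambda^2$. For example, $\log(1/\eps)=|\Lambda|^{3/2}$ gives $b\asymp|\Lambda|^{3/2}/\log|\Lambda|$.
\item Suppose $Q+1<\Lambda^2/2$. Then you can evaluate the Parseval inequality $\eps^2\ge\sum_{|n|>Q}J_n(\Lambda)^2$ only at some $m\gtrsim\Lambda^2$. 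That yields $\log(1/\eps)\lesssim\Lambda^2\log|\Lambda|$, a constraint on $\eps$ that says nothing about $Q$.
\item Your second bullet gives only $Q+1\gtrsim|\Lambda|$.
\end{itemize}
So throughout the window $|\Lambda|\ll b\ll\Lambda^2$ you prove $\Omega(|\Lambda|)$, not $\Omega(|\Lambda|+b)$.

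\textbf{How to close it.} You need a lower bound of the form $(|\Lambda|/2n)^{\calO(n)}$ valid for all $n\ge C|\Lambda|$. Within your Parseval framework, use the Weierstrass product $J_n(x)=\frac{(x/2)^n}{n!}\prod_k(1-x^2/j_{n,k}^2)$, with $j_{n,1}>n$, together with Rayleigh's sum $\sum_kj_{n,k}^{-2}=1/[4(n+1)]$. These give $J_n(x)\ge\frac{(x/2)^n}{n!}\ee^{-x^2/[3(n+1)]}$ for $0<x\le n/2$. The extra factor $\ee^{-\calO(|\Lambda|)}$ is harmless, because $b\gg|\Lambda|$ forces $\log(1/\eps)\gg|\Lambda|$.

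The paper avoids Bessel asymptotics altogether. It symmetrizes under $z\mapsto z^{-1}$ to an algebraic polynomial $p_Q(\cos\theta)$ of degree at most $Q$. The $n$th forward difference with step $2/n$ on $[-1,1]$ annihilates $p_Q$ for $n>Q$, but has modulus $2^n|\sin(\Lambda/n)|^n$ on $\ee^{-\ii\Lambda x}$. Hence $\eps\ge|\sin(|\Lambda|/n)|^n$ for every $n>Q$, and $n=\lfloor b/4\rfloor$ gives the contradiction.

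\textbf{Second bullet.} This is a viable alternative to the paper's sign-change count, in which $\operatorname{Im}p_Q$ must alternate where $|\Lambda|x\in\pi/2+\pi\mathbb Z$, giving $Q\ge2|\Lambda|/\pi-2$. However, you still have to prove the mass estimate. An elementary proof:
\begin{itemize}
\item Apply Parseval to $f'$ and $f''$ with $f=\ee^{-\ii\Lambda\cos\theta}$. This gives $\sum_nn^2J_n^2=\Lambda^2/2$ and $\sum_nn^4J_n^2=3\Lambda^4/8+\Lambda^2/2$.
\item Paley--Zygmund then gives $\sum_{n^2>\Lambda^2/20}J_n^2>1/4>\eps^2$ for $|\Lambda|\ge2$, hence $Q>|\Lambda|/\sqrt{20}$.
\end{itemize}

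\textbf{Joint-signal reduction.} Use whichever eigenspace of $P$ is nonempty; a Hermitian string such as $-I$ has no $+1$ sector. When the eigenvalue is $-1$, shift $\theta$ by $\pi$.
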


Appendices~\ref{smsec:ja-details} and \ref{app:proofs} prove the three parts, with the joint-signal convention of parts (ii) and (iii) in Appendix~\ref{app:joint-signal-access}. The query count scales as $\calO(|\Lambda|)$ at large amplitude and as $\calO[\log(1/\eps)/\log\log(1/\eps)]$ at high precision, and the $+1$ in part (iii) accounts for integer queries near the identity. A paired approximation fixes its selected block, so part (iii) also bounds every full paired approximation. The boundary lift preserves the upper bounds but does not extend the lower bound to arbitrary electronically controlled architectures. All tones of a group carry one common $P$, and the zero profile needs no queries. The joint optimal dependence on amplitudes, precision, and growing harmonic content remains open.

\Needspace{8\baselineskip}
\begin{theorem}[Query and action requirements of a fixed profile]
\label{prop:grouped-amplitude-optimality}\label{prop:action-optimality}\label{cor:action-l2}
Fix a real nonconstant trigonometric polynomial $h$ with integer harmonics, $\bm\nu\ne\bm0$, $\lambda>0$, and $0<\eps<1/2$, and let $C_n$ be the Fourier coefficients of $e^{-\ii\lambda h}$.

(i) \emph{Query scaling.} Consider words of balanced queries $D(\theta)=\operatorname{diag}(e^{\ii\theta/2},e^{-\ii\theta/2})$ and their inverses, with finite ancillary registers and signal-independent coherent controls. A word whose selected response approximates $e^{-\ii\lambda h(\theta)}$ within $\eps$ for all real $\theta$ uses $N$ queries with
\begin{equation}
N+2\ge\lambda\Delta_h/\pi,\qquad \Delta_h=\max h-\min h,
\label{eq:grouped-amplitude-lower}
\end{equation}
and the grouped construction attains the minimum paired-unitary query count up to a factor that depends on $h$ and $\eps$,
\begin{equation}
N_{\min}(\lambda;h,\eps)=\Theta_{h,\eps}(\lambda)\quad(\lambda\to\infty).
\label{eq:grouped-amplitude-optimal}
\end{equation}

(ii) \emph{Displacement action.} Consider single-ancilla words with signal-independent $\mathrm{SU}(2)$ rotations and real kicks $e^{\ii\alpha_j\sigma_z^{(a)}\theta}$, with action $A=|\bm\nu|\sum_j|\alpha_j|$. Among words that approximate $\operatorname{diag}(e^{-\ii\lambda h},e^{\ii\lambda h})$ within $\eps$ for all real $\theta$,
\begin{equation}
A_{\min}=\lambda|\bm\nu|\,\|h'\|_\infty+o(\lambda)
\quad(\lambda\to\infty),
\label{eq:action-leading-main}
\end{equation}
and every such word has $A\ge|\bm\nu|K_2(\eps)$ at every amplitude, where
\begin{equation}
K_2(\eps)=\min\left\{m\in\mathbb Z_{\ge0}:\sum_{|n|>m}|C_n|^2\le\eps^2\right\}.
\label{eq:action-l2-order}
\end{equation}
\end{theorem}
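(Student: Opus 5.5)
The approach is to reduce every circuit statement to a statement about Laurent polynomials in $z=\ee^{\ii\theta}$, or more generally trigonometric sums with real frequencies, and then compare these with the Fourier data of $\ee^{-\ii\lambda h}$.

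\textbf{Query lower bound in part (i).} A word with $N$ balanced queries and inverses has selected response $F(\theta)$ that is a trigonometric polynomial with half-integer frequencies bounded by $N/2$. Since the controls are signal independent and each $D^{\pm1}$ contributes $\ee^{\pm\ii\theta/2}$, $F$ has $|F|\le1$ and exponential type at most $N/2$.

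Next, compare the phase. On an interval where $h$ moves from $\min h$ to $\max h$, the target $\ee^{-\ii\lambda h}$ winds by $\lambda\Delta_h$. An $\eps$-close $F$ with $\eps<1/2$ stays away from zero there, so it tracks this winding up to an $\calO(1)$ error. A bounded trigonometric polynomial of degree $N/2$ obeys Bernstein's inequality $\|F'\|_\infty\le (N/2)\|F\|_\infty$. The quantity $\arg F$ changes at rate $|F'|/|F|\le (N/2)/(1-\eps)$.

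This rate estimate is loose, so I would instead use the argument principle. On a period, $\arg F$ along the circle is controlled by the zeros of the polynomial $z^{N/2}F$ inside the disk, which number at most $N$. The total variation of argument over the relevant arc is then at most $\pi(N+2)$, with the $+2$ absorbing the endpoint slack from the $\eps$ tolerance. This gives $N+2\ge\lambda\Delta_h/\pi$.

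The upper bound $N=\calO_{h,\eps}(\lambda)$ comes from Theorem~\ref{thm:jaoqgqsp-trig}(i) with amplitudes $\lambda a_r$, so that $S\propto\lambda$. Together these give $\Theta(\lambda)$.

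\textbf{Action in part (ii).} The selected block of a single-ancilla word with kicks $\alpha_j$ is a sum of exponentials $\ee^{\ii\omega\theta}$ with $|\omega|\le\sum_j|\alpha_j|=A/|\bm\nu|$. It is therefore an entire function of exponential type $A/|\bm\nu|$, bounded by $1$ on the real line.

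For the fixed-$\eps$ bound, the word's response $F$ has spectrum inside $[-m,m]$ with $m=A/|\bm\nu|$. Its $L^2$ distance over a period from $\ee^{-\ii\lambda h}$ is at least the norm of the Fourier tail beyond $|n|>m$. Because frequencies may be noninteger, I would average over a long window, that is, use Besicovitch almost-periodic $L^2$, so that orthogonality still holds. Since the uniform error is at most $\eps$, we get $\sum_{|n|>m}|C_n|^2\le\eps^2$, hence $m\ge K_2(\eps)$.

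For the leading term, the lower bound uses Bernstein's inequality for bounded functions of exponential type $\sigma$: $\|F'\|\le\sigma\|F\|$. Comparing with the target gives $F'\approx-\ii\lambda h'\ee^{-\ii\lambda h}$ near the maximizer of $|h'|$. A rescaling argument handles the mismatch: set $\theta=\theta_0+u/\lambda$, take the limit $\lambda\to\infty$, and use compactness of type-$\sigma$ functions, so that $F(\theta_0+u/\lambda)$ converges to an $\eps$-approximation of $\ee^{-\ii h'(\theta_0)u}$ on bounded $u$-windows. This forces $\sigma/\lambda\ge\|h'\|_\infty-o(1)$.

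The matching upper bound uses the phase-aware tail \eqref{eq:generalized-ja-phase-aware-tail} with $s=\lambda^{-1/2}$. This gives $B_{\lambda h}(s)=\lambda s\|h'\|_\infty+\calO(\lambda s^3)$, so the cutoff is $K=\lambda\|h'\|_\infty+o(\lambda)$ and the action is $2K\cdot|\bm\nu|/2$.

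\textbf{Main obstacle.} The hardest step is making the rescaling/compactness argument uniform: the $\eps$-approximation fixes the limit only up to $\eps$, not exactly. To turn it into an $o(\lambda)$ statement, I would first try a Bernstein-type inequality for approximants of pure tones. If $|F-\ee^{-\ii cu}|\le\eps<1/2$ on a long window and $F$ has type $\sigma<c$, then the Fourier transform of the window-limited product contradicts the error bound, through a Paley–Wiener argument with a smooth cutoff. I would then let the window length grow slowly with $\lambda$.
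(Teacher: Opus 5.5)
Most of your plan is sound: the $K_2$ argument, both upper bounds, and part (ii). The part (i) lower bound, however, has a gap at its central step.

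\textbf{The gap in part (i).} The inequality you need is that the net change of $\arg F$ along an arc of length $\ell<2\pi$ is at most $N\pi$. Counting zeros of $z^{N/2}F$ inside the disk does not give this. That count fixes only the winding over a full period, $n_{\rm in}-N/2$. Over a sub-arc, zeros outside the disk also contribute. For a zero just outside the circle beside the arc, the factor has zero full-period winding, yet its argument changes by nearly $\ell/2-\pi$ along the arc.

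What does work is a per-factor estimate. For every $z_k$ off the arc, $\arg(z-z_k)$ changes by $\ell/2+c_k$ with $|c_k|<\pi$. With at most $N$ such factors against the prefactor $z^{-N/2}$, this gives $|\Delta_{\rm arc}\arg F|<N\pi$. This bound concerns the net change, not the total variation. The net change is what you compare with the target's $\lambda\Delta_h$, up to $2\arcsin\eps<\pi/3$.

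There is a second problem. The model includes controlled calls, as the paper's proof states explicitly. The selected response then mixes integer and half-integer powers of $z$, so $z^{N/2}F$ is not a polynomial in $z$. Working in $w=\ee^{\ii\theta/2}$ instead gives $2N$ factors and only $|\Delta_{\rm arc}\arg F|<2N\pi$, a factor-two loss.

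The missing device is the paper's average $\tfrac12[F(\theta)+F(\theta+2\pi)]$. It keeps the $\eps$-approximation to the $2\pi$-periodic target and leaves a Laurent polynomial in $z$ of degree $\lfloor N/2\rfloor$. With both repairs your route yields $N\pi>\lambda\Delta_h-\pi/3$, slightly sharper than $N+2\ge\lambda\Delta_h/\pi$.

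The paper avoids arguments altogether. It picks the first points where $h$ reaches each level $(\pi/2+j\pi)/\lambda$ between a minimum and a maximum. At these points $\operatorname{Im}p$ alternates in sign. It then uses that a real trigonometric polynomial of degree $K$ has at most $2K$ zeros per period.

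\textbf{Part (ii).} The obstacle you flag is not real. Write $d=h'(\theta_0)$. The bound $|G(u)-\ee^{-\ii du}|\le\eps$ holds at every real $u$ and survives the locally uniform limit. Suppose $G$ has type $\tau<|d|$. A nonnegative unit-mass kernel whose Fourier transform is supported in $(-(|d|-\tau),|d|-\tau)$ annihilates $G(u)\ee^{\ii du}$. This gives $1=\bigl|\int w\,(G\ee^{\ii d\cdot}-1)\bigr|\le\eps$, a contradiction.

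That lemma is the core of the paper's proof. The paper applies the $\operatorname{sinc}^6$ kernel directly at finite $\lambda$ to the exponential sum, whose frequencies are bounded by $\bar\alpha$, so no compactness is needed. It uses $h''(\theta_0)=0$ and the kernel's third moment to obtain an explicit $\calO(\lambda^{1/3})$ correction; your compactness route yields only $o(\lambda)$.

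Your Bernstein comparison $F'\approx-\ii\lambda h'\ee^{-\ii\lambda h}$ is unjustified, since uniform closeness does not control derivatives, but you do not need it. Your choice $s=\lambda^{-1/2}$ in the upper bound leaves an $\calO(\lambda^{1/2}\log\lambda)$ remainder rather than the paper's $\lambda^{1/3}$-type term, which still suffices.
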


Part (i) fixes the base kick and establishes linear query scaling at fixed profile and accuracy. Part (ii) allows unequal real kick strengths along the same coordinate and determines the leading action coefficient. The quantity $\lambda|\bm\nu|\|h'\|_\infty$ is the largest momentum impulse of the ideal phase. Along the normalized ridge coordinate, $\exp[-\ii\lambda h(\bm\nu\cdot\bQ)]$ shifts momentum by $-\lambda|\bm\nu|h'(\bm\nu\cdot\bQ)$. Reference~\cite{Yang2026RabiPolyPhase} bounds the total Rabi time of same-quadrature words by the derivative of their postselected block, for polynomial phase gates on low-energy inputs. Part (ii) treats sinusoidal profiles under uniform approximation of the paired unitary and fixes the leading coefficient. The grouped construction attains this leading coefficient through the phase-aware tail bound of Appendix~\ref{app:action-optimality}. That bound sharpens the general sufficient bound of Eq.~\eqref{eq:generalized-ja-degree}. The action statement is restricted to the specified single-ancilla same-coordinate model. Conjugate-quadrature displacements, adaptive measurements, and restricted-input approximation are outside its scope. A Gaussian squeezing that rescales $\bm\nu\cdot\bQ$ also rescales the displacement action, so action comparisons fix the coordinate frame.

At finite amplitude, $K_2(\eps)$ gives different information. It is the smallest momentum-order cutoff that retains all but $\eps^2$ of the diffraction weight of the target. It gives the necessary bound $A\ge|\bm\nu|K_2$, but does not specify a circuit attaining that action. The GJA coefficients keep the factorial-type decay of the single-tone Jacobi--Anger series, slowed only by the largest harmonic index $L_{\rm h}$. The degree at which our tail bounds guarantee paired-unitary error $\eps$ and the necessary degree $K_2(\eps)$ are therefore set by the same tail. For the single-tone target at $\Lambda=2$ and the eight-harmonic target of Eq.~\eqref{eq:dense-eight-profile} at $s=1$, their ratio stays between $1.6$ and $2.2$ for $10^{-10}\le\eps\le10^{-2}$. The recovered degree $K$, the necessary support $K_2$, and the asymptotic impulse scale need not coincide, and whether circuits can approach the action $|\bm\nu|K_2(\eps)$ at finite amplitude remains open.

\subsection{Potential accuracy and gate accuracy}
\label{sec:hamiltonian-first}

Earlier OQ--GQSP synthesis approximates the desired unitary phase directly by a finite Laurent response~\cite{HongOQGQSP2026}. For one signal coordinate, the two routes can be written as
\begin{equation}
\begin{aligned}
\text{unitary-first:}\quad
&u_t=\ee^{-\ii tV}\longrightarrow P_K(z),\\
\text{Hamiltonian-first:}\quad
&V\longrightarrow V_R,\\
&u_{R,t}=\ee^{-\ii tV_R}\longrightarrow S_K(z).
\end{aligned}
\label{eq:approximation-routes}
\end{equation}
Both finite responses must satisfy whole-circle contractivity before QSP completion and factorization. In our route, GJA supplies the coefficients of $u_{R,t}$ from the real trigonometric representation $V_R$. The distinction concerns which function is approximated first, rather than a different QSP backend. The two routes still give different approximation families, because an exponential of a finite real trigonometric model generally has infinitely many Fourier coefficients.

For a real potential $V$ and its real representation $V_R$ on a declared coordinate domain $\mathcal I_Q$, define $\delta_V=\|V-V_R\|_{\mathcal I_Q}$. Before finite-response synthesis,
\begin{equation}
\begin{split}
|u_{R,t}|&=1,\\
\|u_t-u_{R,t}\|_{\mathcal I_Q}&\le |t|\delta_V.
\end{split}
\label{eq:representation-phase-bound}
\end{equation}
Representation error therefore changes phase without introducing amplitude loss. A real Fourier extension preserves this unit modulus outside the physical interval too. A direct complex-response fit can have both phase and amplitude error, and accuracy inside the interval alone does not ensure whole-circle contractivity. Phase-aware unitary extensions can also preserve unit modulus before finite-response synthesis. Appendix~\ref{app:working-space} gives the domain and extension conditions.

The subsequent truncation and completion have their own error. For $\|S_K-u_{R,t}\|_{\mathbb T}\le\tau_K<1$, take $F_K=S_K/(1+\tau_K)$ as in Eq.~\eqref{eq:generalized-ja-contractive}. On the signal circle,
\begin{equation}
\begin{split}
|H_K|^2=1-|F_K|^2
&\le\frac{4\tau_K}{(1+\tau_K)^2},\\
\epsilon_{\rm pair,total}
&\le |t|\delta_V+2\sqrt{\tau_K}.
\end{split}
\label{eq:representation-leakage-separation}
\end{equation}
The first line bounds the ancilla leakage probability at a scalar signal value and, under exact signal access, its expectation for any oscillator input. The second line bounds the paired-unitary error on the declared coordinate domain. Thus potential-series approximation and finite-response synthesis have distinct certificates. The response tail still depends on $V_R$, so these are not independent optimization problems.

Figure~\ref{fig:structure-programming}(a) illustrates the two approximation routes in the complex plane. The Hamiltonian-first intermediate response $u_R$ lies on the unit circle because the potential model is real. Its subsequent finite response $F_{\rm HF}$ can have both phase and amplitude error. The diagram shows the illustrative case of zero synthesis phase error, and its arrows indicate approximation stages rather than error-vector lengths.

For a real periodic potential $V$, let $\Pi_j$ retain Fourier modes $-j,\ldots,j$ and consider $V_R=\Pi_R V$. Before contractivity repair, the approximation order gives
\begin{equation}
\begin{split}
S^{\rm HF}_{R,K}&=\Pi_K\!\left[\ee^{-\ii t\Pi_R V}\right],\\
S^{\rm UF}_{K}&=\Pi_K\!\left[\ee^{-\ii tV}\right].
\end{split}
\label{eq:truncation-order-routes}
\end{equation}
Potential truncation and exponentiation generally do not commute. If $\Pi_RV=V$, however, these responses coincide, as the matched-$V_R$ control of Appendix~\ref{app:sinusoidal-route-comparison} confirms under identical truncation and completion. Within each compatible group, our GJA construction combines the harmonic contributions before the final mode cutoff. A truncation of each harmonic's JA response before multiplication would give a different approximation, and the composition of separately completed groups is also a different circuit operation.

A fixed potential model also defines a consistent family $U_R(t)=\exp(-\ii tV_R)$ with $U_R(t+s)=U_R(t)U_R(s)$. The same $\delta_V$ certificate can be reused at each time, with representation error bounded by $|t|\delta_V$ on its declared domain. The degree, completion, and rotation angles must still be synthesized for each requested time and tolerance, and approximate circuits need not obey the group law exactly. The same full-unitary accuracy criterion permits coherent reuse of either synthesis route when its bounds cover the incoming states.

The explicit GJA construction supplies coefficients and tail bounds without optimizing the finite response, but it does not imply a universal query or runtime advantage over direct synthesis.

\subsection{Error allocation in coherent composition}
\label{sec:resources}

For a serial sequence, the potential degrees and the QSP response degrees jointly determine accuracy and cost. Uniform gate-error allocation gives a sufficient bound, but a finite search can instead select these degrees across compatible groups under one total-error criterion.

\paragraph{Accumulated synthesis error.}
For $m$ serial ridges with amplitudes $\Lambda_i$, choose paired-unitary tolerances $\eps_i$ satisfying $\sum_i\eps_i\le\eps_{\rm in}$.  The uniform allocation $\eps_i=\eps_{\rm in}/m$ is sufficient.  Unitary telescoping gives a measurement-free product with error at most $\eps_{\rm in}$ and
\begin{equation}
N_W^{\rm JA}
=\calO\!\left[\sum_{i=1}^{m}\abs{\Lambda_i}
+m\,\operatorname{polylog}(m/\eps_{\rm in})\right].
\label{eq:aggregate-ja-main}
\end{equation}
Equation~\eqref{eq:aggregate-ja-main} counts queries to each ridge's own signal and is an achievable serial upper bound. The conversion to a common base-signal or CD-action currency additionally weights each harmonic by $|m_i|$ and each direction by its displacement cost. Compression or cancellation across ridge directions that do not share one base signal is the multivariate problem discussed in Appendix~\ref{app:multidirection-response}.

This count applies within an outer time step only with compatible gate-error bounds on all states reached during the evolution, which projected commutator bounds alone do not supply (Appendix~\ref{app:working-space}).

\paragraph{Representation and implementation errors.}
The query bound assumes exact signals and a supplied representation. On a common working space, six error sources add coherently. They are the representation, outer-evolution, synthesis, phase-recovery, signal, and boundary-control errors. Leakage and channel noise retain their own metrics. Appendix~\ref{app:working-space} specifies the projected contract and the error allocation.

\paragraph{Joint allocation across compatible groups.}
Figure~\ref{fig:structure-programming}(b) compares the routes for three distinct compatible groups with profiles $V(Q_1)$, $0.75V(Q_2)$, and $0.4V((Q_1+Q_2)/2)$, where $V(q)=(1-\ee^{-q})^2$. We apply the three-group sequence $16$ times at $\Delta t=2$, retaining one ancilla without measurement or reset. Both routes use the same real periodic extension, global contractivity repair, and completion convention. Hamiltonian-first synthesis varies the potential and response degrees jointly, and unitary-first synthesis varies the response degrees of the directly exponentiated extension. Both routes allocate degrees across the groups.

\begin{figure*}[!t]
\centering
\includegraphics[width=0.98\textwidth]{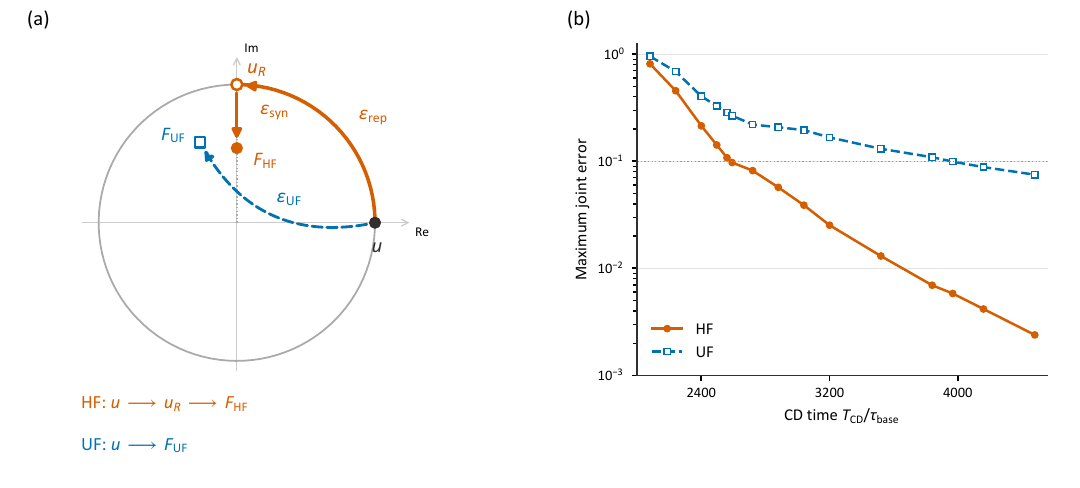}
\caption{Approximation stages and coherent serial cost. (a) Hamiltonian-first (HF) and unitary-first (UF) routes in the complex plane, with representation error $\eps_{\rm rep}$, HF synthesis error $\eps_{\rm syn}$, and UF approximation error $\eps_{\rm UF}$; zero HF synthesis phase error is shown schematically. (b) Best-found sampled joint-unitary error over all 48 block prefixes versus CD time for 16 repetitions of three distinct groups. Each point uses the same $513\times513$ coordinate grid and completion convention. The dotted line marks error $0.1$. These finite-search curves are not global optima or continuum error certificates.}
\label{fig:structure-programming}
\end{figure*}

For each candidate, we multiply the completed $2\times2$ matrices and maximize the joint-unitary error over all $48$ block prefixes on a common grid in $I^2$, where $I=\{q:V(q)\le1/2\}$. The plotted candidates first reach sampled error below $0.1$ at CD times $2592\tau_{\rm base}$ and $3968\tau_{\rm base}$ for HF and UF, respectively. The interval specifies coordinate support, not a quantum energy cutoff. This potential-only comparison includes neither kinetic evolution nor hardware noise. Appendix~\ref{app:fig3-serial-action} specifies the metric, cost convention, and numerical scope.

\paragraph{From queries to physical cost.}
For one scalar group with degree $K$ and base vector $\bm\nu$, the boundary lift preserves the CD pulse time
\begin{equation}
T_{\rm CD}=2K\,\tau_{\rm CD}(\bm\nu/2).
\label{eq:boundary-cd-time}
\end{equation}
A harmonic-$m$ query is charged $\tau_{\rm CD}(m\bm\nu/2)$, not one free balanced query.  In the fixed-drive model this is $|m|\tau_{\rm CD}(\bm\nu/2)$.  For $P\ne I$, Eq.~\eqref{eq:boundary-pauli-circuit} additionally uses two logical CNOTs, $C_P$, and $C_P^\dagger$.  Their total electronic-control duration is
\begin{equation}
T_{\rm el}=2\tau_{\rm CNOT}+\tau(C_P)+\tau(C_P^\dagger),
\label{eq:boundary-electronic-time}
\end{equation}
and is zero for $P=I$. Processing rotations, fixed basis and carrier changes, and per-pulse overhead add to $T_{\rm CD}+T_{\rm el}$. Virtual-$Z$ updates and active rotations need not have the same cost. Connectivity, routing, and calibration determine these contributions. Adjacent words with the same $P$ share boundary controls for every compiler family (Sec.~\ref{sec:controlled-lift}). Modewise displacements that commute need not be simultaneously executable under bandwidth, power, and crosstalk constraints.

For a CD drive rate $\Omega$, action $A_{\rm CD}$ lasts $\sqrt2\,A_{\rm CD}/\Omega$. One balanced query takes $\tau_{\rm base}:=\tau_{\rm CD}(\bm\nu/2)$. The serial-mode clock $\tau_{\rm CD}(\bm v)=\tau_{\rm unit}\|\bm v\|_1$ therefore gives $\tau_{\rm base}=\tau_{\rm unit}\|\bm\nu\|_1/2$. For equal modewise drive rates, $\tau_{\rm unit}=\sqrt2/\Omega$. The compiler benchmarks report $T_{\rm CD}/\tau_{\rm base}$, or $T_{\rm CD}/\tau_{\rm unit}$ when the ridge directions differ, whereas the noisy dynamics of Sec.~\ref{sec:short-window-dynamics} use a pulse-duration model in microseconds. The conversion of these reference times into device runtimes requires calibrated drive rates and control overheads. Appendix~\ref{app:processor-conventions} gives the rotation and carrier decompositions.

\section{Numerical benchmarks and noisy dynamics}
\label{sec:benchmarks}

We first compare compiler costs for supplied sinusoidal targets, then follow the simplified carbon dioxide model from coherent propagation to short-window noisy dynamics in Fig.~\ref{fig:fermi-grouped}.

\paragraph{Scope of the benchmarks.}
All benchmarks use electronic $P=I$. For a nonidentity $P$, the ideal boundary lift of Sec.~\ref{sec:controlled-lift} adds two CNOTs and one Clifford compute/uncompute pair. It preserves the paired-unitary error exactly. Scalar circuits therefore suffice for the compiler comparisons. The reported costs are the shortest circuits found by bounded searches, not global optima. They use a reference CD clock and are not calibrated device runtimes. Full-circle error checks use dense grids with analytic derivative padding. These floating-point checks are numerical evidence, not interval-arithmetic certificates. The benchmarks test the compiler, not the accuracy of a molecular potential energy surface. All costs depend on the supplied representation, error allocation, and outer step. Comparisons between molecular representations require the shared reference and accuracy contract of Appendix~\ref{app:working-space}.

\subsection*{Benchmark targets}
\label{sec:benchmark-targets}

\paragraph{Single tone.}
The single-tone target tests the onset of action savings over repeated weak steps:
\begin{equation}
\widehat U_a(\Lambda)=
\exp[-\ii\sigma_z^{(a)}\Lambda\cos(\pi\hat Q/L_Q+\phi_0)].
\label{eq:elementary-domain-target}
\end{equation}
Here $L_Q=10$ fixes the coordinate scale. The offset $\phi_0=0.025$ includes the offset-carrier operation in every family. One period spans the working coordinate window.

\paragraph{Dense harmonic profile.}
The dense profile tests the effect of grouping all harmonics through order eight:
\begin{equation}
h_s(\theta)=s\sum_{m=1}^{8}a_m\cos(m\theta),\qquad
 a_m=C\frac{(-1)^{m+1}}{m^2},
\label{eq:dense-eight-profile}
\end{equation}
Here $C\simeq1.98665$. The paired target is $\operatorname{diag}(\ee^{-\ii h_s},\ee^{\ii h_s})$. Molecular fits and mixed-parity profiles require further benchmarks.

\paragraph{Simplified carbon dioxide model.}
The two-mode target tests coherent reuse during vibrational energy exchange:
\begin{equation}
H_{\rm cm^{-1}}=\omega_s n_s+\omega_b n_b+\kappa R_{d_{\rm reg}}(Q_s,Q_b).
\label{eq:fermi-H}
\end{equation}
The first two terms give the harmonic Hamiltonian $H_0$. The function $R_{d_{\rm reg}}$ is the bounded sinusoidal replacement for $Q_sQ_b^2$ in Eq.~\eqref{eq:fermi-regularization}, with spacing $d_{\rm reg}=0.25$. Its three ridge directions, $d_{\rm reg}(1,\pm1)$ and $d_{\rm reg}(1,0)$, each contain two harmonics. The condition $\kappa\bra{0,2}R_{d_{\rm reg}}\ket{1,0}=|W_{\rm F}|$ fixes the resonant coupling. Appendix~\ref{app:grouped-benchmarks} gives the frequencies, detunings, and propagation convention.

\begin{figure*}[t]
\centering
\includegraphics[width=\textwidth]{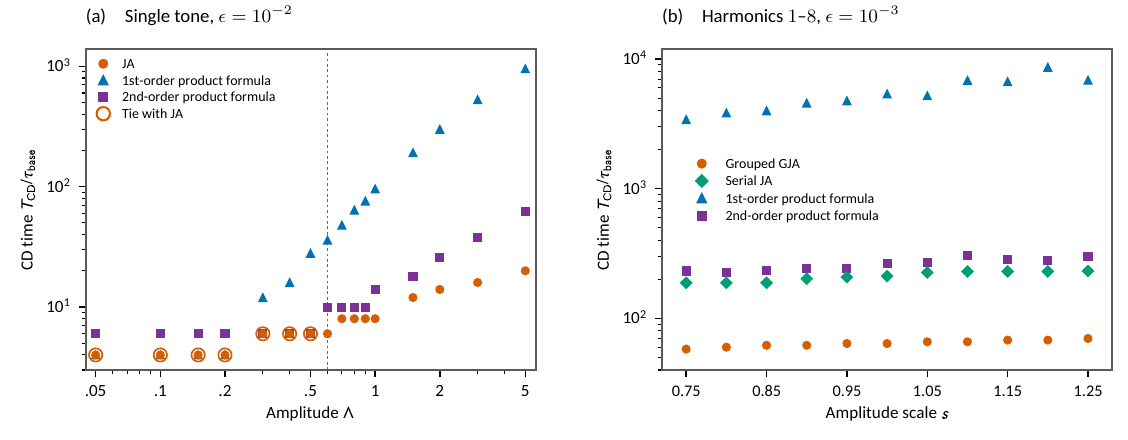}
\caption{Recovered compiler costs. (a) Single-tone target with coordinate scale $L_Q=10$ and offset $\phi_0=0.025$, common full-circle paired-unitary error tolerance $10^{-2}$. The vertical axis is the CD time in units of one balanced query, $\tau_{\rm base}$. Rings mark ties and the dashed guide marks the first strict saving on this grid. (b) Dense harmonics $1$--$8$ [Eq.~\eqref{eq:dense-eight-profile}], eleven scales, common full-circle paired-unitary error tolerance $10^{-3}$. Grouped GJA is compared with serial JA and first- and second-order product formulas (TGIFS). A harmonic-$m$ query costs $|m|\tau_{\rm base}$. Each panel uses one common criterion across its families. Rotation and electronic-control durations are excluded. These reference costs are not calibrated device runtimes. Appendix~\ref{app:common-metric} gives the selection and independent checks.}
\label{fig:compiler-benchmarks}
\end{figure*}

\begin{figure*}[t]
\centering
\includegraphics[width=\textwidth]{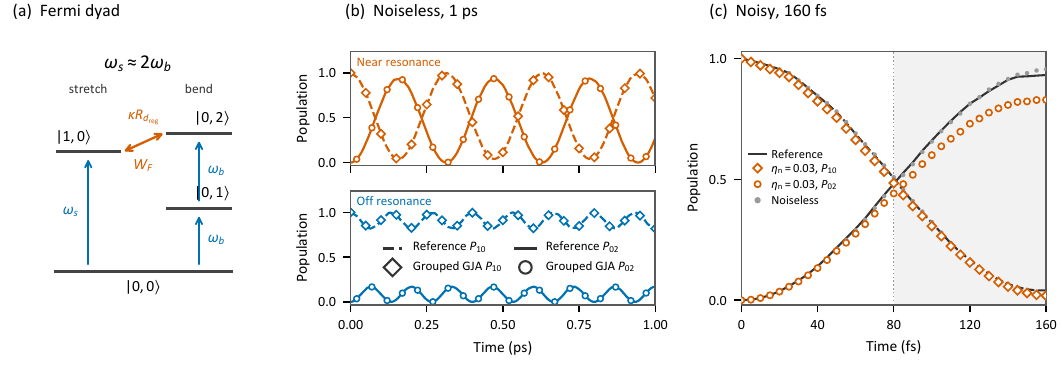}
\caption{Dynamics of the simplified carbon dioxide model from $\ket0_a\ket{1,0}$ with one persistent ancilla and no postselection or reset. (a) Fermi dyad with schematic levels. The bounded interaction $\kappa R_{d_{\rm reg}}$ couples $\ket{1,0}$ and $\ket{0,2}$ with $\kappa\bra{0,2}R_{d_{\rm reg}}\ket{1,0}=|W_{\rm F}|$, and the near-resonant detuning is $\delta=\omega_s-2\omega_b=-7.87$~cm$^{-1}$. (b) Noiseless $1$~ps populations near resonance (orange) and off resonance (blue, $\delta=-256.16$~cm$^{-1}$): reference lines and grouped-GJA markers. (c) Near-resonant populations of the rule-based $(3,3,5)$ circuit at noise-rate multiplier $\eta_{\rm n}=0.03$ over $160$~fs, with the noiseless circuit (gray) for comparison. Black lines show the reference populations. Noise acts throughout the $160$~fs. The circuit degrees were selected for the accuracy criteria over the first $80$~fs, and the shaded continuation shows the same circuit beyond that window. Panels (b) and (c) use $0.25$ and $5$~fs Strang steps, respectively. Noisy curves use $N_F=24$ Fock states per mode. Appendix~\ref{app:short-window-dynamics} gives the $N_F=32$ checks at $\eta_{\rm n}=0.03$, accuracy criteria, and pulse-clock assumptions.}
\label{fig:fermi-grouped}
\end{figure*}

\subsection{Single-ridge common-metric benchmark}
\label{sec:duration-repricing}
\label{sec:small-vs-finite}

TGIFS and JA use the same balanced query but choose their rotations differently. TGIFS repeats a fixed weak cell. Whole-response Fourier synthesis recovers a rotation sequence from a completed finite-angle response. JA and grouped GJA belong to the latter class. The all-forward identity in Appendix~\ref{smsec:tgifs-details} expresses $r$ basic TGIFS cells as $4r$ balanced queries and $2r$ equal rotations. This makes the shared architecture explicit.

The single-tone test asks where JA finite-angle synthesis begins to save displacement action relative to repeated weak steps. Integer circuit lengths can leave the methods tied at small amplitudes even when their asymptotic scalings differ.

We test the single-tone target of Eq.~\eqref{eq:elementary-domain-target} at sixteen amplitudes from $\Lambda=0.05$ to $5$. The sampling is denser below $1$. Appendix~\ref{app:elementary-duration} gives the circuit conventions. Appendix~\ref{app:common-metric} gives the full-grid searches.

We compare JA with first-order and second-order TGIFS under one criterion for all sixteen amplitudes. The paired-unitary error must be below $10^{-2}$ on the whole signal circle, the same metric as in the theorems and the other benchmarks. Appendix~\ref{app:common-metric} records the searches and the independent native-gate replay.

We report the CD time $T_{\rm CD}/\tau_{\rm base}$, which is $2K$ for JA, $4r$ for first-order TGIFS, and $4r+2$ for second-order TGIFS. Every balanced query has displacement magnitude $\pi/(2L_Q)$, and the offset-carrier time is proportional to $T_{\rm CD}$ by the same factor for all three families.

Figure~\ref{fig:compiler-benchmarks}(a) shows that JA ties the shorter TGIFS action at the seven sampled amplitudes up to $\Lambda=0.5$ and reduces it from $\Lambda=0.6$ onward. The action ratio reaches $20/62\simeq0.32$ at $\Lambda=5$. The onset of a sampled saving depends on the error metric and tolerance (Appendix~\ref{app:common-metric}).

For $\abs\Lambda\ll1$, the target admits the expansion $\widehat U_a(\Lambda)=I-\ii\sigma_z^{(a)}\Lambda\cos(\pi\hat Q/L_Q+\phi_0)+\calO(\Lambda^2)$. Both constructions reproduce this infinitesimal generator. This is consistent with TGIFS remaining competitive near the identity. At larger sampled amplitudes, TGIFS adds repetitions to suppress splitting error, whereas JA factors the finite-angle response directly.

\subsection{Multiharmonic and vibrational-dynamics benchmarks}
\label{sec:grouped-benchmarks}

We test grouping within a harmonic profile and coherent composition across the three ridges of the simplified carbon dioxide model. Figure~\ref{fig:compiler-benchmarks}(b) compares the harmonic compilers, including serial JA and both TGIFS orders. Figure~\ref{fig:fermi-grouped}(a,b) follows the molecular model through $1$~ps of noiseless propagation. Panel (c) examines its shorter, lower-precision noisy evolution. Appendix~\ref{app:grouped-benchmarks} gives the independent Direct Fourier checks, additional compiler comparisons, and numerical validation.

\paragraph{Grouping harmonics into one circuit.}
We test the dense profile of Eq.~\eqref{eq:dense-eight-profile} at eleven scales $s=0.75,0.80,\ldots,1.25$. The paired-unitary error tolerance is $10^{-3}$ on the full signal circle.

Figure~\ref{fig:compiler-benchmarks}(b) gives grouped CD times of $58$--$70\,\tau_{\rm base}$ versus $188$--$232\,\tau_{\rm base}$ for serial JA. At each scale, the reduction between the shortest circuits found for the two methods lies between $67.0\%$ and $71.3\%$. Integer circuit lengths and local searches can make these costs nonmonotone in amplitude. Independent Direct Fourier coefficient evaluation followed by the same grouped completion and recovery gives identical CD costs at all eleven scales. The reduction compares one grouped circuit with eight single-tone circuits run in sequence. It does not distinguish the two coefficient-evaluation routes. Appendix~\ref{app:dense-eight} specifies the searches and independent replay.

\paragraph{Fermi-resonant vibrational energy exchange.}
The bounded interaction of Eq.~\eqref{eq:fermi-H} models intramolecular vibrational energy redistribution (IVR) with the coupling and detuning of the carbon dioxide Fermi dyad~\cite{McCluskeyStoker2006CO2}. We propagate $\ket0_a\ket{1,0}$ for $1$~ps using $0.25$~fs Strang steps and a persistent synthesis ancilla. Figure~\ref{fig:fermi-grouped}(b) shows near-resonant energy exchange and its suppression upon detuning, with maximum reference populations $P_{02}=0.940$ and $0.171$, respectively. Couplings outside the calibrated dyad shift the exchange by a few percent, as checked in Appendix~\ref{app:grouped-benchmarks}.

We measure accuracy by the joint trace distance $D(t)=\tfrac12\|\rho(t)-\rho_{\rm ref}(t)\|_1$, with the reference ancilla in $\ket0$. The noisy dynamics below use the same metric. For these noiseless pure states we report its upper bound, the state distance $\|\Psi_{\rm comp}(t)-\ket0\psi_{\rm ref}(t)\|_2$ maximized on the $2$~fs output grid, which also includes the global phase. Accurate populations alone can hide coherent ancilla error: the loosest grouped-GJA candidate has near-resonant population error $0.00231$ but state distance $0.256$. This input-specific diagnostic does not bound an arbitrary-input channel.

The circuit in Fig.~\ref{fig:fermi-grouped}(b) keeps this distance, and hence $D$, below $10^{-2}$ for both detunings. Table~\ref{tab:fermi-compilers} and Appendix~\ref{app:grouped-benchmarks} give the compiler-cost comparison. We next include finite pulse durations and noise over a shorter interval.

\subsection{Short-window noisy dynamics}
\label{sec:short-window-dynamics}

Figure~\ref{fig:fermi-grouped}(c) follows the same initial state and unsplit reference Hamiltonian over $160$~fs. We select circuits on the first $80$~fs, which consist of sixteen $5$~fs Strang steps with three serial GJA groups and harmonic half-steps per step, and one ancilla is shared throughout. The compiled potential uses a doorway-matched representation with $d=0.75$, and its discrepancy from the $d_{\rm reg}=0.25$ reference is included in the final error. For the noisy states we evaluate $D(t)$ directly. The noiseless selection requires $D\le0.1$ and population, complex-coherence, and ancilla-leakage errors each at most $0.05$ on the $5$~fs output grid for both detunings.

We compare two circuits that pass these noiseless criteria. The baseline uses group degrees $(K_+,K_-,K_s)=(4,4,5)$ with default completion. The rule-based circuit uses $(3,3,5)$ with default completion and the step-alternating completion phase of Sec.~\ref{sec:composition}, without any search over roots or phases. Default completion alone fails at $(3,3,5)$ with $D=0.112$, and no cheaper allocation passes with the rule (Appendix~\ref{app:short-window-dynamics}). The group order is common to both circuits.

The baseline contains $672$ modewise CD pulses and $432$ rotations, and the rule-based circuit contains $544$ and $368$. Their modeled durations are $188.8$ and $153.3~\mu\mathrm{s}$, so the rule shortens the circuit by $18.8\%$. Both circuits use the same local Pauli-frame selection rule, absorbed into existing rotations and signed CD pulses with identity boundary frames.

Relaxation, pure dephasing, and mode loss act during every pulse, including finite-duration rotations. The noise-rate multiplier $\eta_{\rm n}$ scales the rates of illustrative circuit-QED lifetimes $T_1=T_\phi=60~\mu\mathrm{s}$ and $T_{\rm cav}=160~\mu\mathrm{s}$, which lie within the ranges reported for superconducting bosonic-QSP control~\cite{Fontaine2026Programming}. The noisy criterion allows $D\le0.2$ and keeps the other three thresholds at $0.05$ on the same output grid.

We summarize each circuit by its final joint-state fidelity,
\begin{equation}
F=\bra{0,\psi_{\rm ref}(80~\mathrm{fs})}\rho(80~\mathrm{fs})\ket{0,\psi_{\rm ref}(80~\mathrm{fs})}.
\label{eq:short-window-fidelity}
\end{equation}
This overlap includes the ancilla and all representation, synthesis, propagation, and noise errors. Without noise, the two near-resonant fidelities are $0.9921$ and $0.9955$, so the rule also reduces the coherent serial error.

At $\eta_{\rm n}=0.03$, the baseline and rule-based circuits give $F=0.9289$ and $0.9331$. At $\eta_{\rm n}=0.1$, they give $0.7976$ and $0.8037$, so the rule-based circuit keeps a higher fidelity at every tested rate (Table~\ref{tab:staged-ablation}). When the same circuit continues to $160$~fs [Fig.~\ref{fig:fermi-grouped}(c)], the same overlap evaluated at $160$~fs falls from $0.933$ to $0.854$ at $\eta_{\rm n}=0.03$. The noiseless circuit keeps $0.989$, so the additional loss comes mainly from noise. The transferred population then reaches $P_{02}=0.83$ instead of the reference $0.93$. Over the full $160$~fs the same circuit no longer meets the accuracy criteria. At $\eta_{\rm n}=0.03$ its trace distance stays below $0.2$ ($0.186$), but its population and coherence errors reach $0.104$ and $0.094$. Without noise the coherence error already reaches $0.077$ and the trace distance $0.103$, so a longer window requires a new degree allocation. These $160$~fs results use $N_F=24$ Fock states per mode. Their mixed-state difference from $N_F=32$ grows from $0.003$ at $80$~fs to $0.008$ at $160$~fs, above the $0.005$ cutoff target but small compared with these errors.

Both circuits narrowly exceed the coherence-error threshold at $\eta_{\rm n}=0.03$, with near-resonant errors $0.0521$ and $0.05004$, while their trace-distance, population, and leakage errors meet the noisy thresholds. Appendix~\ref{app:short-window-dynamics} reports the cutoff checks.

At nominal rates the two fidelities fall to $0.1586$ and $0.1688$, and both fail the accuracy criteria. The low-noise point $\eta_{\rm n}=0.03$ corresponds to $T_1=T_\phi=2$~ms and $T_{\rm cav}=5.33$~ms under the illustrative model. These calculations concern the specified input and effective pulse model. Hardware assessment requires internal ECD dynamics, calibrated envelopes, state preparation, readout, and their overheads.

\section{Discussion and conclusion}
\label{sec:conclusion}

Our serial GJA compiler turns real sinusoidal models of molecular interactions into native oscillator--qubit circuits by completing and factoring each compatible group's phase response. The real potential model preserves unit modulus before synthesis, separating potential error from synthesis-induced ancilla leakage and allowing the model to be reused across evolution times. Completed circuits share one ancilla without intermediate measurement or reset. GJA synthesis implements multiharmonic sinusoidal vibronic interactions at a cost that grows near-optimally with the interaction strength and the precision. Its cost matches our lower bounds up to a constant factor for a single sinusoidal term and to leading order in the interaction strength for a group of harmonics.

The benchmarks test finite-circuit costs and coherent and noisy dynamics under their stated error criteria. The serial HF--UF comparison illustrates joint selection of potential and gate degrees, while the multiharmonic benchmark isolates savings from grouping compatible tones. Among the programmable constructions compared in Fig.~\ref{fig:compiler-benchmarks} and Table~\ref{tab:fermi-compilers}, GJA gives the shortest or equal circuits in every tested case. These finite searches provide achievable costs rather than global optima. In the simplified carbon dioxide model, one persistent ancilla supports 4000 time steps over one picosecond, with joint-state error below $10^{-2}$ on the output grid for both detunings. Over a shorter $80$~fs window with pulse-level noise, alternating the completion phase between steps admits a lower degree allocation. Under common input-specific ideal criteria, it shortens the modeled circuit by $18.8\%$ and raises the fidelity at every tested noise rate. The noisy dynamics approach the accuracy criteria only at reduced noise rates, which identifies a conditional operating point rather than demonstrated hardware feasibility.

These results concern phase-gate compilation under exact signal access and a modeled pulse clock, rather than an end-to-end molecular simulation on hardware. Once validated potentials, state preparation, and measurement complete the workflow, the noise level identified here serves as a concrete target for simulating anharmonic multimode molecular quantum dynamics on near-term oscillator--qubit processors.

\begin{acknowledgments}
This work was supported by the National Research Foundation of Korea (NRF) funded by the Ministry of Science and ICT (RS-2023-NR068116, RS-2025-03532992), by the Korea Health Industry Development Institute (KHIDI) funded by the Ministry of Health and Welfare through the Korean ARPA-H Project (RS-2025-25456722), by the Institute for Information \& Communications Technology Promotion (IITP) funded by the Korean government (MSIP) (No. 2019-0-00003), and by the Yonsei University Research Fund (2025-22-0140).

AI tools from OpenAI and Anthropic assisted with construction details from our intuitions, theoretical analysis, numerical simulations, figure preparation, and manuscript editing. The authors reviewed the outputs and take responsibility for the final manuscript.
\end{acknowledgments}

\appendix

\section{Coherent compiler construction}
\label{smsec:ja-details}

This appendix gives the coefficient tails, coherent completion, and phase-only construction used by Theorem~\ref{thm:jaoqgqsp-trig}.

\subsection{Single-tone truncation bounds}

The Jacobi--Anger tail and its joint angle--precision inversion are standard consequences of Lemmas 57 and 59 in the arXiv version of Ref.~\cite{GilyenSuLowWiebe2019}.  We state them here without reproducing that proof.

Let $S^{c}_{K,\Lambda}$ be the cosine response of Eq.~\eqref{eq:ja-raw-cosine-response}, and let $S^{s}_{K,\Lambda}(z)=\sum_{\abs n\le K}(-1)^nJ_n(\Lambda)z^n$ be its sine counterpart.
\paragraph{Jacobi--Anger tail}
\label{prop:ja-tail}
For real $\Lambda$ and $K\ge0$,
\begin{align}
\sup_{\theta\in\bbR}\abs{\ee^{-\ii\Lambda\cos\theta}-S^{c}_{K,\Lambda}(\ee^{\ii\theta})}
&\le 2\sum_{n>K}\abs{J_n(\abs\Lambda)},\nonumber\\
\sup_{\theta\in\bbR}\abs{\ee^{-\ii\Lambda\sin\theta}-S^s_{K,\Lambda}(\ee^{\ii\theta})}
&\le 2\sum_{n>K}\abs{J_n(\abs\Lambda)}.
\label{eq:bessel-tail}
\end{align}
The inequality $\abs{J_n(\Lambda)}\le (\abs\Lambda/2)^n/n!$ then shows that, for every $0<\eps<1/2$, there is a truncation order
\begin{equation}
K=\calO\!\left(
1+\abs\Lambda+
\frac{\log(1/\eps)}
{\log\!\left(\mathrm e+\log(1/\eps)/\abs\Lambda\right)}
\right)
\label{eq:ja-degree}
\end{equation}
such that both truncation errors in Eq.~\eqref{eq:bessel-tail} are at most $\eps$, for $\Lambda\ne0$. At $\Lambda=0$ the response is exactly one and $K=0$, so the expression containing $1/|\Lambda|$ is not used.

The asymptotic form of the degree matches that of block-encoded Hamiltonian simulation~\cite{GilyenSuLowWiebe2019}.  The oracle lower bounds in that setting are derived from sparse-Hamiltonian reductions~\cite{BerryAhokasCleveSanders2007,BerryChildsCleveKothariSomma2014}.  The lower bound used here instead applies in the declared unitary-signal model.  That model was recently characterized by adversary bounds~\cite{Laneve2026AdversaryQSP}.

\subsection{Paired completion and coherent composition}
\begin{lemma}[From selected response to paired unitary]
\label{cor:ja-full-joint-completion}
Let $U(F,H)$ have the unitary block form of Eq.~\eqref{eq:ja-operator-block-form}, and let $F,H,V$ be functions of the same normal signal, with $V$ unitary and $\|F-V\|\le\eta<1$.
Then
\begin{align}
\|H\|&\le\sqrt{2\eta-\eta^2},\label{eq:ja-complement-from-selected}\\
\|U(F,H)-\widehat V\|^2
&=\|2I-F^\dagger V-V^\dagger F\|\le2\eta.
\label{eq:paired-paraunitary-exact-distance}
\end{align}
\end{lemma}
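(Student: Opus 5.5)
Because $F$, $H$, and $V$ are functions of one normal signal $W$, they all commute with each other and with their adjoints. The plan is therefore to carry out every step in the joint functional calculus, where each operator inequality reduces to a pointwise scalar inequality on the spectrum of $W$. The operator norm of a function of $W$ is the supremum of its modulus over the spectrum.

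Unitarity of $U(F,H)$ in Eq.~\eqref{eq:ja-operator-block-form} gives the block identity $F^\dagger F+H^\dagger H=I$. Pointwise this reads $|f|^2+|h|^2=1$, with $f$, $h$, and $v$ the scalar symbols and $|v|=1$. The hypothesis gives $|f-v|\le\eta$, so by the reverse triangle inequality $|f|\ge1-\eta\ge0$. Hence
\[
|h|^2=1-|f|^2\le1-(1-\eta)^2=2\eta-\eta^2 .
\]
Taking the supremum over the spectrum yields Eq.~\eqref{eq:ja-complement-from-selected}.

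For the paired distance, set $E=U(F,H)-\widehat V$ with $\widehat V=\operatorname{diag}(V,V^\dagger)$, and compute $E^\dagger E$ directly. Since both $U$ and $\widehat V$ are unitary, $E^\dagger E=2I-U^\dagger\widehat V-\widehat V^\dagger U$. The block product is
\[
U^\dagger\widehat V=
\begin{pmatrix}F^\dagger V& H^\dagger V^\dagger\\ -HV & FV^\dagger\end{pmatrix}.
\]
Adding the adjoint and using commutativity, the off-diagonal blocks $H^\dagger V^\dagger-H^\dagger V^\dagger$ cancel. This cancellation needs $V$ and $H$ to commute, which is exactly where the common-signal hypothesis is used. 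The diagonal blocks become $2I-F^\dagger V-V^\dagger F$ and $2I-FV^\dagger-VF^\dagger$. These two blocks are equal as functions of $W$: both have symbol $2-2\operatorname{Re}(\bar f v)$. So $E^\dagger E=I_2\otimes(2I-F^\dagger V-V^\dagger F)$, and $\|E\|^2=\|E^\dagger E\|$ gives the equality in Eq.~\eqref{eq:paired-paraunitary-exact-distance}.

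For the final inequality, the pointwise value is $2-2\operatorname{Re}(\bar f v)$. Use
\[
|f-v|^2=|f|^2+1-2\operatorname{Re}(\bar f v)\le\eta^2
\]
together with $|f|^2\ge(1-\eta)^2$. This gives $2-2\operatorname{Re}(\bar f v)\le\eta^2+1-|f|^2\le\eta^2+1-(1-\eta)^2=2\eta$. This quantity is nonnegative, so the operator norm equals its supremum and is at most $2\eta$.

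The main obstacle is the block computation. One must check the sign conventions of Eq.~\eqref{eq:ja-operator-block-form} so that the cross terms really cancel. One must also confirm that $2I-FV^\dagger-VF^\dagger$ and $2I-F^\dagger V-V^\dagger F$ coincide, which holds only because normality of the signal makes all symbols commute. Everything else is scalar algebra on the spectrum.
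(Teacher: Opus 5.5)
Your proof is correct and follows the paper's argument. Both use $F^\dagger F\ge(1-\eta)^2I$ with unitarity to bound the complement, then compute $(U-\widehat V)^\dagger(U-\widehat V)$ blockwise, where commutativity cancels the off-diagonal terms and makes both diagonal blocks equal to $2I-F^\dagger V-V^\dagger F$. The only difference is the final inequality: the paper writes this operator as $(V-F)^\dagger V+V^\dagger(V-F)$ and applies the triangle inequality, which does not need $|f|\ge1-\eta$. Your pointwise identity $2-2\operatorname{Re}(\bar f v)=|f-v|^2+1-|f|^2$ reaches the same $2\eta$ and also exhibits the paired error as selected error plus leakage.
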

\begin{proof}
The selected error gives $F^\dagger F\ge(1-\eta)^2I$, so unitarity bounds the complement. In $(U-\widehat V)^\dagger(U-\widehat V)$, commutativity cancels the off-diagonal terms and identifies both diagonal blocks with $2I-F^\dagger V-V^\dagger F$. This expression equals $(V-F)^\dagger V+V^\dagger(V-F)$, which proves the bound.
\end{proof}

For any ordered unitary words, telescoping gives
\begin{equation}
\|U_B\cdots U_1-\widehat V_B\cdots\widehat V_1\|
\le\sum_b\|U_b-\widehat V_b\|.
\label{eq:ja-full-joint-composition}
\end{equation}
The words may therefore share the ancilla without measurement. At fixed $F$, every valid spectral complement has the same pointwise magnitude and the same paired-unitary error.

\subsection{Proof of the single-tone upper bound}

\begin{proof}[Single-tone upper bound in Theorem~\ref{thm:jaoqgqsp-trig}]
Take the raw tail tolerance $\delta=\eps^2/16$ in Eq.~\eqref{eq:ja-degree}. This gives a truncated sine response $A_0+\ii C_0$ of the stated degree, with $A_0$ an even cosine series and $C_0$ an odd sine series. Division by $1+\delta$ gives $A_1+\ii C_1$ with selected error at most $2\delta$, and Theorem~2 of Ref.~\cite{LowChuang2017} supplies the remaining real components $B_1,D_1$. The rotation $A_2=A_1\cos\rho+B_1\sin\rho$ with $\cos\rho=A_1(0)$ and $\sin\rho=B_1(0)$ sets $A_2(0)=1$ and keeps the parity and the maximum harmonic. Since $|B_1|\le2\sqrt\delta/(1+\delta)$, it changes $A_1$ by at most $6\delta$, so $F=A_2+\ii C_1$ has selected error at most $8\delta=\eps^2/2$. These components satisfy the achievable-pair conditions $A_2^2+C_1^2\le1$, $A_2(0)=1$, and the Fourier parities, so Theorem~1 of that reference gives $2K$ equatorial signal rotations with phase-only control, and fixed input and output Hadamards map its $\ket+$ block to the paired $\sigma_z$ basis without changing the CD count. The word applied to $P\otimes\exp[\ii(\bmu\cdot\bQ+\phi I)]$ gives the sine target, and applied to $\ii P\otimes\exp[\ii(\bmu\cdot\bQ+\phi I)]$ it gives the cosine target. The remaining frame $P^K$, the sign $(-1)^K$ on the $P=-1$ sector, is removed by one $P^{K\bmod2}$ boundary correction. All blocks are functions of the same normal joint signal, so Lemma~\ref{cor:ja-full-joint-completion} gives paired-unitary error at most $\eps$, and $\log(1/\delta)=2\log(1/\eps)+\log16$ gives Eq.~\eqref{eq:jaoqgqsp-query}.
\end{proof}

\subsection{Generalized Jacobi--Anger coefficients and one-signal synthesis}
\label{smsec:generalized-ja}

The generalized Jacobi--Anger identity is classical~\cite{DattoliEtAl1992GeneralizedBessel,DattoliTorreLorenzutta1998,KuklinskiHague2021}.  Throughout this subsection, the tone list is finite, amplitudes and offsets are real, and the quadratures strongly commute.

\paragraph{Classical expansion and frequency grouping.}
For arbitrary real ridge vectors $\bmu_r$, multiplication of the ordinary Jacobi--Anger series gives
\begin{align}
&\exp\!\left[-\ii\sum_{r=1}^{R}a_r
\sin(\bmu_r\cdot\bm q+\phi_r)\right]\nonumber\\
&\quad=\sum_{\bm k\in\mathbb Z^R}
\left[\prod_{r=1}^{R}J_{k_r}(-a_r)\ee^{\ii k_r\phi_r}\right]
\ee^{\ii(\sum_r k_r\bmu_r)\cdot\bm q}.
\label{eq:generalized-ja-arbitrary-ridges}
\end{align}
Since $\sum_{k\in\mathbb Z}|J_k(a)|<\infty$ for every finite real $a$, the series converges absolutely and uniformly on real $\bm q$ and may be regrouped. The choice $\bmu_r=\omega_r\bm\xi$ gives Eq.~\eqref{eq:generalized-ja-multitone}. If $\bmu_r=m_r\bm\nu$ with integers $m_r$ and $\bm\nu=\beta\bm\xi$, collecting $n=\sum_rm_rk_r$ gives Eqs.~\eqref{eq:generalized-ja-response}--\eqref{eq:generalized-ja-coefficients}. Negative $m_r$ and constant terms are absorbed into amplitudes, offsets, and separate phases, so we take $1\le m_r\le L_{\rm h}$.

\begin{lemma}[Finite-harmonic generalized Jacobi--Anger tail]
\label{lem:generalized-ja-tail}
Let $G(\ee^{\ii\theta})=\exp[-\ii h(\theta)]$, where
$h(\theta)=\sum_{r=1}^{R}a_r\sin(m_r\theta+\phi_r)$,
and let $C_n$ be its Laurent coefficients.  With $L_{\rm h}=\max_rm_r$, define
\begin{equation}
S=\sum_r m_r|a_r|,\qquad
B(\rho)=\sum_r|a_r|\sinh(m_r\rho).
\end{equation}
For every $\rho>0$ and $K\ge0$,
\begin{align}
|C_n|&\le\exp[B(\rho)-|n|\rho],\nonumber\\
\sum_{|n|>K}|C_n|
&\le\frac{2\exp[B(\rho)-(K+1)\rho]}{1-\ee^{-\rho}}.
\label{eq:generalized-ja-contour-tail}
\end{align}
For $S>0$ and $0<\delta<1/2$, a sufficient degree for this tail to be at most $\delta$ is
\begin{equation}
K=\calO\!\left(S+
\frac{L_{\rm h} T_\delta}{\log(\mathrm e+L_{\rm h} T_\delta/S)}\right),
\quad T_\delta=\log(4L_{\rm h}/\delta).
\label{eq:generalized-ja-tail-degree}
\end{equation}
If $S=0$, the response is the identity and $K=0$.
\end{lemma}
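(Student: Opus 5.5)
The plan is to use a Cauchy contour shift on the Laurent coefficients of $G$. Since $h$ is a trigonometric polynomial, $G(z)=\exp[-\ii h(\theta)]$ with $z=\ee^{\ii\theta}$ extends to an entire function of $\theta$. Equivalently, $G$ is holomorphic on $\mathbb C\setminus\{0\}$ as a function of $z$. Hence
\[
C_n=\frac{1}{2\pi}\int_0^{2\pi}G(\ee^{\ii(\theta+\ii s)})\,\ee^{-\ii n(\theta+\ii s)}\,d\theta
\]
for every real $s$.

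For $n>0$ we take $s=-\rho$, and for $n<0$ we take $s=+\rho$. In both cases the factor $|\ee^{-\ii n(\theta+\ii s)}|=\ee^{ns}$ becomes $\ee^{-|n|\rho}$. For the other factor, $|G|=\exp[\operatorname{Im}h(\theta+\ii s)]$. Writing $\sin(m_r(\theta\pm\ii\rho)+\phi_r)=\sin(\cdot)\cosh(m_r\rho)\pm\ii\cos(\cdot)\sinh(m_r\rho)$ shows
\[
|\operatorname{Im}h|\le\sum_r|a_r|\sinh(m_r\rho)=B(\rho).
\]
This gives $|C_n|\le\exp[B(\rho)-|n|\rho]$ for every $\rho>0$. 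The tail bound then follows by summing two geometric series:
\[
\sum_{|n|>K}|C_n|\le 2\ee^{B(\rho)}\frac{\ee^{-(K+1)\rho}}{1-\ee^{-\rho}}.
\]
The case $S=0$ is trivial, since then every $a_r=0$ and $G\equiv1$.

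The main work is the degree estimate, which comes from choosing $\rho$ well. We need $B(\rho)-(K+1)\rho+\log\bigl(2/(1-\ee^{-\rho})\bigr)\le\log\delta$. We bound $\sinh(m_r\rho)\le m_r\rho\,\ee^{m_r\rho}\le m_r\rho\,\ee^{L_{\rm h}\rho}$, which gives $B(\rho)\le S\rho\,\ee^{L_{\rm h}\rho}$. We also use $1/(1-\ee^{-\rho})\le 1+1/\rho$. We then split into two regimes, comparing $L_{\rm h}T_\delta$ with $S$.

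In the amplitude-dominated regime, $L_{\rm h}T_\delta\lesssim S$. We choose $\rho=1/L_{\rm h}$, so that $B\le \ee S/L_{\rm h}$. The condition then reduces to $(K+1)/L_{\rm h}\ge \ee S/L_{\rm h}+\log(2(1+L_{\rm h})/\delta)$, which $K=\calO(S+L_{\rm h}T_\delta)=\calO(S)$ satisfies. This is where the factor $4L_{\rm h}$ inside $T_\delta$ absorbs the $1+1/\rho$ prefactor.

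In the precision-dominated regime, $L_{\rm h}T_\delta\gg S$. We choose $\rho=L_{\rm h}^{-1}\log(\mathrm e+L_{\rm h}T_\delta/S)$, so that $S\rho\,\ee^{L_{\rm h}\rho}\lesssim\rho(S\mathrm e+L_{\rm h}T_\delta)$. Taking $K+1\ge c\bigl(S+L_{\rm h}T_\delta/\log(\mathrm e+L_{\rm h}T_\delta/S)\bigr)$ then makes $(K+1)\rho$ exceed $B(\rho)+T_\delta+\log(1+1/\rho)$. The remaining check is that $\log(1+1/\rho)\le\log(1+L_{\rm h})$ is covered by $T_\delta$.

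The step I expect to be the obstacle is this balancing. The constants must be tracked uniformly in $L_{\rm h}$, $S$ and $\delta$ so that a single $\calO$ covers both regimes. This is essentially the single-tone argument of Eq.~\eqref{eq:ja-degree}, rescaled by $L_{\rm h}$.
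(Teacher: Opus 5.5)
\textbf{Gap in the precision-dominated regime.} Your contour-shift derivation of $|C_n|\le\exp[B(\rho)-|n|\rho]$, the geometric tail sum, and the $S=0$ case are correct and match the paper's. The problem is the degree estimate in the precision-dominated regime, where the logarithmic denominator is the whole content of the lemma.

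The bound $\sinh(m_r\rho)\le m_r\rho\,\ee^{L_{\rm h}\rho}$ loses a factor of order $L_{\rm h}\rho$ for the top harmonic. At your shift $t:=L_{\rm h}\rho=\log(\mathrm e+y)$, with $y=L_{\rm h}T_\delta/S$, this factor is unbounded. Concretely, your bound gives $B(\rho)\le t\,(\mathrm eS/L_{\rm h}+T_\delta)$. But $K+1=c\,(S+L_{\rm h}T_\delta/t)$ yields only $(K+1)\rho=c\,(tS/L_{\rm h}+T_\delta)$.

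The term $tT_\delta$ cannot be dominated by $cT_\delta$ for any fixed $c$ once $y\to\infty$. The $S$-term does not rescue it, since $c\,tS/L_{\rm h}=c\,tT_\delta/y$ is negligible there. So your claim that this $K$ makes $(K+1)\rho$ exceed $B(\rho)+T_\delta+\log(1+1/\rho)$ is false. Your estimates actually prove only $K=\calO(S+L_{\rm h}T_\delta)$. For example, for a single tone at fixed $\Lambda$ this gives $\log(1/\delta)$ rather than $\log(1/\delta)/\log\log(1/\delta)$.

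\textbf{The fix.} The missing ingredient is the paper's convexity bound. Since $\sinh x/x$ is nondecreasing for $x>0$, $\sinh(m_r\rho)\le(m_r/L_{\rm h})\sinh(L_{\rm h}\rho)$. Hence, at the same $t$,
\[
B(\rho)\le(S/L_{\rm h})\sinh t\le(\mathrm eS/L_{\rm h}+T_\delta)/2.
\]
Because $t\ge1$, you have $1-\ee^{-\rho}\ge1/(2L_{\rm h})$. The tail is then at most $4L_{\rm h}\exp[(S/L_{\rm h})\sinh t-(K+1)t/L_{\rm h}]$. Choosing $K=\lceil2S+2L_{\rm h}T_\delta/t\rceil$ makes the exponent at most $-T_\delta$, so the tail is at most $\delta$. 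This covers both of your regimes at once, with no case split.

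Alternatively, you could keep your cruder bound and lower the shift to $L_{\rm h}\rho=\log(\mathrm e+y/\log(\mathrm e+y))$. That quantity is still comparable to $\log(\mathrm e+y)$ uniformly in $y\ge0$, and the smaller $\ee^{L_{\rm h}\rho}$ absorbs the extra factor of $t$.
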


\begin{proof}
The finite trigonometric polynomial is entire in $\theta$.  For real $u$,
$|\operatorname{Im} h(u\pm\ii\rho)|\le B(\rho)$.
Shift the Fourier-coefficient contour by $-\ii\rho$ for $n>0$ and by $+\ii\rho$ for $n<0$.  Periodicity cancels the vertical sides and yields the coefficient bound, which is the strip bound of Ref.~\cite{HongOQGQSP2026} applied to a sinusoidal profile.  The sum of the two geometric tails yields Eq.~\eqref{eq:generalized-ja-contour-tail}.

Convexity of $\sinh$ on the positive real axis gives
$B(\rho)\le(S/L_{\rm h})\sinh(L_{\rm h}\rho)$.  Put
$a=S/L_{\rm h}$, $t=L_{\rm h}\rho=\log(\mathrm e+T_\delta/a)\ge1$.
Since $1-\ee^{-t/L_{\rm h}}\ge1-\ee^{-1/L_{\rm h}}\ge1/(2L_{\rm h})$,
the tail is at most
\begin{equation}
4L_{\rm h}\exp\!\left[a\sinh t-(K+1)t/L_{\rm h}\right].
\end{equation}
Also $a\sinh t\le(\mathrm e a+T_\delta)/2$.
The choice $K=\lceil2S+2L_{\rm h} T_\delta/t\rceil$ makes the exponent at most $-T_\delta$, because $t\ge1$ and $2>\mathrm e/2$.  This proves Eq.~\eqref{eq:generalized-ja-tail-degree}.
\end{proof}

\paragraph{Explicit superexponential regime.}
For $S>0$, put $n_*=|n|>S/2$. Then
\begin{equation}
|C_n|\le\left(\frac{\mathrm e S}{2n_*}\right)^{n_*/L_{\rm h}}.
\label{eq:generalized-ja-superexponential-main}
\end{equation}

To obtain this bound, choose
$\rho=L_{\rm h}^{-1}\log(2n_*/S)>0$ in Eq.~\eqref{eq:generalized-ja-contour-tail}.
The estimate $B(\rho)\le(S/L_{\rm h})\sinh(L_{\rm h}\rho)\le(S/2L_{\rm h})\ee^{L_{\rm h}\rho}=n_*/L_{\rm h}$
gives Eq.~\eqref{eq:generalized-ja-superexponential-main}.
Likewise, setting $N=K+1>S/2$ gives the explicit uniform tail majorant
\begin{equation}
\tau_K\le
\frac{2}{1-(S/2N)^{1/L_{\rm h}}}
\left(\frac{\mathrm e S}{2N}\right)^{N/L_{\rm h}},
\label{eq:generalized-ja-superexponential-tail}
\end{equation}
These are upper bounds, not coefficientwise asymptotic equalities. Symmetry, a nonprimitive frequency lattice, or cancellation can make individual coefficients vanish, and the rate is not uniform when amplitudes or the maximum harmonic grow with $K$.

\paragraph{One completion and one inverse factorization.}
\emph{Step 1: contractive selected response.} For the response $G$ of Eq.~\eqref{eq:generalized-ja-response}, let $T_K(z)=\sum_{|n|\le K}C_nz^n$ and let $\tau_K$ bound the omitted absolute tail. Then $|T_K|\le1+\tau_K$ on the unit circle, and $F_K=T_K/(1+\tau_K)$ obeys $|F_K|\le1$ and $\sup_{|z|=1}|F_K-G|\le2\tau_K$.
\emph{Step 2: unitary completion.} The nonnegative trigonometric polynomial $1-|F_K|^2$ has degree at most $2K$.  Fej\'er--Riesz factorization gives a polynomial $q_K$ of degree at most $2K$ with $|q_K|^2=1-|F_K|^2$.  Thus $H_K(z)=z^{-K}q_K(z)$ has support in $[-K,K]$ and completes $F_K$ to the paired determinant-one Laurent matrix
\begin{equation}
\mathcal U_K(z)=
\begin{pmatrix}
F_K(z)&-H_K(z)^*\\
H_K(z)&F_K(z)^*
\end{pmatrix},\qquad |z|=1.
\label{eq:generalized-ja-paired-matrix}
\end{equation}
Here $p^*(z):=\overline{p(1/\bar z)}$ denotes Laurent para-conjugation, equal to pointwise conjugation on the circle.  

\emph{Step 3: factor into balanced queries.} Apply the full-matrix Laurent-factorization theorem of Ref.~\cite{Haah2019QSP} to the even loop $\mathcal U_K(t^2)$, where $t=\ee^{\ii\theta/2}$.  Its degree is at most $2K$. It therefore factors into a constant $\mathrm{SU}(2)$ matrix and at most $2K$ primitives $t\Pi_j+t^{-1}(I-\Pi_j)$. Here $\Pi_j$ are rank-one ancilla projectors.  A constant rotation that diagonalizes each projector converts each primitive to $D(t)=\operatorname{diag}(t,t^{-1})$.  This factors the chosen full matrix, including its complement, rather than only the selected response.  Its even degree gives an even query count.  The identity $D(t)(\ii X)D(t)(-\ii X)=I$ permits padding to $2K$ without a residual phase.  The resulting word is
\begin{equation}
\begin{split}
\mathcal U_K(W_{\bm\nu})&=R_{2K}D R_{2K-1}\cdots D R_0,\\
D&=\operatorname{diag}(W_{\bm\nu}^{1/2},W_{\bm\nu}^{-1/2}),
\end{split}
\label{eq:generalized-ja-word}
\end{equation}
with constant ancilla rotations $R_j\in\mathrm{SU}(2)$. This existence statement, independent of the numerical recovery backend, allows generic rotations. Phase-only recovery imposes additional symmetry and endpoint conditions (Appendix~\ref{app:processor-conventions}). A fixed basis change gives the native $X$-basis conditional displacements.  The half-powers here are the physically specified displacements $\exp(\pm\ii\bm\nu\cdot\bQ/2)$.

Functional calculus transfers the uniform scalar bounds to operator norm for $W_{\bm\nu}$.  By Lemma~\ref{cor:ja-full-joint-completion}, the distance from Eq.~\eqref{eq:generalized-ja-paired-matrix} to $\operatorname{diag}(G,G^*)$ is at most $2\sqrt{\tau_K}$.  The choice $\tau_K\le\eps^2/4$ proves the grouped scalar upper bound in Theorem~\ref{thm:jaoqgqsp-trig}. Conjugation by the boundary lift of Eq.~\eqref{eq:boundary-pauli-circuit} inserts the common electronic $P$ and preserves both the operator-norm error and the scalar query count.  Numerical recovery additionally budgets coefficient, contractivity, completion, and inverse-factorization errors as described in Appendix~\ref{app:angle-recipe}.

\paragraph{Global multi-index GJA and partially serial synthesis.}\label{app:multidirection-response}\label{app:multivariable-ja-tail}
Equation~\eqref{eq:generalized-ja-arbitrary-ridges} describes the complete nonseparable response. On a common frequency lattice, collecting equal vector frequencies gives global multi-index Fourier coefficients $C_{\bm n}$. These scalar coefficients and their truncation tails do not themselves supply a multivariable unitary completion or its native factorization. We leave direct global completion and recovery, with guaranteed approximation error and CD action, as an open extension. The constructive decision algorithm of Ref.~\cite{ItoEtAl2026MQSPDecision} tests a supplied polynomial pair within a specified multivariable QSP model and factors it when implementable. It does not construct the pair from scalar GJA coefficients.

Partially serial GJA partitions the tones into commensurate groups with a common $P$, completes each group as above, and concatenates the completed words on the persistent ancilla. The paired errors add by unitary telescoping (Sec.~\ref{sec:resources}), and the representation error adds as in Eq.~\eqref{eq:representation-leakage-separation}. For two completed groups in the convention of Eq.~\eqref{eq:generalized-ja-paired-matrix}, $\mathcal U_{\rm tot}=\mathcal U_2\mathcal U_1$ extends Eq.~\eqref{eq:serial-completion-interference} to the complementary block,
\begin{equation}
\begin{split}
F_{\rm tot}&=F_2F_1-H_2^*H_1,\\
H_{\rm tot}&=H_2F_1+F_2^*H_1 .
\end{split}
\label{eq:serial-complement-interference}
\end{equation}
Complementary phases can therefore change the final leakage at fixed group responses. Concatenation does not by itself reduce CD action, and serial and global synthesis have no universal leakage ordering.

\section{Query and displacement-action bounds}
\label{app:proofs}
This appendix proves the single-tone lower bound of Theorem~\ref{thm:jaoqgqsp-trig} and the bounds of Theorem~\ref{prop:action-optimality}. The latter separates a fixed base-query model from a single-ancilla real-kick model, and each proof keeps its own assumptions.

\subsection{Single-tone query lower bound}
The processor class is that of Theorem~\ref{thm:jaoqgqsp-trig}: each nonsignal gate acts on finite signal-processing ancillas and is independent of the signal, and in the joint-signal class these gates preserve every eigenspace of $P$. A power $W^m$ is charged $|m|$ calls. Since $\mathrm{diag}(z^{1/2},z^{-1/2})=z^{-1/2}\,\mathrm{diag}(z,1)$, a word with $2K$ balanced queries has the same $2K+1$ Laurent coefficients as one with $2K$ controlled calls, so the upper and lower bounds count the same resource. The fixed unnormalized selected response $F_Q$ satisfies $\sup_{\theta}|F_Q(e^{\ii\theta})-e^{-\ii\Lambda\cos\theta}|\le\eps$. Adaptive measurement, normalized postselection, a finite signal spectrum, and free power queries define different models.

\begin{proof}[Single-tone lower bound in Theorem~\ref{prop:ja-single-ridge-lower}]
For the Pauli-conditioned transfer, choose any nonempty eigenspace of $P$ with eigenvalue $p\in\{+1,-1\}$. Every non-query gate preserves this sector. The joint signal of Eq.~\eqref{eq:joint-pauli-signal} acts there as $pW_{\bmu,\phi}$. For $\bmu\ne\bm0$, its spectrum is the full unit circle. For $p=-1$, shifting the signal angle by $\pi$ converts $p\cos\theta$ to the scalar cosine target without changing Laurent degree or the uniform error. It therefore suffices to prove the scalar case.
Every matrix element of a signal-independent coherent control is constant in $z$. One controlled query has Laurent degree at most one.  Induction
over the queries therefore gives
\begin{equation}
F_Q(z)=\sum_{k=-Q}^{Q}f_kz^k.
\label{eq:ja-lower-laurent-form}
\end{equation}
This remains true with finite work ancillas and coherent control by stored branch labels because the selected response is a fixed matrix element of the resulting block product.

Consider the cosine target. The sine target follows by $F_Q(z)\mapsto F_Q(\ii z)$, which preserves degree.  The symmetrization of Eq.~\eqref{eq:ja-lower-laurent-form} under $z\mapsto z^{-1}$ does not increase the uniform error because the target is invariant under this map.  Since $z^k+z^{-k}=2T_k[(z+z^{-1})/2]$, the symmetrized response has the form $p_Q(\cos\theta)$ for a complex algebraic polynomial of degree at most $Q$. Thus
\begin{equation}
\sup_{x\in[-1,1]}\abs{p_Q(x)-\ee^{-\ii\Lambda x}}\le\eps.
\label{eq:ja-lower-algebraic-approx}
\end{equation}
For any integer $n>Q$, the $n$th forward difference of $p_Q$ on the grid
$-1,-1+2/n,\ldots,1$ vanishes, whereas
\begin{equation}
\abs{\Delta_{2/n}^n\ee^{-\ii\Lambda x}\big|_{x=-1}}
=2^n\abs{\sin(\Lambda/n)}^n.
\end{equation}
The binomial sum of the grid errors in Eq.~\eqref{eq:ja-lower-algebraic-approx} is at most $2^n\eps$.  Consequently,
\begin{equation}
\eps\ge\abs{\sin(\abs\Lambda/n)}^n
\qquad(n>Q).
\label{eq:ja-lower-finite-difference}
\end{equation}

The phase amplitude gives a second obstruction.  At the points
$(\pi/2+j\pi)/\abs\Lambda$, $j\in\mathbb Z$, that lie in $[-1,1]$, the imaginary part of the
target in Eq.~\eqref{eq:ja-lower-algebraic-approx} alternates between $1$ and
$-1$ up to an overall sign.  Since $\eps<1/2$, the real polynomial
$\operatorname{Im}p_Q$ has alternating signs at these points.  The number of intervening zeros implies
\begin{equation}
Q\ge\left\lfloor\frac{2\abs\Lambda}{\pi}\right\rfloor-1
\ge\frac{2\abs\Lambda}{\pi}-2.
\label{eq:ja-lower-sign-change}
\end{equation}

To combine the obstructions uniformly, put $a=|\Lambda|$, $\ell=\log(1/\eps)$, $d=\log(\mathrm e+\ell/a)$, and $b=\ell/d$. We claim
\begin{equation}
Q+1\ge(1+a+b)/58.
\label{eq:ja-lower-explicit-constant}
\end{equation}
If $a+b<57$, this follows from $Q+1\ge1$. Otherwise, when $b<8a$, Eq.~\eqref{eq:ja-lower-sign-change} gives $Q+1\ge2a/\pi-1>2(a+b)/(9\pi)-1\ge(1+a+b)/58$.
When $b\ge8a$, choose $n=\lfloor b/4\rfloor$. Since $a+b\ge57$ and $a\le b/8$, we have $b\ge152/3>8$, so
$n\ge b/8\ge a$ and $n\ge(a+b)/9$.
Also $n\le\ell/(4d)$ and $\log(2n/a)<d$, hence $n\log(2n/a)<\ell/4$. The inequality $\sin x\ge x/2$ on $[0,1]$ yields
$|\sin(a/n)|^n\ge(a/2n)^n>e^{-\ell/4}>\eps$.
If $Q<n$, this contradicts Eq.~\eqref{eq:ja-lower-finite-difference}. Otherwise $Q\ge n\ge(a+b)/9$ proves the claim.

\end{proof}

\subsection{Grouped-profile amplitude lower bound}
\label{app:grouped-amplitude-proof}

\begin{proof}[Proof of Theorem~\ref{prop:grouped-amplitude-optimality}(i)]
We first establish a degree obstruction.  Let $p(\ee^{\ii\theta})$ be a Laurent polynomial of degree at most $K$ that approximates $\ee^{-\ii\lambda h(\theta)}$ uniformly within $\eps<1$.  We claim
\begin{equation}
K+1\ge\frac{\lambda\Delta_h}{2\pi}.
\label{eq:grouped-oscillation-degree}
\end{equation}
If $\lambda\Delta_h\le2\pi$, the claim follows from $K\ge0$.  Otherwise choose a minimum and a maximum of $h$ and connect them by an oriented interval of length less than $2\pi$.  The levels
\begin{equation}
y_j=\frac{\pi/2+j\pi}{\lambda},\qquad j\in\mathbb Z,
\label{eq:grouped-sign-levels}
\end{equation}
that lie in $[\min h,\max h]$ number at least $\lambda\Delta_h/\pi-1$.  Begin at the minimum and select the first point where each successive level is attained.  Continuity orders these points along the interval even when $h$ is not monotone.  At them, the imaginary part of the target alternates between $+1$ and $-1$.  Since $\eps<1$, $\operatorname{Im}p$ has the same alternating signs and hence at least $\lambda\Delta_h/\pi-2$ distinct zeros between these points.

The function $\operatorname{Im}p$ is a real trigonometric polynomial of degree at most $K$.  It is not identically zero because it takes nonzero values at the selected points.  Multiplication of its Laurent representation by $z^K$ gives a nonzero algebraic polynomial of degree at most $2K$. It therefore has at most $2K$ distinct zeros on one period.  Thus
$2K\ge\lambda\Delta_h/\pi-2$, proving Eq.~\eqref{eq:grouped-oscillation-degree}. Harmonic cancellation is permitted and enters only through $\Delta_h>0$.

To pass to the balanced-query count, set $w=\ee^{\ii\theta/2}$.  Each query has Laurent degree one in $w$, including controlled calls, and every nonsignal control is constant in $w$.  Consequently, the fixed selected matrix element of an $N$-query circuit has degree at most $N$.  Average its values at $\theta$ and $\theta+2\pi$:
\begin{equation}
F_N^{\rm even}(\theta)=\tfrac12[F_N(\theta)+F_N(\theta+2\pi)].
\label{eq:grouped-even-response}
\end{equation}
This average is a proof device, not an additional circuit operation.  The target is $2\pi$-periodic, so the average retains the same error bound.  The shift sends $w$ to $-w$, removing the odd powers.  The result is therefore a Laurent polynomial in $z=w^2$ of degree at most $\lfloor N/2\rfloor$.  Equation~\eqref{eq:grouped-oscillation-degree} gives
\begin{equation}
\frac N2+1\ge\lfloor N/2\rfloor+1
\ge\frac{\lambda\Delta_h}{2\pi},
\end{equation}
which proves Eq.~\eqref{eq:grouped-amplitude-lower}.  The reduction uses the bound for all real $\theta$, including both lifts of the balanced signal.  A paired-unitary approximation controls its selected matrix element and thus obeys the same lower bound.

For the upper bound, write the fixed profile as
$h(\theta)=h_0+\sum_rb_r\sin(m_r\theta+\phi_r)$,
with $1\le m_r\le L_{\rm h}$.  Cosine terms are included by shifting their offsets.  The constant $h_0$ is implemented by a signal-independent ancilla $Z$ rotation.  Since $h$ is nonconstant,
$S_h=\sum_rm_r|b_r|>0$.
For the target $\lambda h$, the grouped construction replaces $S$ in Eq.~\eqref{eq:generalized-ja-degree} by $\lambda S_h$, yielding
\begin{equation}
\begin{split}
N_{\rm GJA}=2K
&=\calO\!\left(1+\lambda S_h+
\frac{L_{\rm h} T}{\log(\mathrm e+L_{\rm h} T/(\lambda S_h))}\right),\\
T&=\log(16L_{\rm h}/\eps^2).
\end{split}
\label{eq:grouped-amplitude-upper}
\end{equation}
At fixed $h$ and $\eps$, this is $\calO_{h,\eps}(\lambda)$ as $\lambda\to\infty$.  This estimate and Eq.~\eqref{eq:grouped-amplitude-lower} together prove Eq.~\eqref{eq:grouped-amplitude-optimal}.
\end{proof}

\subsection{Displacement-action optimality}
\label{app:action-optimality}
This subsection proves the action part of Theorem~\ref{prop:action-optimality}.

\paragraph{Word and explicit bounds.}

Let $h$ be a fixed, real, nonconstant trigonometric polynomial with integer harmonics, let $\bm\nu\ne\bm0$ and $\theta=\bm\nu\cdot\hat{\bm Q}$, and let $0<\eps<1$.  Consider words
\begin{equation}
U(\theta)=R_N\ee^{\ii\alpha_N\sigma_z^{(a)}\theta}R_{N-1}\cdots R_1\ee^{\ii\alpha_1\sigma_z^{(a)}\theta}R_0
\label{eq:action-word}
\end{equation}
with real $\alpha_j$, signal-independent $R_j\in\mathrm{SU}(2)$, and displacement action $A=|\bm\nu|\sum_j|\alpha_j|$.  If
\begin{equation}
\sup_{\theta\in\bbR}\norm{U(\theta)-\operatorname{diag}\bigl(\ee^{-\ii\lambda h(\theta)},\ee^{\ii\lambda h(\theta)}\bigr)}\le\eps
\label{eq:action-criterion}
\end{equation}
for $\lambda>0$, then
\begin{equation}
A\ge|\bm\nu|\left[\lambda\|h'\|_\infty-6\left(\frac{\lambda\|h'''\|_\infty\,\mu_3}{6(1-\eps)}\right)^{1/3}\right],
\label{eq:action-lower}
\end{equation}
where $\mu_3\simeq0.488$ is the third-moment constant of the averaging kernel below.  The grouped GJA word, truncated with the phase-aware tail bound derived below, satisfies Eq.~\eqref{eq:action-criterion} with
\begin{equation}
A=|\bm\nu|K\le|\bm\nu|\left[\lambda\|h'\|_\infty+\calO_h\!\left(\lambda^{1/3}\ell_\eps^{2/3}+\ell_\eps+1\right)\right],
\label{eq:action-upper}
\end{equation}
where $\ell_\eps=\ln[(2+\lambda)/\eps]$.  The minimal displacement action is therefore $\lambda|\bm\nu|\,\|h'\|_\infty+o(\lambda)$ as $\lambda\to\infty$.

Put $\bar\alpha=\sum_j|\alpha_j|$, so that $A=|\bm\nu|\bar\alpha$, and write $f=\ee^{-\ii\lambda h}$.

\paragraph{Frequency support.}
The product expansion in Eq.~\eqref{eq:action-word} expresses every entry of $U(\theta)$ as a finite sum $\sum_\omega c_\omega\ee^{\ii\omega\theta}$.  Each frequency is a signed sum of the $\alpha_j$, so $|\omega|\le\bar\alpha$.  No integer or equal-strength assumption enters.

The lower-bound mechanism is local. We demodulate the target by its largest phase slope. The demodulated target is nearly constant near that point. If the word has insufficient frequency support, a band-limited averaging kernel removes its demodulated response entirely. The approximation error then forces a contradiction.

\paragraph{Averaging kernel.}
With $\operatorname{sinc}u=\sin u/u$, the density $w_b(y)=20b\operatorname{sinc}^6(by)/(11\pi)$ is nonnegative and has unit integral.  Its Fourier transform $\widetilde w_b(\omega)=\int w_b(y)\ee^{\ii\omega y}\,dy$ is proportional to the six-fold convolution of the indicator of $[-b,b]$, so $\widetilde w_b(\omega)=0$ for $|\omega|\ge6b$.  Its third absolute moment is $\int|y|^3w_b(y)\,dy=\mu_3/b^3$, with $\mu_3=5(24\ln2-9\ln3)/(22\pi)\simeq0.48817$.

For completeness, the normalization and moment follow from
\begin{align}
\int_{\mathbb R}\operatorname{sinc}^6u\,du
 &=\frac{\pi}{120}\sum_{k=0}^{3}(-1)^k\binom6k(3-k)^5
 =\frac{11\pi}{20},\\
\int_{\mathbb R}|u|^3\operatorname{sinc}^6u\,du
 &=\frac{24\ln2-9\ln3}{8}.
\end{align}
The first identity is the value at zero of the six-fold interval convolution. The second follows from $\sin^6u=(10-15\cos2u+6\cos4u-\cos6u)/32$, two integrations by parts, and $\int_0^\infty(\cos au-\cos bu)\,du/u=\ln(b/a)$ for $a,b>0$.

\paragraph{Lower bound.}
Choose $\theta_0$ with $|h'(\theta_0)|=\|h'\|_\infty$ and put $d=h'(\theta_0)$.  Then $h''(\theta_0)=0$, and Taylor's theorem gives $|h(\theta_0+y)-h(\theta_0)-dy|\le\|h'''\|_\infty|y|^3/6$.  With $u=U_{00}$ define
\begin{equation}
I_v=\int_{\bbR}w_b(y)\,\ee^{\ii\lambda[h(\theta_0)+dy]}\,v(\theta_0+y)\,dy,\qquad v\in\{u,f\}.
\end{equation}
The demodulated entry $u$ has frequencies $\omega+\lambda d$ with $|\omega|\le\bar\alpha$.  If $\lambda\|h'\|_\infty-\bar\alpha>6b$, all of them lie where $\widetilde w_b$ vanishes, so $I_u=0$.  The Taylor bound and $|\ee^{\ii t}-1|\le|t|$ give $|I_f-1|\le\lambda\|h'''\|_\infty\mu_3/(6b^3)$.  Equation~\eqref{eq:action-criterion} gives $|u-f|\le\eps$ pointwise, hence $|I_f|=|I_f-I_u|\le\eps$.  These estimates are incompatible when $b^3>b_\ast^3=\lambda\|h'''\|_\infty\mu_3/[6(1-\eps)]$.  If $\bar\alpha<\lambda\|h'\|_\infty-6b_\ast$, a choice of $b$ between $b_\ast$ and $(\lambda\|h'\|_\infty-\bar\alpha)/6$ produces this contradiction.  Therefore $\bar\alpha\ge\lambda\|h'\|_\infty-6b_\ast$, which is Eq.~\eqref{eq:action-lower}.  The argument uses the approximation on the whole real coordinate.

\paragraph{Finite-amplitude bound.}
For a bounded almost-periodic function $g$ put
\[
\mathcal M[g]=\lim_{T\to\infty}\frac{1}{2T}\int_{-T}^{T}g(\theta)\,d\theta
\]
and $\widehat g(\omega)=\mathcal M[g\,\ee^{-\ii\omega\theta}]$.  The difference $u-f$ is a finite exponential sum plus the absolutely convergent Fourier series of $f$.  Exponentials with distinct real frequencies have zero mean product, so $\mathcal M[|u-f|^2]=\sum_\omega|\widehat u(\omega)-\widehat f(\omega)|^2$.  By the frequency-support paragraph, $\widehat u(n)=0$ for every integer $n$ with $|n|>\bar\alpha$, and $\widehat f(n)=C_n$.  Hence $\mathcal M[|u-f|^2]\ge\sum_{|n|>\bar\alpha}|C_n|^2$.  Equation~\eqref{eq:action-criterion} gives $|u-f|\le\eps$ pointwise, so $\mathcal M[|u-f|^2]\le\eps^2$.  The integers with $|n|>\bar\alpha$ are those with $|n|>\lfloor\bar\alpha\rfloor$, so $\lfloor\bar\alpha\rfloor\ge K_2(\eps)$ and $A=|\bm\nu|\bar\alpha\ge|\bm\nu|K_2(\eps)$.

\paragraph{Upper bound.}
Write $h(\theta)=h_0+\sum_rb_r\sin(m_r\theta+\phi_r)$ with real $b_r$ and integers $1\le m_r\le L_{\rm h}$.  For real $\theta$ and $s>0$,
\begin{equation}
\begin{split}
\operatorname{Im}h(\theta+\ii s)={}&s\,h'(\theta)+\sum_rb_r\cos(m_r\theta+\phi_r)\\
&\times[\sinh(m_rs)-m_rs].
\end{split}
\end{equation}
Hence $B_h(s)=\sup_\theta|\operatorname{Im}h(\theta+\ii s)|\le s\|h'\|_\infty+c_hs^3$ for $0<s\le1/L_{\rm h}$, with $c_h=\cosh(1)\sum_r|b_r|m_r^3/6$.  A shift of the Fourier contour as in Lemma~\ref{lem:generalized-ja-tail} gives $|C_n|\le\exp[\lambda B_h(s)-|n|s]$ and
\begin{equation}
\tau_K=\sum_{|n|>K}|C_n|\le\frac{2\exp[\lambda B_h(s)-(K+1)s]}{1-\ee^{-s}}.
\label{eq:action-phase-aware-tail}
\end{equation}
For small $s$, the leading growth is therefore set by $\|h'\|_\infty$ rather than by $\sum_r m_r|b_r|$. Cancellation among the harmonics is retained. Since $1-\ee^{-s}\ge s/2$ for $0<s\le1$, the condition $\tau_K\le\eps^2/4$ holds when
\begin{equation}
K+1\ge\lambda\|h'\|_\infty+\lambda c_hs^2+\frac1s\ln\frac{16}{\eps^2s}.
\end{equation}
With $\ell_\eps=\ln[(2+\lambda)/\eps]$, the choice $s=\min\{1/L_{\rm h},(\ell_\eps/\lambda)^{1/3}\}$ bounds the last two terms by $\calO_h(\lambda^{1/3}\ell_\eps^{2/3}+\ell_\eps+1)$.  The rescaling, completion, and factorization of Appendix~\ref{smsec:generalized-ja} then give a paired word with $2K$ balanced queries and paired-unitary error at most $2\sqrt{\tau_K}\le\eps$.  Each balanced query has $|\alpha_j|=1/2$, so $A=|\bm\nu|K$, which is Eq.~\eqref{eq:action-upper}.  The two bounds together give the asymptotic statement.

\section{Circuit implementation}
\label{smsec:circuit-details}

This appendix collects the conventions that turn a compiled word into a circuit. It specifies the finite-energy working space and its error budget (Appendix~\ref{app:working-space}). It also gives the boundary electronic controls (Appendix~\ref{app:boundary-controls}), rotation and carrier conventions (Appendix~\ref{app:processor-conventions}), and rotation-angle recovery (Appendix~\ref{app:angle-recipe}).

\subsection{Representation and finite-energy working space}
\label{app:working-space}

\paragraph{Alternative input representations.}
Mutually commuting Pauli strings can be diagonalized together. Within a diagonal block $d$, write the retained coordinate dependence as
\begin{equation}
f_d(\hat{\bm Q})\approx g_d^{(\le2)}(\hat{\bm Q})+\sum_{j=1}^{R_d}A_{d,j}\cos(\bmu_{d,j}\cdot\hat{\bm Q}+\phi_{d,j})
+ f^{\rm poly}_d(\hat{\bm Q}),
\label{eq:block-phase-model}
\end{equation}
where $g_d^{(\le2)}$ contains the at-most-quadratic Gaussian contribution, $R_d$ counts the retained trigonometric ridges, and $f_d^{\rm poly}$ is the remaining polynomial part.

A fermion-to-qubit mapping of a second-quantized electronic Hamiltonian with coordinate-dependent integrals supplies the same Pauli--coordinate form~\cite{SeeleyRichardLove2012}. In either representation, the present compiler treats dependence on mutually commuting position operators~\cite{HaMacDonell2025,MotlaghVibronic2024,LangMetallic2026,Ollitrault2020NAMD}.

\paragraph{Finite-energy working space.}

The synthesis theorem assumes exact signal access and provides a global operator-norm guarantee.  A physical oscillator realization also needs a declared finite-energy working space.  Let $\bm n^{\max}=(n^{\max}_1,\ldots,n^{\max}_M)$ denote the modewise occupation cutoffs and let $\ket n$ be a Fock number state.  We define
\begin{equation}
\PiN=\bigotimes_{m=1}^{M}\left(\sum_{n=0}^{n^{\max}_m}\ket n\bra n\right).
\label{eq:cutoff-projector}
\end{equation}
Equation~\eqref{eq:cutoff-projector} retains the tensor product of the specified Fock levels.

Let $\hat H_{\rm target}$ be the reference Hamiltonian for evolution time $T$.  The representation stage is required to certify its discrepancy from $\hat H_{\rm target}$ on this working space:
\begin{equation}
\norm{\PiN\!\left(\ee^{-\ii \hat H_{\rm target}T}
-\ee^{-\ii \hat H_{\rm in}T}\right)\!\PiN}\le \eps_{\rm rep}.
\label{eq:representation-certificate}
\end{equation}
The quantity $\eps_{\rm rep}$ collects errors introduced before gate synthesis. These include the chosen electron--nuclear representation and the active-space truncation. They also include any omitted diagonal Born--Huang correction or geometric scalar potential not already contained in $f_\ell$.

Accuracy of the implemented circuit on the same working space is recorded through
\begin{equation}
\norm{\PiN(U-\widetilde U)\PiN}\le \eps_{\rm proj},
\label{eq:projected-norm}
\end{equation}
where $U$ is ideal and $\widetilde U$ is its implementation.  Population that the ideal target itself places outside the cutoff, and additional leakage induced by the approximate circuit, are tracked separately. The two-mode benchmarks test specific joint input states under finite-matrix signals and do not certify this entire representation contract or full-space leakage for a hardware displacement.

The compiler assumes a structured description of the coordinate dependence. Qubit-only and Gaussian terms use native gates, and polynomial phases may use existing analytic OQ--GQSP methods under their stated input and postselection conditions~\cite{HongOQGQSP2026,Yang2026RabiPolyPhase}. A black-box potential-energy surface must first be replaced by a structured approximation, such as a sinusoidal-network fit~\cite{AertsSinNN2026}, whose certified residual contributes to $\eps_{\rm rep}$.

\paragraph{Hamiltonian and unitary Fourier approximation.}
Section~\ref{sec:hamiltonian-first} defines the Hamiltonian-first and direct (unitary-first) routes and their separated error bound. For the same supplied $V_R$ and periodic extension, exact GJA coefficients and exact unitary Fourier coefficients coincide, so an FFT may also evaluate the GJA coefficients. A fair comparison also matches the circuit architecture. Partially serial GJA and direct Fourier synthesis can share the same commuting groups and one-variable completion, whereas a global multivariable FFT or multi-index GJA does not by itself supply a native circuit (Appendix~\ref{app:multidirection-response}). For nonperiodic inputs, the coordinate window and periodic extension must also be specified, as described next.

\paragraph{Nonperiodic coordinate windows.}\label{app:coordinate-window}
For a physical interval $\mathcal I_Q=[L_{\min},L_{\max}]$, set $Q_c=(L_{\min}+L_{\max})/2$ and $w_Q=(L_{\max}-L_{\min})/2$. A Fourier extension uses a larger period $2\zeta_Q w_Q$, $\zeta_Q>1$, with signal coordinate $\theta=\xi_{\rm win}(Q-Q_c)$ and $\xi_{\rm win}=\pi/(\zeta_Q w_Q)$~\cite{AdcockHuybrechsMartinVaquero2014,MatthysenHuybrechs2018}. A translation of the same profile changes only signal-independent rotations. Moving the window across a fixed potential changes the target being approximated. A degree-$K$ balanced word has $2K$ balanced queries and CD action $A=K|\xi_{\rm win}|$. Larger values of $w_Q$ or $\zeta_Q$ reduce the kick size but can increase the required degree. They therefore give no monotone action saving.

Accuracy on $\mathcal I_Q$ does not imply the whole-circle contractivity that completion requires, and any repair is charged to the paired-unitary error. A real Hamiltonian extension keeps unit modulus globally but can raise the response degree off-window, so neither route has a universal cost ordering. The benchmarks in Sec.~\ref{sec:benchmarks} use supplied trigonometric targets and do not optimize the window or extension. Coordinate-window errors do not certify the whole real line, and a Fock cutoff alone does not bound window tails.

\paragraph{Error allocation.}

Coherent errors add by unitary telescoping when the ideal and implemented gates preserve a common working subspace and their operator-norm errors are bounded there. The contributions are representation, outer evolution, ridge synthesis, phase recovery, signal implementation, and boundary controls. If ridge word $b$ contains $N_{W,b}$ queries with error at most $\eta_{W,b}$ per call on that invariant space, its signal contribution is at most $\sum_b N_{W,b}\eta_{W,b}$. If the initial and final boundary controls have operator-norm errors $\eta_{B,\rm in}$ and $\eta_{B,\rm out}$ on the same invariant space, they contribute at most $\eta_{B,\rm in}+\eta_{B,\rm out}$ in addition to the scalar-word error. Two-sided projected errors such as Eq.~\eqref{eq:projected-norm} alone do not support this composition bound when intermediate states can leave the subspace. In that case, bounds must also control the output components and leakage on the spaces reached during the sequence. Channel noise and statistical error retain their own metrics. Derivative couplings require their own access model.

\subsection{Boundary electronic control}
\label{app:boundary-controls}

The boundary construction of Sec.~\ref{sec:controlled-lift} follows from the ancilla flip. Since $\sigma_x\sigma_z\sigma_x=-\sigma_z$ and $B_P$ commutes with $g(\bQ)$, conjugation gives the first identity in Eq.~\eqref{eq:boundary-pauli-error}. Unitary invariance of the norm gives the second.

Figure~\ref{fig:pauli-lift-compilers} shows the boundary lift for serial single-tone JA and grouped GJA synthesis, and the same lift applies to the TGIFS cells of Eqs.~\eqref{eq:tgifs-gqsp-cell} and \eqref{eq:tgifs-product-hierarchy}.  In fact, $B_P$ commutes with every $X$-basis CD and conjugates $R_z(\varphi)$ to $\exp[-\ii\varphi\sigma_z^{(a)}P/2]$.  For the first-order and second-order TGIFS words, moving the two boundary gates inward therefore reproduces the original electronically conditioned rotations exactly.  Conversely, they can be moved out of the repeated cells without changing the ideal word.  A Direct Fourier circuit with the same paired-unitary guarantee can use the same boundary lift.  Consecutive words with the same $P$ share one compute/uncompute pair, as in the serial circuit of Fig.~\ref{fig:pauli-lift-compilers}(a). Boundary-control savings are therefore not exclusive to grouped synthesis.

\begin{figure*}[!t]
\centering
\includegraphics[width=0.98\textwidth]{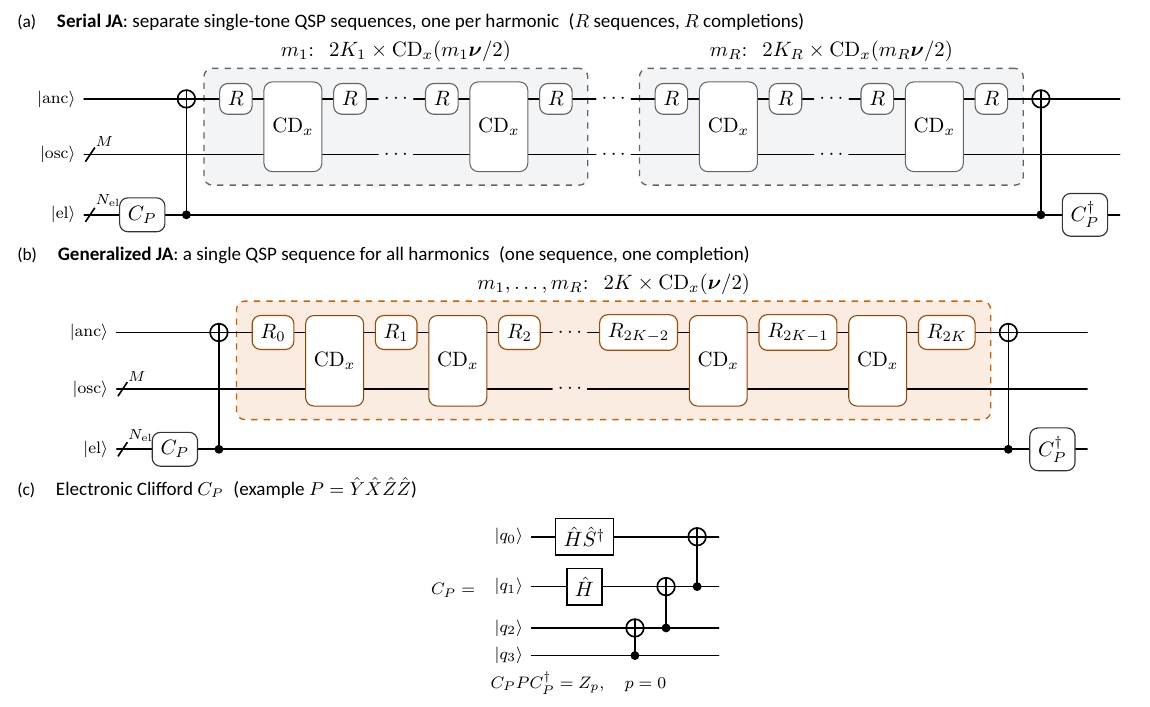}
\caption{Boundary electronic control for serial and grouped synthesis. Time runs left to right, starting at $R_0$. When a circuit diagram bundles a register into one rail, $M$ labels oscillator modes and $\Nel$ labels electronic qubits.  Neither is the Fock-cutoff vector $\bm n^{\max}$. (a) Serial JA uses one completed word per harmonic, with $2K_r$ queries of weight $m_r$. (b) Grouped GJA uses $2K$ balanced queries. Both share two pivot-controlled ancilla flips and one Clifford compute/uncompute pair, independent of query count. Bundled displacements commute but need not run simultaneously. (c) Example Clifford mapping $P=YXZZ$ to $Z_0$. Pure tones admit phase-only rotations. Generic completions can require active rotations. Endpoint basis changes, rotation decompositions, and optional query carriers are specified in Appendix~\ref{app:processor-conventions}.}
\label{fig:pauli-lift-compilers}
\end{figure*}

\subsection{Alternative joint-signal access}
\label{app:joint-signal-access}
The boundary construction suffices for the circuits in Figs.~\ref{fig:schematic} and \ref{fig:pauli-lift-compilers} and for grouped synthesis. An alternative formulation specifies the joint-signal access used by the single-tone theorem and its matching lower bound.  It is not an additional step in the boundary circuit.  For that formulation, electronic control is included in the signal itself:

\begin{equation}
\mathcal W_{P,\bmu,\phi}=P\otimes W_{\bmu,\phi}
\label{eq:joint-pauli-signal}
\end{equation}
Its Hermitian and anti-Hermitian parts are $P\otimes\cos(\bmu\cdot\hat{\bm Q}+\phi)$ and $P\otimes\sin(\bmu\cdot\hat{\bm Q}+\phi)$.

The balanced query applies $P\ee^{\ii\hat x/2}$ on the upper ancilla branch and $\ee^{-\ii\hat x/2}$ on the lower, with $\hat x=\bmu\cdot\hat{\bm Q}+\phi$.  Their relative operation is $\mathcal W_{P,\bmu,\phi}$. Only the oscillator exponent is split between the branches. No fractional power of $P$ appears.  Hadamard conjugation of the ancilla gives

\begin{equation}
\bigl[H_a\,\mathrm{ctrl}_0(P)\,H_a\bigr]
\ee^{\ii\phi\sigma_x^{(a)}/2}\CDx{\bmu/2}.
\label{eq:balanced-x-query}
\end{equation}

Each query now carries an electronic parity control, which becomes $\mathrm{ctrl}_0(\sigma_z^{(p)})$ in the pivot basis. A $2K$-query word carries the known frame $P^K$, which is removed once at the boundary, and the two realizations have separate electronic-control costs. Direct substitution of $P\otimes W_{\bm\nu}$ into the polynomial keeps $P$ only on odd powers, because $(P\otimes W)^m=P^m\otimes W^m$, whereas the boundary lift has no such parity restriction.

\subsection{Rotation and carrier conventions}
\label{app:processor-conventions}

Arbitrary ancilla rotations are not required for every coordinate-phase target.  A supplied term can be written as $a\sin(x+\phi)=a\cos\phi\sin x+a\sin\phi\cos x$ and treated by separate single-tone JA constructions.  For a common Pauli string, these coordinate-only generators commute, so splitting their ideal exponentials introduces no product-formula error.  

Endpoint-admissible phase-only words retain virtual-$Z$ interior rotations.  This structure remains after grouping alone.  A word whose queries are $\mathrm{CD}_x$ gates and whose processing rotations are all virtual-$Z$ gates obeys $U(-\theta)=\sigma_z^{(a)}U(\theta)\sigma_z^{(a)}$.  It can therefore meet the paired target only if $|\ee^{\ii h(-\theta)}-\ee^{\ii h(\theta)}|\le2\eps$ for all $\theta$.  Other mixtures need fixed boundary rotations, a split into even and odd parts, or generic rotations.

\paragraph{Virtual-$Z$ implementation of rotations.}
Under phase-tracked control, a generic rotation admits the Euler form
\begin{equation}
R_j=R_z(a_j)R_x(b_j)R_z(c_j).
\label{eq:processor-zxz}
\end{equation}
The $Z$ factors are virtual. One active transverse pulse remains if arbitrary rotation angles are supported.  Alternatively, writing $X_{90}=R_x(\pi/2)$ gives the exact fixed-pulse decomposition~\cite{McKay2017VirtualZ}
\begin{align}
R_j={}&R_z(a_j-\pi/2)X_{90}R_z(\pi-b_j)\nonumber\\
     &\times X_{90}R_z(c_j-\pi/2).
\label{eq:processor-fixed-x90}
\end{align}
All variable angles then reside in virtual-$Z$ updates, but two active $X_{90}$ pulses remain before any exact cancellations, so this is not a phase-only QSP word. These rewrites preserve the ideal word and its CD time, not a calibrated device duration, and the frame must be tracked through later CD and boundary operations. Active-rotation duration is a separate hardware-dependent cost and is not optimized in the reported scans.

\paragraph{Offset and carrier conventions.}

The shifted-sine proof of Appendix~\ref{smsec:ja-details} uses the endpoint repair of Ref.~\cite{LowChuang2017}, whose inverse factorization needs only phase rotations~\cite{LowYoderChuang2016,Haah2019QSP}, and places a fixed $\pi/2$ rotation about $X$ in every query. The benchmark words instead recover the cosine response directly and need no such carrier (Appendix~\ref{app:elementary-duration}). A nonzero offset adds the carrier $R_x(-\phi)$ of Eq.~\eqref{eq:qsp-execution-template} to every query in all three families. The carrier commutes with $\mathrm{CD}_x$, its time is proportional to $T_{\rm CD}$, and an offset in the compiled response does not make it a virtual-$Z$ gate.

\subsection{Angle recovery}
\label{app:angle-recipe}

The compiler takes the signal vector $\bm\nu$, the harmonic amplitudes and offsets, and a target error $\eps$. It returns the number of conditional displacements and the ancilla rotations of Eq.~\eqref{eq:qsp-execution-template}. The workflow has four steps and a final check.

\paragraph{Coefficients.}
The Fourier coefficients $C_n$ of $G=\exp(-\ii h)$ are computed by convolving the single-tone Bessel series and are cross-checked against a fast Fourier transform of $G$.

\paragraph{Truncation.}
Choose $K$ with a tail bound $\tau_K$ and form $F_K=T_K/(1+\tau_K)$ as in Eq.~\eqref{eq:generalized-ja-contractive}. In exact arithmetic, $\tau_K\le\eps^2/4$ suffices for the paired-unitary bound. Numerical coefficient, completion, and recovery errors need additional allowances in the same budget.

\paragraph{Completion.}
The complementary polynomial $H_K$ satisfies $|F_K|^2+|H_K|^2=1$ on the unit circle. We obtain it from the roots of $1-|F_K|^2$ and keep the roots inside the unit circle. A different choice of roots gives a different valid circuit, so the choice is part of the compiler output.

\paragraph{Rotations.}
The completed $2\times2$ matrix is factored into $2K$ balanced queries and the rotations $R_0,\ldots,R_{2K}$ by removing one query at a time~\cite{Haah2019QSP}. This factorization is an inverse nonlinear Fourier transform (NLFT)~\cite{AlexisNLFA2024,LaneveNLFT2025}. Dedicated inverse-NLFT algorithms, such as Riemann--Hilbert--Weiss factorization~\cite{LaneveNLFT2025} and the inverse nonlinear fast Fourier transform~\cite{NiInverseNLFFT2025}, solve the same problem and can replace the completion and factorization steps. The response contains both positive and negative Fourier modes, as in generalized QSP~\cite{MotlaghWiebe2024,YamamotoYoshioka2024Prony}. The angles reported in this work use the root-based completion and sequential factorization described here. Each rotation is then written in the ZXZ form of Eq.~\eqref{eq:processor-zxz} or the fixed-$X_{90}$ form of Eq.~\eqref{eq:processor-fixed-x90}.

\paragraph{Final check.}
We rebuild the complete $2\times2$ word from the emitted angles, including the boundary rotations and the scalar phase, and compare it with $\operatorname{diag}(G,G^*)$ over the whole signal circle. The comparison uses a dense grid with a derivative bound between grid points. A word is accepted only if this paired-unitary error stays within its allowance, and otherwise the precision or the truncation is revised.

\paragraph{Single-tone phase-only word.}
\label{app:elementary-duration}
For the single-tone benchmark of Fig.~\ref{fig:compiler-benchmarks}(a), the rotations can be restricted to the equatorial plane~\cite{Haah2019QSP}. At $L_Q=10$ the word is
\begin{align}
U_K^{\rm QSP}(\theta_{L_Q})
&=R_z(\varphi_{\rm res})\prod_{j=1}^{2K}E_{\beta_j}(\theta_{L_Q}),\nonumber\\
E_{\beta}(\theta_{L_Q})
&=\exp\!\left[\frac{\ii\theta_{L_Q}}{2}
(\cos\beta\,\sigma_x+\sin\beta\,\sigma_y)\right],\nonumber\\
\theta_{L_Q}&=\frac{\pi}{L_Q}\hat Q+\phi_0.
\label{eq:elementary-phase-only-qsp}
\end{align}
Virtual-$Z$ frames program the equatorial axes $\beta_j$ and the residual $R_z(\varphi_{\rm res})$. The word therefore uses $2K$ balanced queries and $2K+1$ virtual parameters. It needs neither an active rotation pulse between the queries nor the fixed $\pi/2$ query carrier of the shifted-sine construction. A nonzero $\phi_0$ still requires its separately counted offset carrier. Each selected word is checked on a shifted $2^{20}$-point grid with derivative padding $(K+|\Lambda|)\pi/2^{20}$. Independent native-gate replays verify the target, both ancilla branches, and the CD action. Appendix~\ref{app:common-metric} gives the selection criterion.

\subsection{TGIFS cells and sufficient error bounds}
\label{smsec:tgifs-details}\label{lem:tgifs-gqsp-embedding}\label{prop:tgifs-block-metrics}
Here and below in the scalar-cell formulas, $\sigma_x$, $\sigma_y$, and $\sigma_z$ act on the synthesis ancilla. For the zero-offset cosine, the exact first-order cell is
\begin{align}
\mathcal G^{\rm basic}_{P}(\vartheta)
&=\CDx{\bmu/2}\ee^{\ii\vartheta\sigma_z^{(a)}P}
  \CDx{-\bmu}\nonumber\\
&\quad{}\times\ee^{\ii\vartheta\sigma_z^{(a)}P}
  \CDx{\bmu/2}.
\label{eq:tgifs-gqsp-cell}
\end{align}

CD conjugation of $\sigma_z^{(a)}P$ gives generators $A_\pm=(\sigma_z\cos x\pm\sigma_y\sin x)P$ with
\begin{equation}
[A_+,A_-]=2\ii\sigma_x\sin(2x).
\label{eq:tgifs-generator-commutator}
\end{equation}
The two-exponential norm estimate bounds one cell's error relative to $e^{2\ii\vartheta\sigma_zP\cos x}$ by $\vartheta^2$. Thus for $\vartheta_r=-\Lambda/(2r)$, unitary telescoping gives
\begin{equation}
\|[\mathcal G_P^{\rm basic}(\vartheta_r)]^r-e^{-\ii\Lambda\sigma_zP\cos x}\|
\le\min\{2,\Lambda^2/(4r)\}.
\label{eq:small-angle-lt-error}
\end{equation}
This bound is sufficient and makes no claim of a matching lower bound for optimized TGIFS families. The exact block entries of the cell are given in Ref.~\cite{Rainaldi2026TrigBenchmark}. Its selected block errs at $\calO(\vartheta^3)$, whereas the amplitude transferred to $\ket1_a$ is $\calO(\vartheta^2)$, so selected-block and paired-unitary errors must not be interchanged.

Since $(\sigma_zP)\mathrm{CD}_x(\bmu)(\sigma_zP)=\mathrm{CD}_x(-\bmu)$ and $e^{\ii\pi\sigma_zP/2}=\ii\sigma_zP$, substitution in the middle backward displacement yields
\begin{equation}
\begin{split}
[\mathcal G_P^{\rm basic}(\vartheta)]^r
&=(-1)^r\Bigl[\CDx{\bmu/2}\\
&\quad{}\times
 e^{\ii(\vartheta+\pi/2)\sigma_zP}\CDx{\bmu/2}\Bigr]^{2r}.
\end{split}
\label{eq:tgifs-forward-word}
\end{equation}
The word has $4r$ forward balanced queries and $2r$ equal rotations. The sine cell is obtained by the coordinate substitution $x\mapsto\pi/2-x$ together with the ancilla-axis conjugation $(\sigma_y+\sigma_z)/\sqrt2$.

\subsection{Symmetric TGIFS schedule and carrier cost}
\label{app:tgifs-second-order}
We also construct a symmetric second-order TGIFS cell as an extension of the first-order reference. Written in the native primitive of Eq.~\eqref{eq:tgifs-elementary-qsp}, it is
\begin{align}
\mathcal G^{\rm sym}_{P}(\vartheta)
&=\CDx{\bmu/2}\ee^{\ii\vartheta\sigma_z^{(a)}P/2}
  \CDx{-\bmu}\nonumber\\
&\quad{}\times\ee^{\ii\vartheta\sigma_z^{(a)}P}
  \CDx{\bmu}\nonumber\\
&\quad{}\times\ee^{\ii\vartheta\sigma_z^{(a)}P/2}
  \CDx{-\bmu/2}.
\label{eq:tgifs-product-hierarchy}
\end{align}
A multi-harmonic ridge sum concatenates the corresponding cells. The outer Fourier truncation and each cell's block error are accounted for separately. For harmonic $m\ge1$ with Fourier coefficient $a_m$, the four signal units
in one unmerged second-order TGIFS cell have chronological signed weights
$(-m,+2m,-2m,+m)$.  A unit of signed weight $s$ applies
\begin{equation}
\CDx{s\bmu/2}\,\Rx(-s\phi_0)
=\exp\!\left[\frac{\ii s}{2}\sigma_x
(\bmu\cdot\bQ+\phi_0)\right],
\end{equation}
so its $\sigma_x=\pm1$ branches acquire the opposite phases
$\exp[\pm\ii s(\bmu\cdot\bQ+\phi_0)/2]$.  At $P=I$ and $\vartheta_r=-a_m/(2r)$, this cell is repeated $r$ times.

\paragraph{Boundary cancellation.}
At every repeat boundary the closing $(+m)$ signal unit cancels the next opening $(-m)$ unit exactly.  The two adjacent $R_z(a_m/2r)$ rotations merge into $R_z(a_m/r)$.  The merged schedule therefore begins and ends with a single signal unit.  All interior signal units are double units.  It has two end rotations at $a_m/(2r)$ and $2r-1$ interior rotations at $a_m/r$. Equation~\eqref{eq:tgifs-product-hierarchy} specifies the exact unmerged cell. Boundary-canceled pairs are neither executed nor counted in the compiled schedule.
\paragraph{Action and carrier time.}
Its weighted CD action is $6mr-2m(r-1)=m(4r+2)$, against
$4mr$ for first-order TGIFS.  In the accounting, the conditional displacement contributes $\abs{s}\tau_{\rm base}$ of CD time.  The equatorial remainder contributes a rotation of absolute angle $\abs{s\phi_0}$.  It is measured in units of a $\pi/2$ rotation about $X$ (X90).  The virtual $\Rz$ frames carry zero cost. The rotation action is $\abs{\phi_0}(T_{\rm CD}/\tau_{\rm base})/(\pi/2)$, where $T_{\rm CD}/\tau_{\rm base}$ sums the contributions $m(4r+2)$ over the harmonics, so it is exactly proportional to the CD time.  In contrast, the processing angles of the Jacobi--Anger method are outputs of the synthesis and are implemented as virtual-$Z$ frame updates in the declared gate model.

\paragraph{Error curves in Fig.~\ref{fig:tgifs-ja-nlft-dictionary}(c).}\label{app:fig2-bound}
The curves compare paired-unitary errors at $\Lambda=2$ under ideal exact signal access. The TGIFS curves show the errors of the exact first-order ($T_{\rm CD}=4r\tau_{\rm base}$) and second-order symmetric ($T_{\rm CD}=(4r+2)\tau_{\rm base}$) words. The errors are maximized over 2001 uniformly spaced signal phases. These sampled maxima are not certified continuous suprema. The JA curve shows the completed-word upper bound $4\sqrt{\delta_K}$ at $T_{\rm CD}=2K\tau_{\rm base}$. It uses the repair step and Lemma~\ref{cor:ja-full-joint-completion}. Here $\delta_K=2\exp[1/(K+2)]/\{(K+1)![1-1/(K+2)]\}$ bounds the uniform Jacobi--Anger truncation tail at this amplitude. This majorant follows from $|J_n(2)|\le e^{1/(n+1)}/n!$ and a geometric bound on the remaining factorial series. The JA curve is shown where $8\delta_K<1$ and continues below the displayed error range.

\section{Numerical validation}
\label{app:numerical-validation}

\paragraph{Scope of the numerical checks.}
Each benchmark uses its own admission criterion, specified in the subsections below. Full-circle paired-unitary checks use dense grids with analytic derivative padding, and circuit-benchmark words are also replayed by an independent native-gate implementation. The sinusoidal-route comparison is validated before native-angle recovery. As in Sec.~\ref{sec:benchmarks}, these checks are floating-point evidence and bounded-search upper bounds, and the reported orderings depend on the target, criterion, tolerance, and reference CD clock.

\subsection{Coordinate-domain error allocation}
\label{app:fig3-interval-budget}

\paragraph{Distinct-group action comparison.}
\label{app:fig3-serial-action}
Figure~\ref{fig:structure-programming}(b) uses $V(q)=(1-\ee^{-q})^2$ and the domain $I^2$ with $I=[-\log(1+\sqrt{1/2}),-\log(1-\sqrt{1/2})]$. The three coordinates are $q_1$, $q_2$, and $(q_1+q_2)/2$, with phase weights $2$, $1.5$, and $0.8$ per three-block sequence. A common real $C^3$ periodic extension equals $V$ on $I$ and closes the remaining interval with a degree-seven Hermite polynomial matching derivatives through order three. HF truncates the potential at degrees $R_g$ before exponentiation and response truncation at $K_g$. UF directly truncates the exponential of the common extension at $K_g$. Both use global contractivity repair and the same inside-root completion with central coefficient positive real. No convex response or completion-gauge optimization is used.

Let $W_j(\bm q)$ be the completed product through block $j$, and let $D_j(\bm q)$ be the exact paired target for the same prefix. The plotted metric is the sampled paired-unitary error
\begin{equation}
 \max_{1\le j\le48}\max_{\bm q\in\mathcal G_{513}^2}
 \|W_j(\bm q)-D_j(\bm q)\|_{\rm op},
\end{equation}
where $\mathcal G_{513}$ is a common 513-point grid on $I$. The full ancilla matrices are multiplied coherently, including the complementary branches. This sampled maximum is not a continuum supremum certificate.

The displacement directions are $\bm\nu_1=(1,0)$, $\bm\nu_2=(0,1)$, and $\bm\nu_3=(1/2,1/2)$. Every direction has unit $\ell^1$ norm, so each balanced query costs the same $\tau_{\rm base}$ and $T_{\rm CD}/\tau_{\rm base}=32\sum_gK_g$. The plotted HF candidate with $R=(10,9,9)$ and $K=(33,27,21)$ has $T_{\rm CD}=2592\tau_{\rm base}$ and sampled error $0.0973455$. The UF candidate with $K=(42,46,36)$ has $T_{\rm CD}=3968\tau_{\rm base}$ and sampled error $0.0993176$. Vector-query counts, single-mode pulse counts, and calibrated duration are distinct resources. Panel (b) evaluates completed responses; it does not report native-angle or noisy-circuit replay of these frontier candidates.

\subsection{Matched sinusoidal construction and serial composition}
\label{app:sinusoidal-route-comparison}

As a matched-model control, both routes receive $v_R(\theta)=\cos\theta+0.12\cos2\theta$ at $\Delta t=2$. Bessel-sum GJA coefficients ($|m|\le96$) and a $65536$-point FFT of $\exp[-2\ii v_R]$ agree to $2.7\times10^{-16}$ for $|n|\le128$, and the contour bound on $|\operatorname{Im}\theta|=1$ limits the remaining tail to $7.5\times10^{-55}$. With the same truncation $K=6$, rescaling, and inside-root completion, both routes give sampled paired error $0.0973342$ and leakage $0.0094512$. Serial two-mode layers $W(Q_1+Q_2)W(Q_2)W(Q_1)$ built from the two routes differ by less than $5\times10^{-14}$ in vacuum-averaged leakage and joint-state error for up to $32$ layers. No native-angle recovery, oscillator truncation, kinetic step, or noise model is included.

\subsection{Full-grid common-metric comparison}
\label{app:common-metric}

\paragraph{Criterion and recovery}
For each of the sixteen amplitudes in Fig.~\ref{fig:compiler-benchmarks}(a), we select the shortest circuit of each family whose paired-unitary error is below $10^{-2}$ on the whole signal circle. JA candidates come from a $129$-point full-circle least-squares objective with every lower degree scanned, and TGIFS uses basic and symmetric cells with upward repetition scans.

\paragraph{Independent checks}
Each selected word is checked on a shifted $2^{20}$-point grid with derivative padding. The derivative bound is $K+|\Lambda|$ for a JA word. For TGIFS, differentiating the exact cell factors gives $2r|\sin t|$ for the basic cell and $r[2|\sin(t/2)|+|\sin t|]$ for the symmetric cell, where $t=\Lambda/(2r)$, and adding $|\Lambda|$ accounts for the target derivative. A separate dense native-gate implementation replays the selected words at $N_F=32$ and $40$ with matrix disagreements below $5.0\times10^{-13}$.

\paragraph{Metric dependence}
An entanglement-infidelity threshold of $10^{-4}$ restricted to low-energy inputs, the oscillator levels $n\le10$ with both ancilla states, instead moves the first strict JA reduction from $\Lambda=0.6$ to $0.7$ and changes the counts at $\Lambda=5$ to $(20,448,54)$. The onset of a sampled saving therefore depends on the error metric and tolerance.

\subsection{Multiharmonic compilation and two-mode benchmark details}
\label{app:grouped-benchmarks}

\paragraph{Dense eight-harmonic profile and validation}
\label{app:dense-eight}
The profile is given in Eq.~\eqref{eq:dense-eight-profile}, with $C=1.9866538013267567$ and $\sum_m|a_m|=3.03446$. GJA and independent Direct Fourier recovery use contractive repair, root completion, and native factorization with degrees $K\le96$, and they give identical grouped CD costs at all eleven scales. Figure~\ref{fig:compiler-benchmarks}(b) displays grouped GJA.

\emph{Search scope.} Serial JA uses a phase-only single-tone library and local degree and order allocation. TGIFS uses basic and symmetric cells with repetition allocation, mirror choices, and adjacent-CD merging.

\emph{Independent validation.} All accepted words are checked on $2^{20}$ full-circle phases with analytic first-derivative padding. A bound $M_1$ on the derivative of the recovered word minus its target extends the grid maximum by $M_1\pi/2^{20}$. A second implementation replays stored native rotations or literal TGIFS cells on a shifted $2^{20}$-point grid and independently recounts CD action. For grouped and serial words, respectively, the derivative bounds $M_1$ are $K+s\sum_m m|a_m|$ and $\sum_m mK_m+s\sum_m m|a_m|$. The TGIFS bound adds the target derivative bound to the sum of retained tone derivatives. Numerical guards accompany both grids.

\paragraph{Two-mode benchmark target}

The cubic Fermi normal form $Q_sQ_b^2$ couples a stretch excitation to two
bend quanta~\cite{Chalermpusitarak2025TGIFS}. A bare cubic added to harmonic
confinement is unbounded below. We therefore declare the bounded target
\begin{align}
 F_{d_{\rm reg}}(q_s,q_b)&=\frac{2\sin(d_{\rm reg}q_s)[1-\cos(d_{\rm reg}q_b)]}{d_{\rm reg}^3},\\
 R_{d_{\rm reg}}(q_s,q_b)&=\frac{4F_{d_{\rm reg}}(q_s,q_b)-F_{2d_{\rm reg}}(q_s,q_b)}{3},
 \label{eq:fermi-regularization}
\end{align}
Here $d_{\rm reg}=0.25$ is the regularization spacing. The identity
$2\sin u\cos v=\sin(u+v)+\sin(u-v)$ supplies the three-ridge form.
Taylor expansion gives $R_{d_{\rm reg}}=q_sq_b^2+O(d_{\rm reg}^4)$ on compact coordinate sets.
Since $R_{d_{\rm reg}}$ is bounded, the Hamiltonian in Eq.~\eqref{eq:fermi-H} is bounded below.
The regularization itself, rather than an unspecified full-space cubic
approximation, is the common benchmark target.
We set the harmonic zero-point energy to zero and use
$U(t)=\exp[-\ii\,2\pi cH_{\rm cm^{-1}}t]$, with
$c=2.99792458\times10^{-5}$~cm/fs.

\paragraph{Effective parameters and detuning control.}

The carbon-dioxide dyad values $|W_{\rm F}|=51.232$~cm$^{-1}$ and
$\delta=-7.87$~cm$^{-1}$ are quoted in
Ref.~\cite{McCluskeyStoker2006CO2}. We use their magnitude and detuning,
not a fitted potential-energy surface. The physical carbon-dioxide bend is degenerate, and the effective two-mode model does not represent the full molecular Hamiltonian. The two reported dyad line positions
give the effective mean $1336.770$~cm$^{-1}$. A two-level parameterization uses the mean $\bar\omega=(\omega_s+2\omega_b)/2$ and detuning $\delta=\omega_s-2\omega_b$, so $\omega_s=\bar\omega+\delta/2$ and $\omega_b=(\bar\omega-\delta/2)/2$. This sets $\omega_s=1332.835$~cm$^{-1}$ and
$\omega_b=670.3525$~cm$^{-1}$.
These are inferred effective parameters, not measured uncoupled fundamentals.
The infinite-Fock doorway element is
\begin{equation}
 M_{d_{\rm reg}}=\bra{0,2}R_{d_{\rm reg}}\ket{1,0}
 =\frac{2e^{-d_{\rm reg}^2/2}-\tfrac12e^{-2d_{\rm reg}^2}}{3}.
\end{equation}
Numerically $M_{d_{\rm reg}}\simeq0.499073$, so
$\kappa=|W_{\rm F}|/M_{d_{\rm reg}}\simeq102.654$~cm$^{-1}$.
An artificial off-resonant control changes only the bend frequency to $794.4975$~cm$^{-1}$. This gives $\delta=-5|W_{\rm F}|$.

\paragraph{Ridge representation.}

Equation~\eqref{eq:fermi-regularization} has three base vectors,
$d_{\rm reg}(1,1)$, $d_{\rm reg}(1,-1)$, and $d_{\rm reg}(1,0)$.
Each diagonal direction has sine coefficients
\begin{equation}
 (b_1,b_2)=\left(-\frac{4\kappa}{3d_{\rm reg}^3},\frac{\kappa}{24d_{\rm reg}^3}\right),
\end{equation}
and the central direction has $-2(b_1,b_2)$. A step $\Delta t$ uses
$a_m=2\pi c\Delta t\,b_m$ in the scalar response
$\exp[-\ii\sum_{m=1}^2a_m\sin(m\theta)]$.
Only two distinct schedules must be recovered, but three physical ridge
words are applied. Distinct directions are not combined into a single
multivariate QSP word. All electronic labels are $P=I$.

\paragraph{Input, finite-basis signal, and evolution step.}
We diagonalize Eq.~\eqref{eq:fermi-H} with $N_F=24$ states per mode
(dimension $576$), starting from $\ket{1,0}$.
Every compiled trajectory uses the same Strang step $e^{-\ii H_0\Delta t/2}\widetilde U_Ve^{-\ii H_0\Delta t/2}$. Here $H_0$ is the harmonic part of the benchmark Hamiltonian.
We use $\Delta t=0.25$~fs over $1$~ps. The step-selection and cutoff checks are reported below.
The numerical signal is exactly $e^{\ii\bm v\cdot\bm Q_{N_F} X}$ in the
finite $Q_{N_F}$ eigensystem. The finite-matrix signal defines this benchmark. Physical displacement calibration and full-space leakage require separate checks.

\paragraph{Joint-state error.}
We multiply the full $2\times2$ ridge words in the order $(+,-,s)$
at each coordinate point and propagate both ancilla components through
all $4000$ steps. No component is discarded, measured, reset, or
renormalized. The reported state distance, which bounds the joint trace distance $D$, is
\begin{equation}
 e_{\rm joint}=\max_{t\in\{0,2,\ldots,1000\}\,\mathrm{fs}}
 \left\|\ket{\Psi_{\rm comp}(t)}
       -\ket0\otimes\ket{\psi_{\rm ref}(t)}\right\|_2 ,
\end{equation}
without removing a global phase. Population measurements sum both
ancilla outcomes. The ancilla leakage, compiler-only error against
exact-$V$ Strang propagation, and norm residual are recorded separately.

\paragraph{Budgets and candidate searches.}
The six common per-$V$ error budgets are $\eps_V=10^{-3},3\times10^{-4},10^{-4},3\times10^{-5},10^{-5},3\times10^{-6}$. Each method allocates $\eps_V/3$ to each of the three ridges. This fixed allocation is not optimized over ridge partitions. TGIFS searches both two-tone orders and relative mirror choices. Admission checks each actual completed word. Independent Direct Fourier recovery gives the same CD costs and nearly identical joint-state errors as GJA across these budgets, including the least-cost passing candidate at $52\,000\tau_{\rm unit}$.

\paragraph{Full-circle admission.}
For each two-mode ridge profile, let $E(\theta)$ be the complete recovered word minus its paired target.  Full-circle admission uses a $2^{16}$ grid
and second-derivative interpolation. If $\norm{E''}\le M_2$, then
\begin{equation}
 \sup_\theta\norm{E(\theta)}
 \le \max_{\rm grid}\norm{E(\theta)}
       +\frac{M_2}{8}\left(\frac{2\pi}{2^{16}}\right)^2 .
\end{equation}
This follows by linear interpolation of the matrix-valued error and
convexity of the norm, including the periodic seam.
For $2K$ unitary balanced queries,
$M_2=K^2+S_1^2+S_2$ suffices, where
$S_j=\sum_m m^j|a_m|$ for $j=1,2$.
The TGIFS bound is at most $2(S_1^2+S_2)$, obtained by differentiating
its unitary factors. It is independent of the repetition count.
A $2^{20}$ replay checks representative grouped cases.

\paragraph{Per-step CD cost.}
With the declared fixed-drive serial-mode reference CD clock,
$\tau_{\rm CD}(\bm v)=\tau_{\rm unit}\norm{\bm v}_1$,
the grouped per-step cost is
\begin{equation}
 T_{\rm CD}^{(V)}
 =d_{\rm reg}(4K_{\rm diag}+K_s)\tau_{\rm unit}.
\end{equation}
Each diagonal ridge $d_{\rm reg}(1,\pm1)$ uses $2K_{\rm diag}$ queries of $\ell^1$ action $d_{\rm reg}$. Together they contribute $4d_{\rm reg}K_{\rm diag}\tau_{\rm unit}$. The central ridge $d_{\rm reg}(1,0)$ uses $2K_s$ balanced queries of action $d_{\rm reg}/2$. It contributes $d_{\rm reg}K_s\tau_{\rm unit}$.
TGIFS costs use the actual signed displacement vectors after adjacent
signal merging, within and between ridge words. Sine-carrier $R_x$
rotations and virtual-$Z$ updates are counted separately. A higher
harmonic is charged its longer displacement time. Vector-CD event counts
are not identified with the number of serial single-mode subpulses.
No calibrated value is assigned to $\tau_{\rm unit}$. Non-CD rotation durations, state preparation, readout, and noise remain outside this CD-time comparison.

\paragraph{Reference and outer-step convergence.}
The exact reference has maximum $P_{02}=0.939593$ near resonance and
$0.170523$ in the detuned control over $1$~ps. The near-resonant exchange period is $0.31$~ps. A two-level dyad with the same $|W_{\rm F}|$ and $\delta$ gives $0.32$~ps and a maximum $P_{02}=0.994$. The difference comes from couplings of the bounded interaction outside the dyad.
The $N_F=20\to24$ maximum sampled differences in $P_{02}$ are
$2.26\times10^{-7}$ and $1.54\times10^{-8}$. The corresponding maximum raw
state-vector differences are $4.17\times10^{-4}$ and $9.11\times10^{-5}$.
Cutoffs $12$ and $16$ were also checked.
These are convergence diagnostics, not rigorous continuum error bounds.
We also changed $d_{\rm reg}$ from $0.25$ to $0.10$ and recalibrated the doorway coupling. This alters the near-resonant $P_{02}$ by at most $0.00383$ at $N_F=24$.

An independent exact-$V$ comparison at $\Delta t=2,1,0.5,0.25,0.125$~fs
selects $0.25$~fs as the first tested step with maximum raw state error
below $2\times10^{-3}$ for both detunings. Its errors are
$1.8683\times10^{-3}$ and $1.0940\times10^{-3}$ against the $N_F=24$ exact
reference. A step half as large reduces them to $4.6703\times10^{-4}$ and $2.7348\times10^{-4}$.
\paragraph{Compiled trajectories.}
At $\eps_V=3\times10^{-6}$ the near-resonant compiler-only state error is $2.17\times10^{-4}$ and ancilla leakage is $4.71\times10^{-8}$. Intermediate budgets need not give monotone trajectory error because the complementary blocks interfere coherently.

The least-CD-time candidates meeting the joint-state threshold for both detunings all use $\eps_V=10^{-5}$ in the fixed six-budget scan. Table~\ref{tab:fermi-compilers} reports their separate detuning errors, population errors, and ancilla populations.

\begin{table*}[t]
\caption{Least-CD-time candidates in the fixed six-budget scan with state distance
$e_{\rm joint}\le10^{-2}$, which bounds the joint trace distance $D$, for both detunings. These four selections all use
$\epsilon_V=10^{-5}$. They are not globally optimal schedules.
Errors are maxima on the common $2$~fs output grid.
The $N_F=24$ reference and outer-step errors are distinct diagnostics.}
\label{tab:fermi-compilers}
\begin{ruledtabular}
\begin{tabular}{lrrrrr}
Method & $T_{\rm CD}/\tau_{\rm unit}$ & $e_{\rm joint}$ near
& $e_{\rm joint}$ off & $\max|\Delta P_{02}|$ near
& $\max p_{\rm ancilla=1}$ near\\
\hline
GJA--OQ--QSP & $52\,000$ & $2.736\,10^{-3}$ & $5.446\,10^{-3}$
 & $3.933\,10^{-4}$ & $3.992\,10^{-6}$\\
Direct Fourier & $52\,000$ & $2.736\,10^{-3}$ & $5.445\,10^{-3}$
 & $3.933\,10^{-4}$ & $3.994\,10^{-6}$\\
TGIFS symmetric & $476\,000$ & $2.633\,10^{-3}$ & $3.502\,10^{-3}$
 & $7.124\,10^{-4}$ & $3.603\,10^{-6}$\\
TGIFS basic & $203\,259\,000$ & $2.667\,10^{-3}$ & $4.130\,10^{-3}$
 & $3.938\,10^{-4}$ & $3.624\,10^{-6}$
\end{tabular}
\end{ruledtabular}
\end{table*}

\subsection{Short-window dynamics with coherent ancilla reuse}
\label{app:short-window-dynamics}

This appendix gives the details of the noisy simulation of Sec.~\ref{sec:short-window-dynamics} and Fig.~\ref{fig:fermi-grouped}(c).

\paragraph{Model and reference.}
The reference is the evolution under Eqs.~\eqref{eq:fermi-H} and \eqref{eq:fermi-regularization} with $d_{\rm reg}=0.25$ and the parameters of Appendix~\ref{app:grouped-benchmarks}, without time splitting. The initial state is $\ket0_a\ket{1,0}$ and the window is the first $80$~fs. At $80$~fs the reference populations are $P_{02}=0.481952$ near resonance and $0.153736$ off resonance. The compiled circuits use a smoother potential with $d=0.75$ and
$\kappa(d)=51.232/[\,\{2e^{-d^2/2}-\tfrac12e^{-2d^2}\}/3\,]$.
This choice keeps the matrix element between $\ket{1,0}$ and $\ket{0,2}$ at its reference value (doorway matching) but changes the Hamiltonian. All errors below are measured against the $d_{\rm reg}=0.25$ reference, so they include this change. Exact propagation with the $d=0.75$ potential alone gives $D_{\max}=0.034363$ at $N_F=32$. The synthesis ancilla is never measured, reset, projected, or renormalized.

\paragraph{Accuracy criteria.}
The main error measure is the trace distance $D(t)=\tfrac12\|\rho(t)-\rho_{\rm ref}(t)\|_1$, where $\rho_{\rm ref}$ includes the ancilla in $\ket0$. We take its maximum over the output times $t=0,5,\ldots,80$~fs and over both detunings. For pure states, the state distance of the $1$~ps benchmark is an upper bound on $D$. A circuit is admitted if $D\le0.1$ without noise or $D\le0.2$ with noise. In addition, the errors in $P_{10}$, in $P_{02}$, and in the complex coherence $\rho_{10,02}$, and the ancilla leakage, must each be at most $0.05$. Populations and coherences include both ancilla outcomes.

\paragraph{Circuits.}
The GJA coefficients come from explicit Bessel convolution and agree with an independent Fourier transform. Direct Fourier synthesis would therefore give the same circuits, so this test gives no evidence of a noise advantage of GJA over it. Truncation uses a common global normalization with contractivity margin $10^{-8}$, and all three groups use inside roots.

The rule-based circuit of Fig.~\ref{fig:fermi-grouped}(c) adds the step-alternating completion phase of Sec.~\ref{sec:composition}, with phase $0$ on even steps and $\pi$ on odd steps. It involves no search over roots or relative phases. With this rule, $(K_+,K_-,K_s)=(3,3,5)$ is the lowest-cost admitted allocation among all allocations with $K_\pm\le4$ and $K_s\le8$. Its noiseless maximum $D$ is $0.0912$ at $N_F=24$, $32$, and $48$ for both detunings. The nearest cheaper allocation, $(3,3,4)$, fails with $D=0.1023$, and $(3,3,5)$ without the alternation fails with $D=0.1119$. The baseline circuit uses $(4,4,5)$ without completion phases and gives near/off ideal $D=0.088971/0.081101$ at $N_F=48$. The admission of the rule-based circuit is specific to the input state. The holdout inputs $\ket{0,2}$ and $(\ket{1,0}+\ii\ket{0,2})/\sqrt2$ give $D=0.1162$ and $0.1044$.

\paragraph{Pulse and noise model.}
Each vector query is resolved into consecutive one-mode pulses $\exp(\ii\nu QZ/2)$. We assign $t_{\rm CD}=0.028+0.492|\nu|/(\pi/2)$~$\mu$s to each one-mode pulse and $28$~ns to each nontrivial qubit rotation, including step-boundary rotations. The $520$~ns ECD and $28$~ns rotation times in Ref.~\cite{Fontaine2026Programming} motivate the scale. The amplitude-dependent clock is our modeling assumption, and its conversion to calibrated runtimes remains open. Harmonic gates are zero-duration virtual updates, with no additional harmonic drift during pulses. The rule-based circuit then contains $544$ modewise CDs and $368$ rotations, with a modeled duration of $153.329~\mu\mathrm{s}$.

A local Pauli-frame rewrite, $R\mapsto g_{\rm after}Rg_{\rm before}^\dagger$ with $\nu\mapsto-\nu$ in an $\ii X$ frame and identity boundary frames, leaves each complete ideal word unchanged. The frames are fixed offline from the mean near/off excited-state population, preserve the pulse count and degree, and are frozen across noise rates and cutoffs. Both circuits receive the same frame-selection algorithm, and signed CDs are assumed available. In the replay check, the rewritten words agree with the ideal words to a projective residual below $5\times10^{-15}$, and the native replay residual of the rule-based circuit is below $1.2\times10^{-14}$.

Pulsewise Lindblad evolution applies the effective pulse Hamiltonian and the dissipators simultaneously. It uses
$L_-=\sqrt{\eta_{\rm n}/T_1}\sigma_-$,
$L_\phi=\sqrt{\eta_{\rm n}/(2T_\phi)}Z$, and
$L_j=\sqrt{\eta_{\rm n}/T_{\rm cav}}a_j$,
with the illustrative lifetimes $T_1=T_\phi=60$~$\mu$s and $T_{\rm cav}=160$~$\mu$s. Equation~\eqref{eq:short-window-fidelity} is evaluated in the squared convention directly from the saved final density matrices and the unsplit pure reference, including the ancilla in $\ket0$.

\paragraph{Results.}
Table~\ref{tab:staged-ablation} compares the baseline and the rule-based circuits.

\begin{table}[t]
\caption{Near-resonant comparison of the baseline and rule-based circuits of Sec.~\ref{sec:short-window-dynamics}. Both circuits use the common Pauli-frame rule and pulse clock. Fidelities use $N_F=24$ and the common joint target at $80$~fs.}
\label{tab:staged-ablation}
\begin{ruledtabular}
\begin{tabular}{lrr}
Quantity & Baseline & Rule\\
\hline
$(K_+,K_-,K_s)$ & $(4,4,5)$ & $(3,3,5)$\\
Modewise CDs & 672 & 544\\
Rotations & 432 & 368\\
Duration ($\mu$s) & 188.77 & 153.33\\
$F$, $\eta_{\rm n}=0$ & 0.992083 & 0.995485\\
$F$, $\eta_{\rm n}=0.03$ & 0.928867 & 0.933148\\
$F$, $\eta_{\rm n}=0.1$ & 0.797641 & 0.803698\\
$F$, $\eta_{\rm n}=1$ & 0.158612 & 0.168777\\
\end{tabular}
\end{ruledtabular}
\end{table}

At $\eta_{\rm n}=0.03$, the near-resonant coherence errors are $0.052139$ for the baseline and $0.050042$ for the rule-based circuit, both above the threshold $0.05$. The other near-resonant criteria pass for both circuits, and the off-resonant case of the rule-based circuit passes all four criteria.

\paragraph{Convergence with the Fock cutoff.}
Cutoff convergence is reported as the mixed-state difference $D$ between two cutoffs on the $5$~fs output grid, with target $0.005$. For the rule-based circuit at $\eta_{\rm n}=0.03$, the maximum difference between $N_F=24$ and $32$ is $0.0030/0.0036$ (near/off), below the target, and the final fidelity changes by at most $6.2\times10^{-7}$. For the baseline, the same difference is $0.006542$, above the target, while its final target fidelity changes by at most $1.2\times10^{-6}$ and its coherence failure persists at $N_F=32$. The ideal native-prefix differences of the baseline are $0.01031$ from $24$ to $32$, $0.00505$ from $32$ to $48$, and $0.000960$ from $48$ to $64$. These finite-cutoff checks distinguish output-grid agreement from convergence throughout each pulse.

\par
\renewcommand{\bibfont}{\footnotesize}
\bibliographystyle{apsrev4-2}
\setlength{\bibsep}{-1pt}
\bibliography{ja_oqgqsp_refs}

\end{document}